\mathchardef\period=\mathcode`.
\tikzstyle{tree_node}=[fill=white, draw=black, shape=circle]
\tikzstyle{fill_node}=[fill=black, draw=black, shape=circle, minimum size=2pt, inner sep=0pt]
\tikzstyle{vertex_text_node}=[fill=white, draw=black, shape=circle, minimum size=2pt, inner sep=0pt]
\tikzstyle{leaf_node}=[fill=black, draw=black, shape=circle, minimum size=4pt, inner sep=0pt]
\tikzstyle{dash_lines}=[-, thick, dashed, dash pattern=on 0.6mm off 0.8mm, line width=0.6mm]
\tikzstyle{selected_edge}=[-, line width=0.45mm]
\tikzstyle{blue_edge}=[-, draw=blue]
\tikzstyle{dash_thin}=[-, dashed, dash pattern=on 0.2mm off 0.4mm, line width=0.25mm]
\tikzstyle{vertical_curly_braket}=[-, decorate, decoration={{brace,mirror,raise=5pt}}, line width=0.5mm]
\definecolor{processblue}{cmyk}{0.96,0,0,0}
\author{
Amit Suliman\thanks{Research supported by ERC Starting grant CODY 101039914, and ISF grant 3011005535. \texttt{tima.suliman@gmail.com}}\\
  The Hebrew University \\ 
  \and 
  Omri Weinstein\thanks{Supported by an ERC Starting grant CODY 101039914, and ISF grant 3011005535,  \texttt{omriwe@cs.huji.ac.il}}\\
  The Hebrew University and Columbia}
\newcommand{\cK}{\mathcal{K}}
\newcommand{\x}{\boldsymbol{x}}
\newcommand{\linf}{\ell_{\infty}}
\newtheorem{theorem}{Theorem}[section]
\newtheorem{lemma}[theorem]{Lemma}
\newtheorem{conjecture}[theorem]{Conjecture}
\newtheorem{claim}[theorem]{Claim}
\theoremstyle{definition}
\newtheorem{definition}[theorem]{Definition}
\newtheorem{fact}[theorem]{Fact}
\theoremstyle{definition}
\newtheorem{remark}[theorem]{Remark}
\newcommand{\amit}[1]{{\color{red} Amit: #1}}
\newcommand{\bI}{\mathbf{I}}
\newcommand{\buni}{\boldsymbol{1}}
\newcommand{\wh}{\widehat}
\newcommand{\wt}{\widetilde}
\newcommand{\eps}{\epsilon}
\newcommand{\R}{\mathbb{R}}
\renewcommand{\varepsilon}{\epsilon}
\renewcommand{\tilde}{\wt}
\renewcommand{\hat}{\wh}
\DeclareMathOperator{\Tr}{\textbf{Tr}}
\DeclareMathOperator{\rank}{rank}
\DeclareMathOperator{\diag}{diag}
\DeclareMathOperator{\diam}{diam}
\DeclareMathOperator{\logdet}{logdet}
\DeclareMathOperator{\LSE}{LSE}
\DeclareMathOperator{\SEV}{SEV}
\newcommand{\twopartdef}[4]
{
	\left\{
		\begin{array}{ll}
			#1 & \mbox{if } #2 \\
			#3 & \mbox{if } #4
		\end{array}
	\right.
}
\newcommand{\numberthis}{\refstepcounter{equation}\tag{\theequation}}
\numberwithin{equation}{section}
\begin{document}

\date{}

\title{
Infinite Lewis Weights in  Spectral Graph Theory}



\begin{titlepage}
     \maketitle

\begin{abstract}

We study the spectral implications of re-weighting a graph by the $\linf$-Lewis weights of its edges.  
 Our main motivation is the ER-Minimization problem (Saberi et al., SIAM'08): Given an undirected graph $G$, the goal is to 
 find positive normalized edge-weights $w\in \R_+^m$ which minimize the sum of pairwise \emph{effective-resistances} of $G_w$ (Kirchhoff's index). By contrast, $\linf$-Lewis weights minimize the \emph{maximum} effective-resistance of \emph{edges}, but are much cheaper to approximate, especially for Laplacians. 
With this algorithmic 
motivation, we study the ER-approximation ratio obtained 
by Lewis weights. 

Our first main result is 
that $\linf$-Lewis weights provide a constant ($\approx 3.12$) approximation for ER-minimization on \emph{trees}. 
The proof introduces a new technique, a local polarization process for   effective-resistances ($\ell_2$-congestion) on trees, which is of independent interest in electrical network analysis. 
For general graphs, we prove an upper bound $\alpha(G)$ on the approximation ratio obtained by Lewis weights, which is always $\leq \min\{ \text{diam}(G), 
\kappa(L_{w_\infty})\}$, where $\kappa$ is the condition number of the weighted Laplacian. 
All our approximation algorithms run in \emph{input-sparsity} time $\tilde{O}(m)$, a major improvement over Saberi et al.'s $O(m^{3.5})$ SDP for exact ER-minimization. 
We conjecture that Lewis weights provide an  $O(\log n)$-approximation for \emph{any} graph, and show experimentally that the ratio on many graphs is $O(1)$. 

Finally, we demonstrate the favorable effects of $\linf$-LW reweighting on the \emph{spectral-gap} of graphs  
and on their \emph{spectral-thinness} (Anari and Gharan, 2015).
En-route to our results, 
we prove  a weighted analogue of Mohar's classical bound on $\lambda_2(G)$, and provide a new characterization of leverage-scores of a matrix, as the gradient (w.r.t weights) of the volume of the enclosing ellipsoid.

     \end{abstract}
     \thispagestyle{empty}
 \end{titlepage}

\tableofcontents

\newpage 
\section{Introduction}

The $\ell_p$-Lewis weights of a matrix $A \in \R^{m\times n}$ generalize the notion of statistical \emph{leverage scores} of rows, 
which informally quantify the importance of row $a^\top_i$ in composing the spectrum of $A$. Formally, the $\ell_p$-Lewis weights ($\ell_p$-LW) of $A$ are the unique vector $\overline{w} \in \R^m$ satisfying the following \emph{fixed-point} equation
\begin{align}\label{eq_Lp_LW_def}
a_i^T\left(A^T \overline{W}^{1-2 / p} A \right)^{-1} a_i=\overline{w}_i^{2 / p}  \;\;\;\;, \;\;\;\;\; \forall \; i \in [m]
\end{align}
where $\overline{W} = \diag(\overline{w}) \in \R^{m\times m}$. 
This definition of \cite{Lewis78} can be viewed as a \emph{change-of-density} of rows, that requires each of the reweighted rows to have leverage score $1$, 
i.e., that the $i$th row of the reweighted matrix 
$\overline{W}^{1/2 - 1/p} A$ should end up with leverage score $\overline{w}_i$,
hence the ``fixed point" condition  \eqref{eq_Lp_LW_def}. 
$p$-norm Lewis weights have found important algorithmic applications in optimizatin and random matrix theory in recent years, from dimensionality-reduction for $p$-norm regression (Row-Sampling \cite{CP14, flps21}) to the design of optimal barrier functions for linear programming \cite{LS14}, and spectral sparsification for Laplacian matrices  (\cite{SS11, KMP16}). 
For $p=\infty$, which is the focus of this paper, the $\linf$-LW have a particularly nice geometric interpretation -- the ellipsoid  
$\mathcal{E} = \{ x \in \R^n \mid x^T (A^T \overline{W} A)^{-1} x \leq 1 \} $ 
defined by $\linf$-LW$(A)$ is the 
\emph{dual} solution of the 
minimum-volume ellipsoid enclosing the pointset $\{ a_i \}_{i=1}^m $, known as the the 
Outer 
John Ellipsoid: 
\begin{align}\label{equ_John_Ellipsoid}
\max\limits_{M \succeq 0} \left\{ \log\det M \; : a_i^T M a_i \leq 1, \; \forall i \in[m] \right\} .
\end{align}

In this paper we investigate $\linf$-LW as a tool for design and analysis of \emph{Laplacian} matrices. 
Since leverage-scores ($\ell_2$-LW) play a key role in spectral graph algorithms \cite{st04, SS11,KMP16}, it is natural to ask what role do higher-order Lewis weights play in graph theory and network analysis: 
\begin{quote}    
\emph{What are the 
spectral implications of reweighting a graph by the $\linf$-LW of its edges? }
\end{quote}
In order to explain this motivation and the role of $\linf$-LW in spectral graph theory, it is useful to interpret  the optimization Problem~\eqref{equ_John_Ellipsoid} as a special case of \emph{Experiment Optimal Design} \cite{boyd_convex_2004},
where the goal is to find 
an optimal convex combination of fixed linear measurements $\{a_i\}_{i=1}^m$, inducing a maximal-confidence ellipsoid for its least-square estimator, where ``maximal" is with respect to \emph{some} partial order on Positive-semidefinite matrices (e.g Loewner order, see \cref{appendix_optimal_design}).
A natural choice for such order is the \emph{volume}  (determinant) of the confidence ellipsoid, known as \emph{D-optimal design}:
Given the experiment matrix, $\{ a_i \in \R^n \}_{i=1}^m$, the primal problem of \eqref{equ_John_Ellipsoid} is  
\begin{align}\label{equ_D_optimal_def}
\begin{array}{ll}
\text { minimize } & \logdet \tag{D-optimal Design} \left(\sum_{j=1}^m g_j a_j a_j^T\right)^{-1}  \\
\text { subject to } & \buni^T g = 1 \ , \ g \geq 0.
\end{array}
\end{align}
Another natural order on PSD matrices is the matrix \emph{trace}, commonly known as \emph{A-optimal design} \cite{boyd_convex_2004}:
\begin{align}\label{equ_A_optimal_def}
\begin{array}{ll}
\text { minimize } & \Tr \left(\sum_{j=1}^m g_j a_j a_j^T\right)^{-1}  \tag{A-optimal Design} \\
\text { subject to } & \buni^T g = 1 \ , \ g \geq 0, 
\end{array}
\end{align}
which is a Semidefinite program (SDP). In both of these convex problems, the optimization is over positive normalized \emph{weight vectors} $g \in \R^m$, defining enclosing ellipsoids of the pointset $\{a_i\}_{i=1}^m$. 
Understanding the relationship between \eqref{equ_D_optimal_def} and   \eqref{equ_A_optimal_def}  is a central theme of this paper. 
The two different objective functions 
have a natural geometric interpretation:  
\eqref{equ_D_optimal_def} seeks to minimize the 
\emph{geometric mean} of the ellipsoid's semiaxis lengths, whereas \eqref{equ_A_optimal_def} seeks to minimize their \emph{harmonic mean}. 
Intuitively, this HM-GM viewpoint renders the problems quite different, and indeed we show that for \emph{general} PSD matrices,  there's an $\Omega(n/\log^2 n)$ separation between the two objectives:
\begin{theorem}[Informal]\label{thm_natural_gap}
    There are $n \times n$ PSD experiment matrices $A \succeq 0$ for which the D-optimal solution ($\linf$-LW($A$)) is no better than  an $\Omega(n/\log^2 n)$-approximation to the A-optimal design problem w.r.t $A$. 
\end{theorem}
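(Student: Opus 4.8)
The plan is to produce an explicit family of instances on which the two designs pull the weight vector in essentially opposite directions. The guiding observation is a kind of \emph{scale-blindness} of D-optimality: for an axis-aligned configuration $a_i=c_i e_i$, the D-optimal program \eqref{equ_D_optimal_def} reduces to $\max\{\sum_i \log g_i : \buni^T g =1\}$ — the lengths $c_i$ enter only through an additive constant — so the D-optimal weight vector is \emph{uniform} no matter how the $c_i$ are chosen; equivalently, the fixed point \eqref{eq_Lp_LW_def} with $p=\infty$ on the diagonal matrix $A=\diag(c)$ returns $\overline w\equiv\buni$, which after normalization by $\rank(A)$ is exactly the D-optimal solution. (This is the only place the D-optimal/$\linf$-LW/outer-John duality recalled around \eqref{equ_John_Ellipsoid} is used.) By contrast, a one-line Lagrange computation gives the A-optimal solution of \eqref{equ_A_optimal_def} for such a configuration as $g^A_i\propto c_i^{-1}$, which concentrates almost the entire unit budget on the \emph{shortest} vector. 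Making one vector extremely short therefore drives the two objectives far apart.

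Concretely, fix $n$ and $\varepsilon\in(0,1)$, and take $a_1=\varepsilon e_1$, $a_i=e_i$ for $2\le i\le n$, so that $A=\diag(\varepsilon,1,\dots,1)\succeq 0$ and $\sum_j g_j a_j a_j^T=\diag(g_1\varepsilon^2,g_2,\dots,g_n)$ for any $g$ on the simplex. Then $g^D=\buni/n$ as above, so the A-cost of the D-optimal solution is $\Tr\big((\sum_j g^D_j a_j a_j^T)^{-1}\big)=n\sum_i\|a_i\|^{-2}=n\big(\varepsilon^{-2}+(n-1)\big)$, while the A-optimum equals $\Tr\big((\sum_j g^A_j a_j a_j^T)^{-1}\big)=\big(\sum_i\|a_i\|^{-1}\big)^2=\big(\varepsilon^{-1}+(n-1)\big)^2$. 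Their ratio is $n\big(\varepsilon^{-2}+n-1\big)\big/\big(\varepsilon^{-1}+n-1\big)^2$, which tends to $n$ as $\varepsilon\to 0^+$ (e.g.\ $\varepsilon=n^{-3}$ already makes it $\Theta(n)$). Hence the D-optimal weights are only an $\Omega(n)$-approximation — a fortiori an $\Omega(n/\log^2 n)$-approximation — for A-optimal design on $A$, which is the claim.

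The arithmetic above is routine; the substantive issue, and presumably the source of the $\log^2 n$ slack in the statement, is robustness of the witness. The diagonal instance is maximally degenerate, and if one insists on a non-degenerate witness — all $a_i$ distinct and in general position, a polynomially bounded aspect ratio, or (towards the graph setting of later sections) an edge-incidence/Laplacian matrix — one must perturb or embed the construction, at which point $\linf$-LW$(A)$ no longer has a closed form. The main obstacle is then to show the $\Omega(n)$ gap survives: this requires a stability estimate for the Lewis-weight fixed point \eqref{eq_Lp_LW_def} (Lipschitz continuity of $\overline w$ in $A$, e.g.\ via the implicit function theorem or a contraction argument for its defining map) together with the easy monotonicity of the A-optimum under small perturbations, giving that the gap degrades by at most a polylogarithmic factor. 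The clean diagonal version on its own already establishes the stronger $\Omega(n)$ separation.
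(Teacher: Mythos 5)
Your proof is correct, uses the same structural trick as the paper (a diagonal experiment matrix forces $\linf$-LW to be uniform), and in fact establishes a strictly stronger separation. The paper's witness is $V=\diag(1,2,\dots,n)$; it exhibits the feasible weight $(g_0)_i\propto 1/i$ to bound the A-optimum from above by $\Theta(\log^2 n)$, yielding the claimed $\Omega(n/\log^2 n)$ gap. You instead take $A=\diag(\varepsilon,1,\dots,1)$ and, rather than guessing a feasible point, derive the closed-form A-optimum $\bigl(\sum_i\|a_i\|^{-1}\bigr)^2$ from the Lagrange condition, obtaining a ratio that tends to $n$ as $\varepsilon\to 0^+$. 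That is an $\Omega(n)$ separation, which subsumes the stated $\Omega(n/\log^2 n)$. (Your closed form also shows the paper's $g_0$ is in fact the exact A-optimal solution for $\diag(1,\dots,n)$, clarifying that the paper's bound is $\Theta(n/\log^2 n)$ \emph{for that witness} and cannot be sharpened without changing the instance.) One small caveat on your closing remarks: the speculation that the $\log^2 n$ slack reflects some non-degeneracy or aspect-ratio requirement is not borne out by the paper's proof, which is just as ``degenerate'' (purely diagonal) as yours; with $\varepsilon=n^{-3}$ your witness already has polynomial aspect ratio and still gives $\Omega(n)$, so the paper simply did not push the construction to its extreme. The robustness/perturbation argument you sketch would only become necessary if one insisted on a Laplacian or otherwise structured witness, which the theorem statement does not.
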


This lower bound implies that a nontrivial relation between the two minimization problems can only hold due to some special structure of PSD matrices. As mentioned earlier, the focus of this paper is on \emph{graph Laplacians}. Our main motivation for studying A-vs-D optimal design for Laplacian matrices comes from network design applications, as in this case both problems correspond to 
controlling the \emph{distribution of electrical flows}   
and \emph{Effective Resistances} of the underlying graph. We now explain this correspondence. 


\paragraph{Electric Network Design:
Minimizing Effective Resistances on Graphs}
Given an undirected graph $G$, the \emph{effective resistance} $R_{ij}(G)$ is the electrical potential difference that appears across terminals $i$ and $j$ when a unit current source is applied between them (see Definition \eqref{def_ER}).
Intuitively, the ER between $i$ and $j$ is small when there are many paths between the two nodes with high conductance edges, and it is large when there are few paths, with lower conductance, between them.
Effective resistances have many  applications in electrical network analysis, spectral sparsification \cite{SS11} and Laplacian linear-system solvers \cite{KMP16}, maximum-flow algorithms \cite{CKM+11}, the commute and cover times in Markov chains \cite{CRR+97, Mat88}, continuous-time averaging networks,  finding thin trees \cite{AO15}, and in generating random spanning trees \cite{KM09, MST15, DKP+17}. 
Apart of all these, the distribution of effective resistances of a graph are natural graph properties to be investigated on their own right, as advocated by \cite{anari17, saberi}.

In a seminal work, Ghosh, Boyd and Saberi \cite{saberi} introduced the  \emph{Effective Resistance Minimization Problem} (ERMP): Given an unweighted graph $G$, find nonnegative, normalized edge-weights that minimize the total sum of effective resistance over all pairs of vertices of the reweighted graph $G_g$, also known as the \emph{Kirchhoff index} $\cK_g(G)$ \cite{Lukovits1999}. This problem can be expressed as the following Semidefinite Program:

\begin{equation}\label{equ_ERMP_def}
\begin{array}{ll@{}ll}
\text{minimize}  & \cK_g(G) := \sum\limits_{i,j} R_{ij} &\\
\text{subject to}&  g \geq 0 , \tag{ERMP} \ \boldsymbol{1}^T g=1
\end{array} 
\end{equation}
It is straightforward to verify that the ERMP is an \emph{A-optimal design problem over Laplacians} (see equation~\eqref{equ_R_tot_A_design}). This formulation of ERMP has the following interpretation -- we have real numbers $x_1,...,x_n$ at the nodes of $G$, which have zero sum. Each edge in $G$ corresponds to a possible `measurement', which yields the difference of its adjacent node values, plus noise. Given a fixed budget of measurements for estimating  $\boldsymbol{x}$, the problem is to select the \emph{fraction} of  experiments that should be devoted to each edge measurement. The optimal fractions are precisely the ERMP weights. 
The motivation of \cite{saberi} for studying this problem was improving the ``electrical connectivity" and mixing time of the underlying network. Minimizing resistance distances and electrical flows is an important primitive in routing networks (and molecular chemistry \cite{Lukovits1999}, where the problem in fact originated), and can be viewed as a \emph{convex} proxy for maximizing the spectral-gap and communication throughput of the reweighted network. As such, ERMP can be viewed as a network configuration algorithm \cite{HAAKSM13,MAK+20}. 

In \cite{saberi}, the authors design an ad-hoc interior-point method (IPM) which solves  ~\eqref{equ_ERMP_def} in $O(m^{3.5})$ time (more precisely, in $\sim \sqrt{m}$ generic linear-equations involving \emph{pseudo-inverses} of Laplacians, which are no longer SDD). We note that even with  recent machinery of \emph{inverse-maintenance} for speeding-up general IPMs \cite{jkl+20, HJ0T022}, one could at best 
 achieve $\tilde{O}(mn^{2.5} + n^{0.5}m^\omega)$, 
 but such acceleration (for sparse graphs) is only  of theoretical value. In either case, the above runtimes are quite daunting for large-scale networks, let alone for dense ones ($m\gg n$).  
This motivates resorting to  \emph{approximate} algorithms. Unfortunately, using black-box approximate SDP solvers (projection-free algorithms a-la \cite{AK16, Haz08}) seems inherently too slow for the ERMP SDP, due to its large \emph{width}  and the additive-guarantees of MwU.\footnote{The \emph{width} of the ERMP SDP is $\geq \Omega(n^2)$, hence  
approximate SDP solvers based on multiplicative-weight updates \cite{Haz08, AK16} 
require at least $mn^2$ time to 
achieve nontrivial approximation. 
Moreover, Hazan's algorithm only guarantees an \emph{additive} approximation $\eps\|C\|_F$ with respect to the Frobenius norm of the objective matrix, which in our case is $\Omega(n^{1.5})$, see~\cref{appendix_ermp_sdp}.}

By contrast, $\linf$-Lewis weights are much cheaper to 
approximate, especially for Laplacians -- 
While \emph{high-accuracy} 
computation of 
the John ellipsoid  \eqref{equ_John_Ellipsoid} is no faster than 
the aforementioned  IPM ($\sim m^{3.5}\log(m/\eps)$ \cite{Nem99}), 
low-precision approximation of $\linf$-LW turns out to be dramatically cheaper -- \cite{ccly,CP14} gave a simple and practical approximation algorithm to arbitrarily small precision $\eps$, via $\tilde{O}(1/\eps)$ repeated \emph{leverage-score} computations, which in the case of Laplacians can be done in input-sparsity time \cite{st04,KMP16}. 
Fortunately, low-accuracy approximation of $\linf$-LW suffices for ERMP  (see \cref{sec_computing_LW}). 
We remark that very recently, \cite{flps21} gave a \emph{high-accuracy} algorithm for computing $\ell_p$-LW using only $\tilde{O}(p^3 \log(1/\eps))$ 
leverage-score computations (Laplacian linear systems in our case), hence setting $p=n^{o(1)}$ yields an alternative $O(m^{1+o(1)}\log(1/\eps))$ time approximate algorithm, which is good enough for most applications \cite{flps21}.

Compared to the ERMP objective, the $\linf$-LW of the graph Laplacian $L(G)$ 
can be shown to minimize the \emph{maximal} effective-resistance over  \emph{edges} of $G$.  In fact, we show something stronger: The ER of edges in $G$ are the gradient, w.r.t weights $g\in \R^m$, of the \emph{volume} 
of the ellipsoid induced by the weighted Laplacian $L_g$. To the best of our knowledge this provides a new geometric characterization of the ER of a graph (and more generally, of statistical leverage-scores of a general matrix, see~\cref{sec_new_char_LS_ER}) : 

\begin{lemma} \label{lem_ER_char}
     $ER(G_g) = - \nabla_g \log\det L_g^+ .$ 
\end{lemma}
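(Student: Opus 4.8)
The statement $\mathrm{ER}(G_g) = -\nabla_g \log\det L_g^+$ is a coordinate-wise identity: for each edge $e=(i,j)$ with weight $g_e$, we must show $R_{ij}(G_g) = -\partial_{g_e}\log\det L_g^+$. The natural route is to write the weighted Laplacian in edge-factored form $L_g = B^\top G B = \sum_e g_e\, b_e b_e^\top$, where $B\in\R^{m\times n}$ is the (signed) incidence matrix and $b_e = \chi_i - \chi_j$ is the row corresponding to edge $e$, and then differentiate the log-pseudodeterminant with respect to a single weight. The first step is therefore to reduce to the restriction of $L_g$ to the orthogonal complement of its kernel (the all-ones vector, assuming $G$ connected), so that $\log\det L_g^+$ really means $\log$ of the product of the nonzero eigenvalues, and the matrix acting on that subspace is genuinely invertible — this lets us use the standard smooth matrix-calculus identities without pseudo-inverse subtleties.

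\textbf{Main computation.} On that $(n-1)$-dimensional subspace, apply the Jacobi-type formula $\partial_{g_e}\log\det M(g) = \tr\!\left(M(g)^{-1}\,\partial_{g_e}M(g)\right)$ with $M(g)$ the restricted Laplacian and $\partial_{g_e}M(g) = b_e b_e^\top$ (restricted). This yields $\partial_{g_e}\log\det L_g^+ = \tr\!\left(L_g^+ b_e b_e^\top\right) = b_e^\top L_g^+ b_e$. Now invoke the classical identity $R_{ij}(G_g) = (\chi_i-\chi_j)^\top L_g^+ (\chi_i-\chi_j) = b_e^\top L_g^+ b_e$ (Definition~\eqref{def_ER} of effective resistance), so $\partial_{g_e}\log\det L_g^+ = R_{ij}(G_g)$, and hence $-\nabla_g\log\det L_g^+$ has the vector of edge effective resistances as its entries, which is exactly $\mathrm{ER}(G_g)$. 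One can equivalently phrase this as: the edge effective resistance is the leverage score of the corresponding row of $G^{1/2}B$ divided by $g_e$, matching the broader leverage-score-as-volume-gradient characterization advertised in the paper.

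\textbf{Where care is needed.} The one genuine subtlety is the pseudo-inverse: $\log\det$ of a singular matrix is not differentiable in the naive sense, and one must justify that differentiating the product of \emph{nonzero} eigenvalues is legitimate and commutes with restriction to $\ker(L_g)^\perp$. Since the kernel of $L_g$ is the fixed subspace $\mathrm{span}(\mathbf 1)$ for all strictly positive $g$ (connectivity is preserved), the eigenvalue-interlacing / perturbation argument is clean: in a neighborhood of any $g>0$, the kernel does not move, so $\log\det L_g^+ = \log\det\big(L_g|_{\mathbf 1^\perp}\big)$ is a smooth function of $g$ and the formula above applies verbatim. The remaining steps — the incidence-matrix factorization, Jacobi's formula, and the effective-resistance identity — are all standard, so the proof is short once the pseudo-inverse point is dispatched.
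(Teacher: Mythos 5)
Your route is essentially the paper's: reduce the pseudo-determinant to a genuine one, differentiate via Jacobi's formula, and identify the resulting trace with the effective-resistance formula. The paper sidesteps the singularity by adding the rank-one correction $(1/n)\buni\buni^T$, first deriving $\partial_{g_l}L_g^+ = -L_g^+ b_l b_l^T L_g^+$ and then applying Jacobi to the corrected determinant; you restrict to $\buni^\perp$ instead, and the two devices are equivalent. There is, however, a sign slip in your main step. Jacobi applied to the \emph{restricted Laplacian} $M = L_g\big|_{\buni^\perp}$ gives $\partial_{g_e}\log\det M = \Tr(M^{-1}b_eb_e^T) = b_e^T L_g^+ b_e = R_e$, which is correct --- but on that subspace $L_g^+$ is $M^{-1}$, so $\log\det L_g^+ = -\log\det M$ and therefore $\partial_{g_e}\log\det L_g^+ = -R_e$, not $+R_e$ as you wrote. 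Your stated intermediate equality $\partial_{g_e}\log\det L_g^+ = R_{ij}$ is thus inconsistent with your own (correct) final conclusion $-\nabla_g\log\det L_g^+ = ER(G_g)$. In the paper the minus sign appears automatically: differentiating the inverse yields $\partial_{g_l}L_g^+ = -L_g^+ b_l b_l^T L_g^+$, and the Jacobi trace then cancels one factor of $L_g^+$, leaving $-b_l^T L_g^+ b_l = -R_l$.
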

 The above discussion and the efficiency of $\linf$-LW approximation, raise the following natural question: 
\emph{How well do the $\linf$-Lewis weights of a graph approximate the Kirchhoff index (optimal ERMP weights) ?}  

The main message of this paper  is that $\linf$-LW rewighting of edges has various favorable effects on the spectrum of graphs, and provides a an efficient ``preprocessing" operation on large-scale (undirected) networks, well beyond the Kirchhoff index.   
We summarize our main findings  
in the next subsection.



\subsection{Main Results}

Let $\alpha_{A,D}(G)$ denote the approximation ratio obtained by $\linf$-LW for the ERMP objective 
(Definition \eqref{def_alpha_AD}). 
Our first main result is that $\alpha_{A,D}$ is constant for 
\emph{trees}: 
\begin{theorem}\label{thm_LW_apx_ERMP_trees}
    $\linf$-LW are a $3.12$-approximate solution for the ERMP problem on trees.
\end{theorem}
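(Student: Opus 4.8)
The plan is to reduce the statement to an extremal combinatorics problem about trees and then settle it with a local rearrangement (``polarization'') argument.

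\textbf{Reduction to a congestion ratio.} On a tree every edge $e$ is a bridge, so in any reweighting $G_w$ the effective resistance across the endpoints of $e$ is exactly $1/w_e$; consequently the Kirchhoff index has the closed form $\cK_w(T)=\sum_e c_e/w_e$, where $c_e$ is the number of vertex pairs separated by $e$ (equivalently $c_e=s_e(n-s_e)$, with $s_e$ and $n-s_e$ the sizes of the two components of $T-e$ — the $\ell_2$-congestion of the all-pairs unit flow through $e$). This is convex in $w$, so the ERMP optimum over $\mathbf 1^\top g=1$ is attained at $g^\star_e\propto\sqrt{c_e}$, with value $\mathrm{OPT}(T)=\bigl(\sum_e\sqrt{c_e}\bigr)^2$. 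On the other side, since every tree edge has leverage score $1$, the fixed point \eqref{eq_Lp_LW_def} at $p=\infty$ pins the $\linf$-LW reweighting of a tree to a single explicit weighting, which I would record as a short lemma (being careful about the Laplacian kernel and the normalization $\mathbf 1^\top w=1$). Together these rewrite $\alpha_{A,D}(T)$ as an explicit ratio of symmetric functions of the congestion multiset $\{c_e\}_{e\in E(T)}$, and the theorem becomes: maximize this ratio over all congestion profiles realizable by $n$-vertex trees, and show the supremum is $\le 3.12$.

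\textbf{Local polarization.} The heart of the proof is a local move on $T$ — re-attaching a pendant subtree, or merging/splitting two sibling subtrees across one internal vertex — designed so that (a) it strictly decreases a fixed progress measure on trees (say, the number of vertices of degree $\ge 3$, or a lexicographic vector of subtree sizes), (b) it stays within trees, and (c) — the crucial \emph{monotonicity lemma} — it never decreases $\alpha_{A,D}$. For (c) one expresses the change of the ratio purely through the $O(1)$ congestion values on the path touched by the move, since all other $c_e$ are unchanged: the numerator moves linearly in these $c_e$ while the denominator moves through the concave map $x\mapsto\sqrt{x}$, so ``polarizing'' the profile — pushing the affected $c_e$ apart — can only help, provided one simultaneously accounts for the change in the number of edges and for the nesting constraints the $s_e$ obey along any root-to-leaf path. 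Iterating the move drives an arbitrary tree to a one- or two-parameter canonical family — a ``double broom''/dumbbell (two stars joined by a path), or a spider — on which the congestion profile is fully explicit; there $\sum_e c_e$, $\sum_e\sqrt{c_e}$ and the edge count are elementary in the family's parameters, so $\sup_T\alpha_{A,D}(T)$ collapses to a one-variable maximization whose value is the stated $\approx 3.12$ (a transcendental coming from the optimal balance between the ``star'' and ``path'' portions of the extremal tree).

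\textbf{Main obstacle.} The real work is the monotonicity lemma. A tree's congestion profile is far from a free parameter — the $s_e$ along a root-to-leaf path are nested, and a single local move perturbs several $c_e$ at once — so proving that one well-chosen move is \emph{simultaneously} ratio-monotone, strictly progress-making, and tree-preserving needs a careful case analysis, especially near balanced splits and near tiny subtrees, where the inequality is tight; precisely these tight cases identify the extremal family and fix the constant. A secondary difficulty is making the ``closed form of the $\linf$-LW reweighting on trees'' lemma clean and unambiguous, and then checking that no boundary case of the polarization move has been overlooked in the termination argument.
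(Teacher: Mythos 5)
Your plan matches the paper's proof almost step for step: the reduction to the congestion ratio $\alpha(T)=\frac{m\sum_e c_e}{(\sum_e\sqrt{c_e})^2}$ (with $g^*_e\propto\sqrt{c_e}$ and $\linf$-LW$=$ uniform on trees), a local rearrangement move that only touches a bounded number of congestions, a monotonicity lemma showing the ratio cannot decrease, termination at the two-stars-joined-by-a-path family (the paper's ``Bowtie'' $\mathcal{B}_{t,p,s}$), and a final evaluation on that family. Two substantive things you have left open are precisely the content of the paper's argument. First, the monotonicity lemma is not just ``concavity of $\sqrt{\cdot}$ means pushing apart helps'': the paper's criterion for the direction of the move on edge $k$ is the explicit threshold $g_k^*\lessgtr\|g^*\|_2^2$ (equivalently $\sqrt{c_k}\lessgtr\sum_l c_l/\sum_l\sqrt{c_l}$), and what makes the iteration well-defined is that the induced partition $E_<,E_>$ is \emph{invariant} under the corresponding upper/lower moves — without that invariance you could oscillate and never reach the canonical family. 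Second, the constant: the paper does not optimize $\alpha(\mathcal{B}_{n,n,n})$ directly (the sums $\sum\sqrt{c_e}$ give no clean closed form); instead it invokes the duality-gap bound $\alpha_{\mathrm{dual}}$ developed for general graphs and evaluates \emph{that} on the Bowtie to get $27/4 \div 13/6 \approx 3.12$. Your one-variable optimization would work in principle (and would in fact give a slightly smaller number, the experiments suggest $\approx 2.5$), but you'd need integral estimates for $\sum\sqrt{c_e}$; the dual-gap route is cleaner and is what actually delivers the quoted $3.12$.
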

It is noteworthy that there are trees for which $\alpha_{A,D} > 2.5$ (see \cref{sec_experiments}), so the bound is nearly tight.    The proof of Theorem~\ref{thm_LW_apx_ERMP_trees} relies on a new technique, designated for trees, which may be of independent interest in electric network analysis -- we show that we can always locally modify the tree in a way that increases the approximation ratio $\alpha_{A,D}(T)$. More precisely, our proof introduces a (finite) ``polarization process" of ERs based on $\ell_2$-congestion of trees, that can be repeatedly  applied to yield the worst case ``polarized" family of trees (the \emph{Bowtie graph}, see Figure \ref{fig_tps_tree}). We provide a high-level overview of the proof in Section \ref{sec_tree_TOV}. 

Our second main result is an upper bound on the approximation ratio of $\linf$-LW for general graphs.  
The precise upper bound we develop is a function of the spectral parameter  
$ \alpha_1(g) := \frac{2}{(n-1)^2} \Tr L_g^{+}$ of the weighted laplacian of $G$, which is a central quantity in our analysis. Denoting by $g_{lw}$ the $\linf$-LW edge weights, define 
\begin{align}    \label{eq_alpha}
\alpha_{min}(G) := 
\min \left\{ \alpha_1(g_{\ell w})\; , \;  \left\lVert -\nabla_g \left(\log \alpha_1(g_{\ell w})\right) \right\rVert_\infty \right\}. 
\end{align}
For intuition, note that $\min\{x,(\log x)'\} = 1$,  so the above quantities tend to be anti-monotone in each other. 
\begin{theorem}[Upper Bound for General Graphs  ]\label{thm_UB_general_graphs}
For any undirected graph $G$,
\[
\alpha_{A,D}(G) \leq \alpha_{min}(G).  \]
\end{theorem}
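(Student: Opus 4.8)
The plan is to reduce the statement to two independent lower bounds on $\Tr L_{g^*}^+$, where $g^*$ denotes an optimal \eqref{equ_ERMP_def} weighting (i.e.\ the $A$-optimal design for $L_G$), one bound for each term in the definition of $\alpha_{min}(G)$. The starting point is the identity $\sum_{i<j}R_{ij}(G_g)=n\,\Tr L_g^+$, which follows by expanding $(e_i-e_j)^\top L_g^+(e_i-e_j)$ over all pairs and using $L_g^+\boldsymbol{1}=0$; it turns the Kirchhoff-index ratio into $\alpha_{A,D}(G)=\Tr L_{g_{\ell w}}^+/\Tr L_{g^*}^+$. Throughout I work on the subspace $\boldsymbol{1}^\perp\cong\mathbb{R}^{n-1}$, on which every connected-graph Laplacian is positive definite, so $L_g^+$ acts as a genuine inverse; I write $b_e:=e_u-e_v$ for the signed incidence vector of an edge $e=(u,v)$, so that $L_g=\sum_e g_e b_e b_e^\top$.

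First, $\alpha_{A,D}(G)\le\alpha_1(g_{\ell w})$. For any normalized weighting $g$ we have $\Tr L_g=2\sum_e g_e=2$ while $L_g$ has exactly $n-1$ positive eigenvalues, so the AM--HM inequality gives $\Tr L_g^+=\sum_{k\ge2}\lambda_k^{-1}\ge(n-1)^2/\sum_{k\ge2}\lambda_k=(n-1)^2/2$. Applying this to $g^*$ and dividing yields $\alpha_{A,D}(G)=\Tr L_{g_{\ell w}}^+/\Tr L_{g^*}^+\le\tfrac{2}{(n-1)^2}\Tr L_{g_{\ell w}}^+=\alpha_1(g_{\ell w})$.

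Second, $\alpha_{A,D}(G)\le\bigl\lVert{-}\nabla_g\log\alpha_1(g_{\ell w})\bigr\rVert_\infty$. Differentiating $\log\alpha_1(g)=\mathrm{const}+\log\Tr L_g^+$ through $L_g=\sum_e g_e b_eb_e^\top$ gives $-\partial_{g_e}\log\alpha_1(g)=b_e^\top(L_g^+)^2b_e\big/\Tr L_g^+$, so the right-hand side equals $M/T$ with $T:=\Tr L_{g_{\ell w}}^+$ and $M:=\max_e b_e^\top(L_{g_{\ell w}}^+)^2b_e$. I then invoke weak duality for the trace-of-inverse objective: completing the square, $\Tr\bigl((N-X^{-1})X(N-X^{-1})\bigr)\ge0$ gives $\Tr X^{-1}\ge 2\Tr N-\Tr(NXN)$ for every $X\succ0$ and every symmetric $N$. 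Setting $X=L_h$ (on $\boldsymbol{1}^\perp$) and noting $\Tr(NL_hN)=\sum_e h_e\,b_e^\top N^2b_e\le\max_e b_e^\top N^2b_e$ for $h$ in the simplex, this yields $\Tr L_h^+\ge 2\Tr N-\max_e b_e^\top N^2b_e$ for every normalized weighting $h$ and every symmetric $N$; in particular for $h=g^*$. Plugging in the trial matrix $N=c\,L_{g_{\ell w}}^+$ and optimizing the scalar $c$ (optimum $c=T/M$) gives $\Tr L_{g^*}^+\ge 2cT-c^2M=T^2/M$, hence $\alpha_{A,D}(G)=T/\Tr L_{g^*}^+\le M/T$. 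Combining the two bounds, $\alpha_{A,D}(G)\le\min\{\alpha_1(g_{\ell w}),\lVert{-}\nabla_g\log\alpha_1(g_{\ell w})\rVert_\infty\}=\alpha_{min}(G)$.

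The first inequality is essentially free; the crux is the second, and its only genuinely delicate point is the bookkeeping around the common null space $\boldsymbol{1}$ of all the Laplacians — restricting to $\boldsymbol{1}^\perp$ so that $\Tr X^{-1}\ge 2\Tr N-\Tr(NXN)$ applies verbatim and $L_g^+$ behaves as an honest inverse. A conceptual remark: neither argument uses any structural property of $\linf$-Lewis weights beyond $g_{\ell w}$ being a feasible normalized weighting — the estimate holds verbatim with $g_{\ell w}$ replaced by any such $g$, and $\alpha_{min}(G)$ is simply this generic bound instantiated at the (efficiently computable) point $g=g_{\ell w}$, which is precisely what the subsequent sections exploit in order to further bound it by $\min\{\diam(G),\kappa(L_{g_{\ell w}})\}$.
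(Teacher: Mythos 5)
Your proof is correct, and the first half (the bound $\alpha_{A,D}\le\alpha_1(g_{\ell w})$) coincides with the paper's: both are a single application of the AM--HM inequality to the positive spectrum of $L_{g^*}$ after noting $\Tr L_{g}=2$ for normalized $g$. (The paper phrases it as $HM(\lambda(L_{g^*}))\le AM(\lambda(L_{g^*}))=2/(n-1)$ and cites \cite{saberi} for $\cK^*\ge n(n-1)^2/2$; your derivation is the same calculation written out.)

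The second bound is where the execution diverges, and yours is genuinely more elementary. The paper reaches $\alpha_{A,D}\le\max_l\lVert L_{g_{\ell w}}^+b_l\rVert^2/\Tr L_{g_{\ell w}}^+$ by first deriving the full Lagrange dual SDP of ERMP (the lengthy computation in the appendix culminating in \eqref{equ_ERMP_dual_prob}), producing the dual-feasible certificate $V=L_{g_{\ell w}}^+/\max_l\lVert L_{g_{\ell w}}^+b_l\rVert$, and bounding the duality gap; it then separately proves in Lemma~\ref{lem_a2_der_log_a1} that this quantity equals $\lVert{-}\nabla_g\log\alpha_1(g_{\ell w})\rVert_\infty$. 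You obtain the identical lower bound $\Tr L_{g^*}^+\ge(\Tr N)^2/\max_e b_e^\top N^2 b_e$ directly from the completion-of-squares inequality $\Tr X^{-1}\ge 2\Tr N-\Tr(NXN)$ (equivalently Cauchy--Schwarz in the trace inner product on $\boldsymbol{1}^\perp$), followed by a one-variable optimization over the scale of $N=cL_{g_{\ell w}}^+$, and you fold the identification with $\lVert{-}\nabla_g\log\alpha_1\rVert_\infty$ into the same short calculation. This bypasses the Lagrange-dual machinery entirely while landing on the same certificate $N\propto L_{g_{\ell w}}^+$; it also makes transparent your closing observation — which the paper's framing confirms — that no property of Lewis weights beyond feasibility enters either inequality, so the bound $\alpha_{min}(g)$ is valid for any normalized $g$ and is merely instantiated at $g=g_{\ell w}$. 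The paper's route has the advantage of exhibiting the full dual SDP, which it reuses elsewhere (it is the same dual used in the A-vs-D separation and in the spectrally-thin-tree discussion); your route is shorter and self-contained if one only wants Theorem~\ref{thm_UB_general_graphs}.
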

We prove that $\alpha_{min}(G)$ is always at most   
$\leq    \min \left\{ \text{diam}(G), 
\kappa(L_{g_{\ell w}})\right\}$, where  
$\text{diam}(G)$ is the diameter of $G$, and  
$\kappa(L_{g_{\ell w}})$ is the \emph{condition-number} of its $\linf$-LW-weighted Laplacian. 
While this already gives a good approximation for low-diameter graphs, we stress that $\alpha_{min}$ typically provides a much tighter upper bound on $\alpha_{A,D}$: The 
\emph{lollipop} graph (clique of size $n$ connected to a path of length $n$) has both diameter and LW-condition-number $\Omega(n)$, but simulations show that $\alpha_{min}(\text{lollipop}_n) \leq O(\log n)$. In \cref{sec_experiments}, we showcase the approximation ratio $\alpha_{min}$ for various different graph families,  
and show that in practice it grows as $\tilde{O}(1)$.  
In light of Theorems \ref{thm_LW_apx_ERMP_trees}, \ref{thm_UB_general_graphs}, and our empirical evidence, we conjecture that $\linf$-LW provide an $O(\log n)$ approximation for the Kirchoff index of any graph: 

\begin{conjecture}[Lewis meets Kirchoff]\label{conj_LW_apx_ERMP} 
$\forall G$, \;  
    $\alpha_{A,D}(G) \leq O(\log n)$. 
\end{conjecture}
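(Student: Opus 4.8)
The plan is to deduce the conjecture from Theorem~\ref{thm_UB_general_graphs}, i.e.\ to prove $\alpha_{min}(G) = O(\log n)$ for every $G$, by rewriting the two quantities in~\eqref{eq_alpha} in the language of $\ell_2$-congestion of electrical flows — the same language that drives the tree case — and then extracting the $\log n$ from an oblivious-routing argument. Write $g := g_{\ell w}$ and $\beta_e := a_e^\top L_g^{+2} a_e$. Since the $e$-th coordinate of $\nabla_g \Tr L_g^+$ is $-\beta_e$ and $\sum_e g_e \beta_e = \Tr(L_g^{+2} L_g) = \Tr L_g^+$, the second term in~\eqref{eq_alpha} is exactly $\|-\nabla_g \log \alpha_1(g)\|_\infty = \max_e \beta_e \,/\, \sum_e g_e \beta_e$ (the constant $2/(n-1)^2$ cancelling in the log-derivative). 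Moreover, using $\sum_{u<v} a_{uv} a_{uv}^\top = nI$ on $\mathbf{1}^\perp$ one checks $n\, g_e^2\, \beta_e = \sum_{u<v} (f^{uv}_e)^2$, where $f^{uv}$ is the unit electrical flow from $u$ to $v$ in $G_g$; thus $\beta_e$ is, up to the $g_e$-normalization, the $\ell_2$-congestion of $e$ under the all-pairs electrical routing at the Lewis weights. So $\alpha_{min}(G) = O(\log n)$ amounts to: \emph{either} $\Tr L_g^+ = O((n-1)^2 \log n)$, which covers well-connected graphs, \emph{or} no single edge carries more than an $O(\log n)$ factor above the Lewis-weighted average of the all-pairs electrical congestion.

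To see why $O(\log n)$ is the right target, we would dualize the A-objective as a congestion problem: by Thomson's principle and Cauchy--Schwarz, $n\cdot\OPT_A = \min_{\{f^{uv}\}} \big(\sum_e \sqrt{C_e(\{f\})}\big)^2$, the minimum over all choices of unit $u$--$v$ flows $\{f^{uv}\}$ of the squared $\ell_1$-norm of $(\sqrt{C_e})_e$, $C_e := \sum_{u<v}(f^{uv}_e)^2$ (the inner minimization over normalized $g$ being the Cauchy--Schwarz identity $\min_{\mathbf{1}^\top g = 1}\sum_e C_e/g_e = (\sum_e \sqrt{C_e})^2$, attained at $g_e \propto \sqrt{C_e}$). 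On the other hand $n\cdot \Tr L_g^+ = \sum_e C_e^{\ell w}/g_e$ with $C_e^{\ell w}$ the congestion of the electrical routing at $g = g_{\ell w}$. Thus $\alpha_{A,D}(G)$ factors into (a) the gap between $g_{\ell w}$ and the $\sqrt{C^{\ell w}}$-proportional weighting, and (b) the gap between the electrical routing at $g_{\ell w}$ and the congestion-minimizing routing. We expect (a) to follow from the John-ellipsoid fixed-point equation~\eqref{eq_Lp_LW_def}, which forces $R_e(G_g) = a_e^\top L_g^+ a_e \le n-1$ for every $e$ (with equality on the support) and thereby already equalizes the first-order electrical footprint of the edges; (b) is an oblivious-routing statement, and is where we would pay the $O(\log n)$ — routing the all-pairs demand through a R\"acke-type hierarchical (or low-conductance-cut) decomposition of $G$ and invoking the constant-factor bound of Theorem~\ref{thm_LW_apx_ERMP_trees} on the resulting decomposition tree, with $\log n$ the number of recursion levels. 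The delicate point is to transfer the $\linf$-LW of $G$ itself through this decomposition, not merely to compare the ERMP optima.

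A more self-contained alternative is to imitate the tree proof directly: define a \emph{polarization} move — a local modification of $G$ and its weights (edge subdivision, contraction of a low-resistance cut, or a star--path exchange) that provably does not decrease $\alpha_{A,D}$, certified by the $\ell_2$-congestion monovariant of the tree case — and show that its only fixed points are graphs assembled from $O(\log n)$ nested bowtie / expander-plus-path gadgets (cf.\ the lollipop), for which $\alpha_{A,D}$ can be evaluated directly. The ingredient missing for general graphs is a polarization move that handles cycles and whose effect on both $\Tr L_g^+$ and on the Lewis weights $g_{\ell w}$ can be signed.

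The main obstacle in either route is the mismatch between the \emph{global} A-objective — a sum over all $\binom{n}{2}$ pairs, equivalently $\Tr L_g^+$ — and the \emph{local} fixed-point structure of $\linf$-LW. The $\kappa(L_{g_{\ell w}})$ bound of Theorem~\ref{thm_UB_general_graphs} is precisely what drops out of the crude two-sided estimate $R_e/\mu_n(L_g) \le \beta_e \le R_e/\mu_2(L_g)$ together with $\sum_e g_e R_e = n-1$ and $R_e(G_g) \le n-1$; beating it down to $O(\log n)$ requires showing that the small eigenvalues of $L_{g_{\ell w}}$ that inflate $\max_e \beta_e$ are matched, scale by scale, by a comparable contribution to $\Tr L_{g_{\ell w}}^+$, so that the ratio telescopes over $O(\log n)$ scales — equivalently, that electrical flow at the Lewis weights is $O(\log n)$-competitive for all-pairs $\ell_2$-congestion. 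A natural first milestone that already forces this scale-matching is to establish $\alpha_{A,D}(\mathrm{lollipop}_n) = O(\log n)$ rigorously.
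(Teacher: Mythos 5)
The statement you were asked to prove is Conjecture~\ref{conj_LW_apx_ERMP}, and it is left \emph{open} in the paper: the authors present it explicitly as a conjecture, support it with Theorem~\ref{thm_LW_apx_ERMP_trees} (trees), Theorem~\ref{thm_UB_general_graphs} (the $\alpha_{\min}$ upper bound, which is only shown to be $\le \min\{\diam(G),\kappa(L_{g_{\ell w}})\}$ in general), and experiments, but give no proof for general $G$. So there is no ``paper's own proof'' to compare against, and a proposal that fails to close the argument is not at variance with the paper --- it is in the same position the paper is in.

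Your preparatory computations are correct and are a genuine reformulation rather than a proof. Writing $\beta_e = b_e^\top L_g^{+2} b_e$, the identity $\sum_e g_e \beta_e = \Tr(L_g^{+2}L_g) = \Tr L_g^+$ is right, and with Lemma~\ref{lem_a2_der_log_a1} this does give $\alpha_{dual} = \max_e \beta_e / \sum_e g_e \beta_e$. The identity $n g_e^2 \beta_e = \sum_{u<v}(f^{uv}_e)^2$ follows from $\sum_{u<v} b_{uv}b_{uv}^\top = nI - \mathbf{1}\mathbf{1}^\top$ and $L_g^+\mathbf{1}=0$, and your congestion dual $\cK^* = \min_{\{f\}}\bigl(\sum_e \sqrt{C_e}\bigr)^2$ is the Thomson-principle / Cauchy--Schwarz computation that also underlies the tree formula~\eqref{equ_opt_tree}. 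These are good ways to see \emph{what} has to be proved, but they do not yet prove anything beyond Theorem~\ref{thm_UB_general_graphs}.

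The genuine gap is the one you yourself flag in each route. In the oblivious-routing route, ``routing the all-pairs demand through a R\"acke-type decomposition and invoking Theorem~\ref{thm_LW_apx_ERMP_trees} on the decomposition tree'' is not a lemma you can currently cite: the tree theorem controls $\alpha_{A,D}(T)$ for the intrinsic ERMP on $T$, not the cost of simulating $G$'s all-pairs electrical demands on a hierarchical tree, and --- as you say --- there is no mechanism to carry the $\linf$-LW of $G$ through the decomposition. In the polarization route, the monovariant that powers Claim~\ref{clm_apx_ratio_ET} and Claim~\ref{clm_invariant_LT} depends on the closed-form congestion weights $g^*_e \propto \sqrt{c_e}$, which exists only because trees have a unique path between each pair; the ``polarization move that handles cycles and whose effect on both $\Tr L_g^+$ and on $g_{\ell w}$ can be signed'' is exactly the missing ingredient, and nothing in the paper supplies it. Your closing observation --- that $\kappa(L_{g_{\ell w}})$ falls out of the crude eigenvalue bounds and that beating it requires a scale-by-scale matching of $\max_e\beta_e$ against $\Tr L_g^+$ --- is an accurate diagnosis of why $\alpha_{\min}$ as defined in~\eqref{eq_alpha} is not yet known to be $O(\log n)$. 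In short: the reformulations are sound, but the conjecture remains a conjecture, in your proposal as in the paper.
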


A stronger conjecture would 
be $\alpha_{\min}(G) \leq O(\log n)$; In fact, both our analysis and experiments indicate that this stronger conjecture holds (See \cref{sec_experiments}).   
We do not have a super-constant separations between $\alpha_{\min}(G)$ and $\alpha_{A,D}(G)$, and whether $\alpha_{\min}(G)$ \eqref{eq_alpha} is a tight upper bound is an intriguing question.

\paragraph{Spectral Implications of $\linf$-LW}
 Our third set of results demonstrates the favorable effects of $\linf$-LW reweighting on the eigenvalue distribution of graph Laplacians. We present two fundamental results in this direction:   

\begin{theorem}[$\linf$-LW reweighting and Mixing-time, Informal]
\label{thm_spectral_LW}
\;
    \begin{enumerate} 
        \item We generalize the classic spectral-gap bound of \cite{Mohar91} $(\lambda_2 \geq 4/nD)$ to \emph{weighted} graphs: 
        $\lambda_2(G_w) \geq  2/ (nD\cdot R_{\max}(G_w))$, where $R_{\max}$ is the maximal effective resistance of \emph{an edge} in the weighted graph. 
        \item We show a close connection between the optimality condition of the smooth-max-eigenvalue of a graph \cite{Nesterov03} ($\max_g \log (\Tr e^{L_g^+})$) and the fixed-point condition for $\linf$-LW \eqref{eq_Lp_LW_def}, implying a condition under which $\linf$-LW reweighting increases the softmax function.   
    \end{enumerate}
\end{theorem}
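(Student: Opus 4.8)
Rather than mimic the combinatorial flow argument of \cite{Mohar91}, the plan is to pass through the Kirchhoff index. First recall the spectral identity $\sum_{i<j} R_{ij}(G_w) = n\,\Tr L_w^{+} = n\sum_{k=2}^{n}\lambda_k(L_w)^{-1}$ (a one-line computation from $R_{ij}=(\mathbbm{1}_i-\mathbbm{1}_j)^\top L_w^{+}(\mathbbm{1}_i-\mathbbm{1}_j)$ together with $L_w^{+}\buni=0$). Since $\lambda_2$ is the smallest nonzero eigenvalue, dropping all terms but $k=2$ gives $\sum_{i<j}R_{ij}(G_w)\ \ge\ n/\lambda_2(G_w)$, i.e.\ $\lambda_2(G_w)\ge n/\sum_{i<j}R_{ij}(G_w)$. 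Second, I would upper-bound the right-hand sum: resistance distance satisfies the triangle inequality (a classical fact), and the resistance distance between the endpoints of an edge $e$ equals its edge-effective-resistance $R_e(G_w)$; hence, choosing for each pair $i,j$ a shortest (hop-)path $P_{ij}$ of length $\le D$ and telescoping, $R_{ij}(G_w)\le\sum_{e\in P_{ij}}R_e(G_w)\le D\cdot R_{\max}(G_w)$. Summing over the $\binom{n}{2}$ pairs yields $\sum_{i<j}R_{ij}(G_w)\le \binom{n}{2} D\,R_{\max}(G_w)$, and combining the two displays gives $\lambda_2(G_w)\ \ge\ \frac{n}{\binom{n}{2} D\,R_{\max}(G_w)}=\frac{2}{(n-1)D\,R_{\max}(G_w)}\ \ge\ \frac{2}{nD\,R_{\max}(G_w)}$. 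For unweighted $G$ every $R_e\le 1$, so this recovers $\lambda_2=\Omega(1/(nD))$, matching \cite{Mohar91} up to the absolute constant.

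\textbf{Part 2, the structural correspondence.} For the softmax, set $F(g):=\log\Tr e^{L_g^{+}}$ and work throughout on $\buni^{\perp}$, so that (as long as $G_g$ is connected) $L_g^{+}$ is a genuine inverse with fixed kernel $\mathrm{span}(\buni)$ and $g\mapsto L_g^{+}$ is smooth. Using $\partial_{g_e}L_g=b_eb_e^\top$, the pseudoinverse derivative $\partial_{g_e}L_g^{+}=-L_g^{+}b_eb_e^\top L_g^{+}$, and the trace-function identity $\partial_{g_e}\Tr e^{L_g^{+}}=\Tr\!\big(e^{L_g^{+}}\,\partial_{g_e}L_g^{+}\big)$, I would compute
\[
\partial_{g_e}F(g)\ =\ -\,\frac{b_e^\top L_g^{+}\,e^{L_g^{+}}\,L_g^{+}\,b_e}{\Tr e^{L_g^{+}}}\ =:\ -\,\widetilde R_e(g),
\]
a \emph{tilted} edge resistance. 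The KKT condition for stationarity of $F$ over the simplex $\{g\ge 0,\ \buni^\top g=1\}$ then reads ``$\widetilde R_e(g)$ is constant over $\supp(g)$'' --- structurally identical to the $\linf$-LW fixed point of the weighted incidence matrix, which by \cref{lem_ER_char} (equivalently Jacobi's formula) is exactly ``$R_e(G_g)=b_e^\top L_g^{+}b_e$ is constant over the edges.'' The only difference is that the matrix exponential $e^{L_g^{+}}=I+L_g^{+}+\tfrac12(L_g^{+})^2+\cdots$ is inserted, so $\widetilde R_e$ is $R_e$ reweighted by a fixed PSD matrix commuting with $L_g^{+}$.

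\textbf{Part 2, the comparison and conclusion.} Next I would quantify when the two conditions coincide. Writing $z_e:=(L_g^{+})^{1/2}b_e$, one has $\widetilde R_e(g)=\frac{1}{\Tr e^{L_g^{+}}}\cdot\frac{z_e^\top (L_g^{+}e^{L_g^{+}})z_e}{z_e^\top z_e}\cdot R_e(G_g)$, and the Rayleigh quotient of $L_g^{+}e^{L_g^{+}}$ (whose nonzero eigenvalues are $\lambda_k^{-1}e^{1/\lambda_k}$) lies in $[\lambda_n^{-1}e^{1/\lambda_n},\,\lambda_2^{-1}e^{1/\lambda_2}]$; hence
\[
\frac{\max_e\widetilde R_e(g)}{\min_e\widetilde R_e(g)}\ \le\ \frac{\max_e R_e(G_g)}{\min_e R_e(G_g)}\cdot\kappa(L_g)\cdot e^{\,1/\lambda_2(L_g)-1/\lambda_n(L_g)} .
\]
Evaluated at $g=g_{\ell w}$, where the plain resistances are all equal (so the first factor is $1$), this shows the tilted resistances are equal up to a factor that tends to $1$ as the nonzero spectrum of $L_{g_{\ell w}}$ clusters, equivalently as $\kappa(L_{g_{\ell w}})\to 1$. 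Consequently, under such a spectral-clustering condition $g_{\ell w}$ is an (approximate) stationary point of $F$ on the simplex, and plugging the gradient formula into $\langle\nabla F(g_0),\,g_{\ell w}-g_0\rangle$ for the uniform weights $g_0$ yields a controlled sign for the first-order change of the smooth-max-eigenvalue under $\linf$-LW reweighting --- which is the promised ``condition under which'' statement (the sign matching the convention fixed in the definition of the smooth-max-eigenvalue).

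\textbf{Main obstacle.} Part 1 is essentially assembling standard identities. The real difficulty is in Part 2: turning ``$e^{L_g^{+}}$ is nearly isotropic on $\mathrm{range}(L_g)$'' into a clean, usable hypothesis --- ideally phrased directly in terms of $\kappa(L_{g_{\ell w}})$ or the paper's parameter $\alpha_1(g_{\ell w})$ --- and then upgrading ``approximate stationarity on the simplex'' into an honest comparison of $F(g_{\ell w})$ with $F(g_0)$ rather than a merely first-order statement; a secondary but necessary technical point is verifying that $G_g$ stays connected along the relevant reweighting path, so $L_g^{+}$ is differentiable there, which restricting all computations to $\buni^{\perp}$ handles.
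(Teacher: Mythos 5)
Part 1 of your proposal is essentially the paper's own proof: both of you pass through the Kirchhoff-index identity $\sum_{i<j}R_{ij}=n\Tr L_g^{+}$, use the triangle inequality for resistance distance together with $R_e(g_{\ell w})$ being bounded to get $R_{ij}\le D\,R_{\max}$, and then compare $\Tr L_g^{+}$ with $\lambda_{\max}(L_g^{+})=1/\lambda_2$. (The paper additionally proves a second bound $\lambda_2\ge 4/(n\sum_l R_l)$ via a path-characteristic-function count, but the theorem statement only asserts the first one, which you have.)

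For Part 2 you take a genuinely different route. The paper works directly with $f(g)=\Tr e^{L_g^{+}}$, derives the first-order optimality criterion $\Tr\!\big[e^{L_g^{+}}L_g^{+}(I-b_lb_l^{\top}L_g^{+})\big]\ge 0$, observes that its naive sufficient condition ($I-b_lb_l^{\top}L_g^{+}\succeq 0$, equivalently $R_l(g)\le 1$) is never feasible, and then uses convexity to state the ``condition under which LW helps'' as the single trace inequality $\Tr\!\big[e^{L_{\ell w}^{+}}L_{\ell w}^{+}(I-L_{\mathrm{uni}}L_{\ell w}^{+})\big]\ge 0$ comparing $g_{\ell w}$ against the uniform weights; this is exactly the ``implied condition'' the informal theorem alludes to, and the paper stops there. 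You instead introduce the normalized gradient $\tilde R_e(g)=\partial_{g_e}\log\Tr e^{L_g^{+}}$ as a ``tilted'' edge resistance, note that the KKT stationarity condition for the softmax is constancy of $\tilde R_e$ (a direct structural analogue of the $\ell_\infty$-LW fixed point, which is constancy of $R_e$), and then bound the spread $\max_e\tilde R_e/\min_e\tilde R_e$ at $g=g_{\ell w}$ by $\kappa(L_{g_{\ell w}})\cdot e^{1/\lambda_2-1/\lambda_n}$ via Rayleigh quotients of $L_g^{+}e^{L_g^{+}}$. Your version is more quantitative and exhibits the fixed-point parallel more transparently, at the cost of an exponential factor in $1/\lambda_2$ that can be very loose; the paper's version gives a cleaner yes/no condition but one that it cannot certify a priori. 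Neither route upgrades the first-order statement to a full comparison of $F(g_{\ell w})$ with $F(g_{\mathrm{uni}})$; both you and the authors explicitly acknowledge this, and it is consistent with the theorem being labeled ``informal.''
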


\paragraph{Spectrally-Thin Trees \cite{AO15}} 
In an influential result, \cite{AGM10} presented a rounding scheme for the Asymmetric TSP problem that breaks the $O(\log n)$ approximation barrier, assuming the underlying graph admits a  \emph{spectrally-thin} tree (STT) ($L_T \preceq \gamma L_G$, where $\gamma = \gamma_G(T)$ is the spectral thinness of the tree) . Anari et al. \cite{AO15} showed that for any graph, $\gamma_G(T)$ is at least $R_{\max_e}(G)$ and that an optimal tree with thinness $\tilde{\Theta}(R_{\max_e}(G))$ can be found in polynomial time. We generalize this result in the following way -- For any graph $G$, we can find a \emph{weighted} tree $T_w$ with spectral thinness of $L_{T_w} \preceq O((n-1)/m)L_G$:
\begin{lemma}
    For any connected graph $G = \langle V,E \rangle$ there is a weighted spanning tree $T_w$ such that $T_w$ has spectral thinness of $((n-1)/m) \cdot  O(\log n / \log \log n)$.
\end{lemma}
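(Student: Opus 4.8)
The plan is to realize the Foster benchmark $(n-1)/m$ on an actual spanning tree by a rounding argument, taking a random spanning tree and controlling the loss with a matrix concentration inequality for spanning-tree distributions. I would work in isotropic coordinates $u_e := (L_G^{+})^{1/2} a_e$, so that $\sum_{e\in E} u_e u_e^\top = \Pi$ is the orthogonal projection onto $\mathbf{1}^{\perp}$ and $\|u_e\|_2^2 = a_e^\top L_G^{+} a_e = R_e(G)$; in these coordinates a weighted tree $T_w$ with weights $c$ is $\gamma$-thin (i.e. $L_{T_w}\preceq\gamma L_G$) iff $\sum_{e\in T} c_e u_e u_e^\top \preceq \gamma\,\Pi$. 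The value $(n-1)/m$ is precisely Foster's theorem: since $\sum_{e\in E} R_e(G) = n-1$, the uniform --- but non-tree --- weighting $c_e\equiv(n-1)/m$ over \emph{all} $m$ edges already gives $\sum_{e} c_e u_e u_e^\top = \tfrac{n-1}{m}\Pi$ exactly, so the whole content is to reproduce this, up to an $O(\log n/\log\log n)$ factor, on a genuine spanning tree.

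Concretely, I would draw $T$ from the uniform spanning tree (UST) distribution on $G$ and put weight $c_e := \tfrac{n-1}{m}\,R_e(G)^{-1}$ on each $e\in T$. Two classical facts drive the estimate. First, by Kirchhoff's formula $\Pr_{T\sim\mathrm{UST}}[e\in T] = R_e(G)$, hence $\E\big[\sum_{e\in T} c_e u_e u_e^\top\big] = \sum_{e\in E} R_e(G)\cdot\tfrac{n-1}{m R_e(G)}\,u_e u_e^\top = \tfrac{n-1}{m}\Pi$, and likewise $\E\big[\sum_{e\in T} c_e\big] = n-1$, the natural total weight of a spanning tree. Second, each summand $c_e u_e u_e^\top$ is rank-one with operator norm exactly $c_e\|u_e\|_2^2 = \tfrac{n-1}{m}$ --- equal to the common eigenvalue of the expectation.

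Then I would invoke concentration: the UST is a determinantal, hence strongly Rayleigh, measure, so the matrix Chernoff bound for strongly Rayleigh distributions (Kyng--Song) applies to $Y := \sum_{e\in T} c_e u_e u_e^\top$. Restricted to $\mathbf{1}^{\perp}$ we have $\E[Y] = \tfrac{n-1}{m} I$ and every summand has norm $\tfrac{n-1}{m}$, i.e. we are exactly in the ``Poissonian'' regime where the per-term norm equals the mean eigenvalue; there the upper tail $\Pr\!\big[\lambda_{\max}(Y)\ge(1+\delta)\tfrac{n-1}{m}\big]\le (n-1)\,e^{\delta}/(1+\delta)^{1+\delta}$ forces $\lambda_{\max}(Y)\le O(\log n/\log\log n)\cdot\tfrac{n-1}{m}$ with probability $\ge 1-n^{-\Omega(1)}$ --- rather than the $O(\log n)$ one would get if the norms were much smaller than the mean. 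A scalar Chernoff bound (valid for these negatively-associated variables) gives $\sum_{e\in T} c_e = (1\pm o(1))(n-1)$ with the same probability, so a union bound yields a single tree $T$ that is simultaneously $O\!\big(\tfrac{n-1}{m}\cdot\tfrac{\log n}{\log\log n}\big)$-thin and has essentially the correct total weight; a $(1\pm o(1))$ rescaling of $c$ makes the normalization exact. Conjugating back by $(L_G^{+})^{1/2}$ gives $L_{T_w}\preceq O\!\big(\tfrac{n-1}{m}\cdot\tfrac{\log n}{\log\log n}\big)L_G$, as claimed.

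The concentration step is where I expect the difficulty to lie: ordinary (independent-sample) Chernoff is unavailable because UST edges are negatively correlated, so one must use the strongly-Rayleigh matrix Chernoff bound, and one must track constants carefully to land on $\log n/\log\log n$ rather than $\log n$ --- which is exactly why the weights are taken proportional to $1/R_e(G)$, so that every summand has norm equal to the mean. This is also the mechanism that beats the classical $\tilde{\Theta}(R_{\max_e}(G))$ bound for \emph{unweighted} spectrally-thin trees \cite{AO15}: re-weighting by $1/R_e(G)$ lets a single tree absorb the heterogeneity of $G$'s edge resistances. A smaller secondary issue is pinning down the normalization convention for weighted-tree thinness so that the weight-sum bound and the spectral bound hold for one sampled tree; the union bound above takes care of it. If one wants to eliminate the probabilistic step, the same $\tilde{O}(\cdot)$ bound should follow by derandomizing via interlacing families / the barrier method of Marcus--Spielman--Srivastava and Anari--Oveis Gharan, though the argument is then longer.
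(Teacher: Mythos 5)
Your skeleton matches the paper's: pass to an isotropic basis, normalize so that each edge's rank-one piece has operator norm exactly $(n-1)/m$, select a spanning tree whose rank-one sum has bounded operator norm, and conjugate back to get $L_{T_w}\preceq\gamma L_G$. You normalize via the effective resistances $R_e(G)$ of $G$ itself and assign the tree the non-uniform weights $c_e=\tfrac{n-1}{m}R_e(G)^{-1}$, whereas the paper first passes to the $\ell_\infty$-Lewis-weight Laplacian (in which every edge has equal ER) and then uses the uniform tree weight $(n-1)/m$; both give unit-norm vectors and a valid isotropic fractional point in the spanning-tree polytope, so this is a harmless cosmetic difference, and your Foster's-theorem observation that $(n-1)/m$ is exactly the benchmark one is trying to realize is the right intuition.

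The gap is in the concentration step, and it is not a matter of constants. You write the sharp \emph{independent} matrix Chernoff tail $\Pr[\lambda_{\max}(Y)\ge(1+\delta)\tfrac{n-1}{m}]\le(n-1)\,e^{\delta}/(1+\delta)^{1+\delta}$ and attribute it to the Kyng--Song strongly-Rayleigh matrix Chernoff. That is not what their theorem gives: the strongly-Rayleigh bound has an extra $\log k$ term (with $k=n-1$ the homogeneity degree of the spanning-tree measure) in the denominator of the exponent, i.e.\ a tail of the rough form $n\exp\!\left(-\tfrac{c\,\delta^2\,\mu}{R(\delta+\log k)}\right)$. In the regime $\mu=R$ that you have deliberately engineered, this only forces $\delta=O(\log n)$, not $O(\log n/\log\log n)$; the $e^{\delta}/(1+\delta)^{1+\delta}$ tail, which decays like $\exp(-\delta\log\delta)$ and is what produces the $\log\log n$ saving, comes from the independent moment-generating-function calculation and is not available for determinantal samples via any known strongly-Rayleigh Chernoff. (This is the same phenomenon behind Kyng--Song needing $O(\log^2 n)$ random spanning trees for a sparsifier where independent sampling would need $O(\log n)$.) So a single random spanning tree with your weights is proven $O\!\left(\tfrac{n-1}{m}\log n\right)$-thin, one $\log\log n$ factor short of the stated lemma.

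The paper closes that gap by not sampling at all: it applies a deterministic pipage-rounding / pessimistic-estimator lemma (the ``independent-set'' lemma in the text, following Harvey--Olver) which, starting from a fractional point in the graphic-matroid base polytope whose covariance is isotropic and whose atoms are unit-norm rank-ones, algorithmically produces an actual matroid basis --- a spanning tree --- whose rank-one sum has operator norm $O(\log n/\log\log n)$. This is strictly stronger than random-spanning-tree sampling plus known concentration, not merely a derandomization of it. Your closing sentence points at this machinery, but the framing is off: the barrier / pipage-rounding argument is not an optional way to remove randomness from your proof, it is the step that actually achieves the $\log\log n$ improvement and is the route the paper takes.
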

We show this is always at least as good as the \cite{AO15} tree. It would be interesting to see if this weighted version of STTs can be used for the ATSP rounding primitive.   
We believe that the host of our results will have further applications in the design and analysis of spectral graph algorithms. 





\section{Thechnical Overview}

Here we provide a high-level overview of the main new ideas required to prove Theorems \ref{thm_LW_apx_ERMP_trees} and  \ref{thm_UB_general_graphs}. 

\paragraph{ER Polarization: Technical Overview of Theorem \ref{thm_LW_apx_ERMP_trees} }\label{sec_tree_TOV}

Our proof technique for bounding the ERMP approximation ratio on trees ($\alpha_{A,D}(T)$) is based on the following important observation: the optimal weight of the $l$'th edge of a tree, $g^*_l(T)$, is proportional to its congestion, $c_l(T)$~\eqref{equ_opt_g_trees}. Therefore the optimal weight will be "farther away" from uniform (hence far from the LW solution \eqref{clm_LW_tree}) when the tree has a severe "bottleneck" (i.e., an edge crossed by many paths). With this observation, we show that given any tree $T$, we can
construct an alternative tree $T'$ whose approximation ratio is worse, i.e., $\alpha_{A,D}(T') > \alpha_{A,D}(T)$. Our construction is iterative, leveraging the above observation -- at each iteration we will increase the ``bottleneck" of the tree, until we reach a `fixed point' (the Bowtie graph). To formally define this polarization-process, we introduce the notion of \emph{Local Transformations}\footnote{The idea of LT is inspired by Elementary operations on matrices, and indeed they play a similar role in some sense.} (LT) on a tree. An LT is a local graph operation that changes the congestion of a \emph{single} edge of the tree. Formally, we denote by $E_k$ an LT such that
\begin{equation}\label{equ_LT_def}
    c_l(E_k \circ T) = c_l(T) , \ \forall \ l \neq k ,
\end{equation}
where $T' = E_k \circ T$ is the tree after the transformation. 

There are two natural ways to increase the bottleneck of a tree (i.e., make the distribution in~\eqref{equ_opt_g_trees} further from uniform): 
(1) Reduce the number of different paths, or  (2) Add more entry points (i.e leaves). It turns out that  these two operation can be formulated as an LT. This allows us to gradually construct the new tree with repeated use of LT. For this purpose, we shall define an \emph{upper LT} and a \emph{lower LT}, denoted $E^{\uparrow}_k$ and $E^{\downarrow}_k$ respectively, such that for any tree $T$, we have that:
\begin{equation}\label{equ_upper_lower_LT}
    \begin{split}
        &c_k(E^{\uparrow}_k \circ T) > c_k(T), \\  
        &c_k(E_k^{\downarrow} \circ T) < c_k(T).
    \end{split}
\end{equation}
(This definition will become clear below). 
The upper LT will correspond to (1) above, i.e., reducing the number of paths, and lower LT will correspond to (2) above, i.e., adding another entry point, which is consistent with our initial intuition. 
It turns out that the exact threshold for choosing which local operation to perform next is the norm of the optimal weight -- $||g^*(T)||_2^2$ :  
\begin{claim}\label{clm_apx_ratio_ET}
    Let $T$ be tree of order $n$, $E_k$ be an LT. Then $\alpha_{A,D}(T) \leq \alpha_{A,D}(E_k \circ T)$ if one of the following holds:
    \[
    E_k \text{ is a lower LT } \ \& \  g_k^*(T) \leq ||g^*(T)||_2^2 
    \]
    \qquad \qquad \qquad \qquad \qquad \quad or,
    \[
    E_k \text{ is an upper LT } \ \& \ g_k^*(T) > ||g^*(T)||_2^2 
    \]
\end{claim}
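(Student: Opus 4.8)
The plan is to reduce everything to the explicit formula $g^*_l(T) \propto c_l(T)$ for the optimal ERMP weights on a tree (equation~\eqref{equ_opt_g_trees}), combined with the exact closed form \eqref{clm_LW_tree} for the $\linf$-LW on a tree. With these two ingredients the approximation ratio $\alpha_{A,D}(T)$ becomes an explicit rational function of the congestion vector $c(T) = (c_1(T),\dots,c_{n-1}(T))$, namely the Kirchhoff index achieved by the (uniform-ish) $\linf$-LW weighting divided by the Kirchhoff index achieved by the congestion-proportional optimal weighting. So the first step is to write $\alpha_{A,D}(E_k\circ T)$ and $\alpha_{A,D}(T)$ as functions of the congestion vectors, which by the defining property \eqref{equ_LT_def} of an LT differ only in the $k$-th coordinate: $c_k(E_k\circ T)$ versus $c_k(T)$, with every other coordinate fixed. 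Thus the whole claim becomes a statement about the monotonicity, in the single scalar variable $t := c_k$, of the one-variable function $t \mapsto \alpha_{A,D}(T_t)$ where $T_t$ is the tree with congestion vector equal to $c(T)$ except the $k$-th entry replaced by $t$.

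Next I would compute the sign of $\frac{d}{dt}\alpha_{A,D}(T_t)$. Using $g^*_l \propto c_l$, one has $g^*_k(T_t) = t / (\sum_{l} c_l)$ after normalization, and $\|g^*(T_t)\|_2^2 = \sum_l c_l^2 / (\sum_l c_l)^2$. The numerator (the Kirchhoff index under the $\linf$-LW weighting) and the denominator (the Kirchhoff index $\cK_{g^*}$, which by A-optimality equals $\Tr L_{g^*}^+$ and has a clean closed form in terms of $c$ on trees — essentially $\big(\sum_l \sqrt{c_l}\,\big)^2$ up to normalization constants, since the optimal trace on a path-like decomposition is a sum-of-square-roots expression) are both differentiated in $t$. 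The key algebraic fact I expect to fall out is that the derivative $\frac{d}{dt}\alpha_{A,D}(T_t)$ changes sign exactly when $g^*_k = \|g^*\|_2^2$, i.e. when $t/(\sum c_l) = \sum c_l^2/(\sum c_l)^2$, equivalently $t = (\sum_l c_l^2)/(\sum_l c_l)$. This is precisely the threshold in the claim. So: when $g_k^*(T) \le \|g^*(T)\|_2^2$ we are on the branch where $\alpha_{A,D}(T_t)$ is increasing in $t$ — hence a \emph{lower} LT (which only decreases $c_k$, by \eqref{equ_upper_lower_LT}) cannot decrease the ratio... wait, I must be careful with the direction: a lower LT \emph{decreases} $c_k$, so for it to increase $\alpha_{A,D}$ we need $\alpha_{A,D}(T_t)$ to be \emph{decreasing} in $t$ near the current value. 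I would therefore track signs carefully: the correct reading is that below the threshold $\alpha_{A,D}$ is decreasing in $c_k$ (so pushing $c_k$ down via a lower LT helps), and above the threshold $\alpha_{A,D}$ is increasing in $c_k$ (so pushing $c_k$ up via an upper LT helps). Either way the structure is: one critical point, determined by $g^*_k = \|g^*\|_2^2$, separating the two monotonic regimes.

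The main obstacle I anticipate is the second ingredient — getting a sufficiently clean closed form for $\cK_{g^*}(T) = \Tr L_{g^*}^+$ as a function of the congestion vector, and then showing the differentiation goes through cleanly enough to isolate the single threshold. On a tree, $\Tr L_g^+ = \sum_{i<j} R_{ij} = \sum_{l} g_l^{-1} \cdot c_l$ (the resistance across edge $l$ is $g_l^{-1}$ and it contributes to $R_{ij}$ for exactly $c_l$ pairs), which for the congestion-proportional optimum $g^*_l = c_l/\sum_m c_m$ becomes $(\sum_m c_m)(\sum_l 1) = (n-1)\sum_m c_m$ — actually much cleaner than I feared, just linear in $\sum c_m$. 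The numerator, $\cK$ under $\linf$-LW, uses $g^{lw}_l$ from \eqref{clm_LW_tree} and equals $\sum_l (g^{lw}_l)^{-1} c_l$, again explicit. So $\alpha_{A,D}(T_t)$ is a ratio of two explicit functions of $t$, and the derivative sign computation reduces to a polynomial inequality in $t$ and the fixed quantities $\{c_l\}_{l\ne k}$; the real work is verifying that this polynomial has the claimed single sign change at $t^\star$ with $g^*_k(t^\star) = \|g^*(t^\star)\|_2^2$, and handling the normalization constant $\sum_m c_m$ which itself depends on $t$. I would organize the proof as: (i) state the two closed forms; (ii) define $h(t) := \alpha_{A,D}(T_t)$ and compute $h'(t)$; (iii) show $h'(t)$ vanishes iff $g^*_k(T_t) = \|g^*(T_t)\|_2^2$ and determine the sign on each side; (iv) translate back using \eqref{equ_upper_lower_LT} to conclude in each of the two cases of the claim. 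A secondary subtlety is confirming that $\linf$-LW on a tree really is given by the formula in \eqref{clm_LW_tree} in a form compatible with this computation, and that applying an LT doesn't change which vector that formula outputs in a way that breaks the monotonicity argument — but since an LT fixes all congestions except the $k$-th, and the $\linf$-LW formula on a tree should likewise depend only on local degree/structure data that an LT controls, this should be bookkeeping rather than a genuine difficulty.
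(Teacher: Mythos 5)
There is a genuine error in your proposal that breaks the computation: you repeatedly use the formula $g^*_l(T) \propto c_l(T)$, but the correct formula from equation~\eqref{equ_opt_g_trees} is $g^*_l(T) \propto c_l(T)^{1/2}$. (The paper's prose in the technical overview loosely says the optimal weight is ``proportional to its congestion,'' but the precise statement — and the one you must use — is $g^*_l = c_l^{1/2}/\sum_k c_k^{1/2}$.) You actually cite the correct Kirchhoff value $\cK^*_T = (\sum_l c_l^{1/2})^2$ in one sentence, but then in the next you plug in $g^*_l = c_l/\sum_m c_m$ and get $\cK^*_T = (n-1)\sum_m c_m$, which contradicts the earlier (correct) expression and is simply wrong. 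Everything downstream of that — the explicit threshold $t^\star = (\sum_l c_l^2)/(\sum_l c_l)$, the promise that the derivative has ``one critical point,'' the sign bookkeeping — is built on this wrong formula. With the correct $g^*_l \propto \sqrt{c_l}$, the threshold $g^*_k = \|g^*\|_2^2$ unwinds to $c_k^{1/2}\sum_m c_m^{1/2} = \sum_m c_m$, which is an implicit equation in $t=c_k$ (both sides depend on $t$), so the clean single-sign-change picture is not free.

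The paper avoids calculus entirely. It writes $\alpha(T) = m\sum_l c_l/(\sum_l c_l^{1/2})^2$ and compares $\alpha(T)$ to $\alpha(T')$ directly as a ratio-of-sums inequality: after expanding the shifted numerator $\sum_l c_l + (c_k'-c_k)$ and denominator $(\sum_l c_l^{1/2} + (c_k'^{1/2}-c_k^{1/2}))^2$, it applies the elementary fact $\frac{A}{B} \le \frac{A+D}{B+C} \iff AC \le BD$, factors out $c_k'^{1/2}-c_k^{1/2}$ (whose sign is dictated by whether $E_k$ is an upper or lower LT), and then substitutes $\sum_l c_l = \cK^*\|g^*\|_2^2$, $\sum_l c_l^{1/2} = (\cK^*)^{1/2}$, and $g^*_k = c_k^{1/2}/(\cK^*)^{1/2}$ to land on the threshold $g^*_k$ vs.\ $\|g^*\|_2^2$. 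Notably, the paper's argument only yields a \emph{sufficient} condition (it bounds $c_k'^{1/2}$ by $c_k^{1/2}$ in the direction needed, using that $c_k'/c_k>1$ or $<1$), whereas your plan aims for a full monotonicity characterization — more than the claim requires, and harder to deliver once you account for the $t$-dependence on both sides of the threshold. If you want to salvage the derivative route, you would first need to redo everything with $g^*_l \propto \sqrt{c_l}$ and $\cK^* = (\sum_l \sqrt{c_l})^2$, and then carefully verify the sign of $h'(t)$ on the appropriate side of the (moving) threshold; but the discrete algebraic comparison in the paper is shorter and sidesteps all of this.
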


Following this claim, we divide the edges of the tree to two sets:
\[
E_<(T) \coloneqq \{ l \ \mid \ g_l^*(T) \leq ||g^*(T)||_2^2 \ \} \ , \ E_>(T) \coloneqq \{ l \ \mid \ g_l^*(T) > ||g^*(T)||_2^2 \ \}
\]
and operate with the appropriate LT on each set (until saturation -- $E_k \circ T = T$). This process is possible due to a key feature of these sets - they are \emph{invariant under LTs}.  More precisely, we prove that  
    for any tree $T$, and $k \in E_>$ we have that, \( E_>(E_k^{\uparrow} \circ T) = E_>(T)\), and vice versa for $E_<$.

We apply this process repeatedly (each iteration choosing the `right' LT), and  
show that that it must terminate after a finite number of steps, in a ``fixed-point" tree 
whose congestion is as ``polarized" as possible -- we call this family of trees \emph{Bowtie graphs} of some order (see \Cref{fig_tps_tree}).  We prove that Bowtie graphs maximize the ratio $\alpha_{A,D}(T)$ over the set of trees, and then bound the latter quantity for Bowtie graphs (by $\approx 3.12$) using the tools we develop for general graphs, on which we elaborate in the next paragraph.  

\begin{figure}[H]
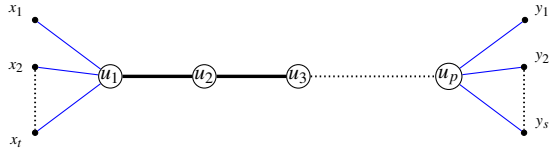

    \centering
    \tikzfig{figs/fig_T_tps}
    \caption{A Bowtie graph - $\mathcal{B}_{t,p,s}$. A path of length $p$ joined with stars of size $t,s$ on both sides}
    \label{fig_tps_tree}
\end{figure}

\paragraph{Overview of the General Upper Bound (Theorem \ref{thm_UB_general_graphs})}

There is a simple intuitive explanation for  
why $\linf$-LW($G$) provide an $O(\text{diameter})$-approximation 
to the ERMP problem on any graph $G$, i.e.,  $\alpha_{A,D} \leq \diam(G)$. Indeed, since 
LW minimize the maximal ER among \emph{edges} of $G$ (see next section), and 
since effective-resistances are well-known to form a \emph{metric}, the triangle-inequality implies that $ER(i,j)$ for any pair of vertices in $G_{lw}$ is at most the sum of ERs along the edges of a \emph{shortest path} between $i$ and $j$, which  is at most $\diam(G_{lw}) = \diam(G)$.  We use an AM-GM argument to prove a stronger bound: For any $G$, 
\begin{equation}\label{eq_alpha_1}
    \alpha_{A,D}(G) \leq  \alpha_1(g_{\ell w}),
\end{equation} 
where $ \alpha_1(g) := \frac{2}{(n-1)^2} \Tr L_g^{+}$ is a parameter closely related to the optimal solution $\cK_G(g)$ -- Indeed, the harmonic-mean characterization mentioned earlier in the introduction implies that the optimal ERMP value can be rewritten as (see Equation~\eqref{equ_R_tot_rep_tr})  
\[ 
    \cK(g^*) = n \Tr L_{g^*}^+ =
    \frac{n(n-1)}{HM(\lambda(L_{g^*}))},  
\]
where $HM(\lambda(L_{g^*}))$ is the harmonic mean of the $n-1$ (positive) eigenvalues of the weighted Laplacian with the optimal ERMP weights.  
With some further manipulations, we can use the AM-GM inequality to show that the approximation ratio $\cK(g_{lw})/\cK(g^*) \leq \alpha_1(G)$, which is always at most $\diam(G)$ via the triangle inequality, but is typically smaller, see the experiments section. 

What if the diameter of $G$ is large? In this case we show  that a ``\emph{dual}" function of $\alpha_1(G)$ is typically small. Specifically, we prove that the following quantity is also an upper bound on the  approximation ratio: 
\begin{equation}\label{eq_alpha_2}
    \alpha_{A,D}(G) \leq \left\lVert -\nabla_g \left(\log \alpha_1(g_{\ell w})\right) \right\rVert_\infty. 
\end{equation} 
This bound is somewhat more technical and less intuitive, but the key for deriving it is based on the ERMP \emph{duality gap} of \cite{saberi}, which naturally involves gradient-suboptimality conditions with respect to weights. Using calculus and the Courant-Fisher principle we then show that this is at most $\kappa(L(G_{lw}))$. 

Combining \eqref{eq_alpha_1} and \eqref{eq_alpha_2}  yields Theorem \ref{thm_UB_general_graphs} :
$\alpha_{A,D}(G) \leq  \min \left\{ \alpha_1(g_{\ell w})\; , \;  \left\lVert -\nabla_g \left(\log \alpha_1(g_{\ell w})\right) \right\rVert_\infty \right\}$.  
As mentioned earlier in the introduction, the intuition for why the \emph{minimum} of the two aformentioned quantities should always be small comes from the scalar inequality  $\min\{x,(\log x)'\} = 1$. Indeed, all our simulations corroborate that the minimum is $\tilde{O}(1)$, hence $\alpha_{min}(G)$ is typically a much tighter bound than $\min\{\diam(G), \kappa(L(G_{lw}))\}$. Conjecture \ref{conj_a_min} postulates that the minimum is always $O(\log n)$.

\paragraph{Organization of this paper:} In \cref{sec_prelims} we provide background  on Lewis weights, Laplacians and graph Effective Resistances, and prove our new characterization for Leverage Scores and Effective Resistance (Lemma \ref{lem_ER_char}). In \cref{sec_R_tot} we prove \Cref{thm_LW_apx_ERMP_trees,thm_UB_general_graphs}. \cref{sec_experiments} summarizes our experimental results. In \cref{sec_LW_spectral}, we prove \Cref{thm_spectral_LW} and the application for spectral-thin trees. Finally, in \cref{sec_computing_LW} we restate the algorithm for computing LW, and show that low-accuracy LW are sufficient for our results. We finish in \cref{sec_optimal_design} with exploring the relation of A- and D-optimal design, showing the geometric interpretation in terms of Pythagorean means, and proving \Cref{thm_natural_gap}.


\section{Preliminaries}\label{sec_prelims}

\subsection{Notations}
We denote by $\boldsymbol{S}^n_+$ the symmetric matrices subspace of $\R^{n \times n}$.
The unit vectors, $e_i$, are the vectors with all entries $0$, but the $i$'th entry (which equals  to $1$). 
We denote the eigenvalues (EV) of a matrix $M \in \R^{n \times n}$ by 
\[
\lambda(M) = \lambda_1(M) \leq \lambda_2(M) \leq \dots \leq \lambda_n(M).
\]


For any matrix $A \in \R^{n \times d}$, we denote the \emph{projection} matrix onto the column space of $A$ as
\[
\Pi_A = A(A^T A)^+ A^T.
\]

For future purposes, we restate here the AM-GM inequality. For any sequence $\{x_i\}_{i=1}^n$ of $n$ positive numbers, define:
\[
AM(X) = \frac{1}{n}\sum_{i} x_i \ , GM(X) = \sqrt[n]{x_1 \dots x_n} \ , \ HM(X) = \frac{n}{\sum_{i=1}^n x_i^{-1}} \ ,
\]
with the following inequality:
\[
     HM(X) \leq GM(X) \leq AM(X),
\]
when equality occurs iff $X$ is a constant sequence.



We use the following definition of ellipsoids - an ellipsoid $\mathcal{E}$ is defined by $\{ v \in \R^n \ \mid v^T A v \leq 1  \}$, where $A$ is a symmetric PSD matrix. The semiaxis of an ellipsoid $\mathcal{E}$ are given by $\boldsymbol{\sigma} = \{ \sigma_i(\mathcal{E}) \}_{i=1}^n$, and are equals to
\begin{align*}
\sigma_i = \lambda_i(A)^{-1/2} . \numberthis \label{equ_ellipsoid_semiaxis_def}
\end{align*}


From this it's clear that the volume of an ellipsoid equals , up to constant factor, to -- $\text{vol(}\mathcal{E}) \propto |A|^{-1/2}$.

\subsection{Leverage Scores and Lewis Weights}\label{sec_preliminary_LS}

For matrix $A \in \R^{n \times d}$ we define the \emph{Leverage Score} (LS) of the $i$'th row of $A$ as 
\[
\tau_i(A) = a_i^T (A^T A)^+ a_i , 
\]
where $a_i$ is the $i$'th row of $A$. We call $\tau(A) = \{\tau_i(A)\}_{i=1}^n$ the Leverage Scores of $A$.  Note that $\tau(A)$ is exactly the diagonal of the projection matrix $\Pi_A$. Since \(0 \preceq \Pi_A \preceq I\), we have $0 \leq \tau_i(A) \leq 1$, and :
\begin{fact}(Foster's lemma, \cite{Fos53}) \label{equ_LS_facts}
$\sum\limits_{i} \tau_i(A) = \Tr \ \Pi_A = \rank(A)$.
\end{fact}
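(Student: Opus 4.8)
The plan is to reduce the claim to two elementary observations already flagged in the text: that the leverage scores are exactly the diagonal entries of $\Pi_A$ (so their sum is $\Tr \Pi_A$), and that $\Pi_A$ is the orthogonal projector onto $\col(A)$, whose trace equals its rank.

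First I would record that $\tau_i(A) = a_i^\top (A^\top A)^+ a_i = e_i^\top \big( A (A^\top A)^+ A^\top \big) e_i = (\Pi_A)_{ii}$, using $a_i^\top = e_i^\top A$; summing over the rows $i$ then gives $\sum_i \tau_i(A) = \Tr \Pi_A$. Next I would verify that $\Pi_A$ is symmetric (immediate, since $(A^\top A)^+$ is symmetric) and idempotent. For idempotence the one input needed is the Moore--Penrose identity $A (A^\top A)^+(A^\top A) = A$: indeed $(A^\top A)^+(A^\top A)$ is the orthogonal projection onto $\col(A^\top)$, the row space of $A$, which contains every row of $A$, so right-multiplying $A$ by it leaves $A$ fixed. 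Then
\[
\Pi_A^2 = A (A^\top A)^+ (A^\top A) (A^\top A)^+ A^\top = A (A^\top A)^+ A^\top = \Pi_A,
\]
and the same identity shows $\Pi_A (Ay) = A (A^\top A)^+ A^\top A y = Ay$ for every $y$, so that $\col(\Pi_A) = \col(A)$.

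Finally, since $\Pi_A$ is symmetric and idempotent it is diagonalizable with every eigenvalue in $\{0,1\}$, and the multiplicity of the eigenvalue $1$ equals $\dim \col(\Pi_A) = \rank(A)$; hence $\Tr \Pi_A = \rank(A)$, which combined with the first step proves the claim.

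As for obstacles: there is essentially none, as this is a classical fact (Foster's lemma). The only point requiring a little care is that $A^\top A$ may be singular when $A$ is rank-deficient — exactly the regime relevant to Laplacian incidence matrices — so one must use the pseudoinverse identity $A(A^\top A)^+(A^\top A) = A$ rather than naively cancel $(A^\top A)^{-1}(A^\top A) = I$. An equivalent and perhaps cleaner route is to pass to the reduced SVD $A = U\Sigma V^\top$ with $U \in \R^{n \times \rank(A)}$ having orthonormal columns, observe that $\Pi_A = UU^\top$, and conclude $\Tr \Pi_A = \Tr(U^\top U) = \Tr I_{\rank(A)} = \rank(A)$.
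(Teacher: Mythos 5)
The paper does not prove this statement; it simply records the observation that $\tau(A)$ is the diagonal of $\Pi_A$ and cites Foster's lemma as a classical fact. Your proof is correct and is the standard argument: you identify $\tau_i(A)$ with $(\Pi_A)_{ii}$, verify via the Moore--Penrose identity $A(A^\top A)^+(A^\top A)=A$ that $\Pi_A$ is the orthogonal projector onto $\col(A)$, and conclude $\Tr \Pi_A = \rank(A)$ from the eigenvalues of an idempotent symmetric matrix (or, more cleanly as you note, from the reduced SVD). Your care in using the pseudoinverse identity rather than $(A^\top A)^{-1}$ is well placed, since $B^\top$ is rank-deficient in exactly the Laplacian setting the paper cares about.
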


Leverage scores are traditionally used as a tool for sketching and $\ell_2$-subspace embedding \cite{cw13}. Indeed, row-sampling of a matrix based on LS is known to produce a good spectral approximation \cite{cw13,dmmw12}. This is also the core idea behind spectral graph sparsification \cite{SS11}. \\

As mentioned in the introduction, we are interested in $\linf$-LW, which are a generalization of leverage scores:
\begin{definition}($\linf$-LW)\label{equ_inf_LW}
     For a matrix $A \in \R^{n \times d}$, the $\ell_\infty$-LW of $A$ ,denoted by $w_\infty(A) \in \R^n$, is the \emph{unique} weight vector such that for each row $i \in [n]$ we have
    \[
    (w_\infty)_i = \tau_i(W_\infty^{1/2}A) ,
    \]
    or equivalently,
    \[
        a_i^T(A^T W_\infty A)^+ a_i = 1  ,
    \]
    where $W_\infty = \diag(w_\infty)$. From now on, we will use LW to denote the $\linf$-LW$(A) \coloneqq w_\infty(A)$. 
\end{definition}

Note that Definition \eqref{equ_inf_LW} is \emph{cyclic} and indeed it is not a-priori clear that $\linf$-LW even exist. One way to prove the unique existence is by repeatedly computing the LS of the resulting matrix, until a fixed-point is reached, see \cref{sec_computing_LW}.  
Observe that for any matrix $A$, employing facts~\eqref{equ_LS_facts}, we have $w_\infty(A) \leq \buni$ and, because $W_\infty$ is a full diagonal matrix, meaning that $\rank(W_\infty^{1/2}A) = \rank(A)$,  $\sum_{i=1}^m (w_\infty(A))_i = \rank(A)$.

\subsection{Effective Resistance of Graphs}\label{sec_preliminary_graphs}

In this section we formally define the Effective Resistance (ER) of a graph, and derive a few key properties we will use later on. More details can be found in \cite{saberi}. 
Let $G=\langle V,E \rangle$, be an undirected graph with $n$ vertices and $m$ edges. w.l.o.g, $V = [n]$. We always assume that $G$ is connected and without self loops (if $G$ is not connected we can operate separately on each connected component). The degree of $u\in V$ is denoted $d(u) \coloneqq d_G(u)$. The adjacency matrix of $G$ is $A = A_G$. For each edge $l=(i,j)$ such that $i<j$, define an (arbitrary) \emph{orientation} $b_l \in \R^n$ as $(b_l)_i = 1$, $(b_l)_j = -1$ and all other entries $0$. $B = \{b_l\}_{l=1}^m \in \R^{n \times m}$ is called the \emph{edge-incidence} matrix of $G$. 




This paper explores the effects of \emph{re-weighting} a graph's edges on various graph functions. The re-weighted graph, $G=\langle V,E,w \rangle$, is defined by the associated weight vector $g \in \R^m$ such that, $g_l = w(e_l)$.

The corresponding \emph{weighted Laplacian} $L_g = L_g(G)$, w.r.t the weight vector $g \in \R^m$, is the PSD matrix 
\[
L_g := B \cdot \diag(g) \cdot B^T.
\]
Where clear from context, we sometimes denote $W = W_g \coloneqq \diag(g)$. 
Equivalently, 
$L_g = D_g - A_G(g)$, 
where $A_G(g)$ is the weighted adjacency matrix of $G$ and $D_g$ is diagonal matrix with $d_g(u) = \sum_{v \sim u} w(u,v)$ in the $u$'th diagonal term. Note that the unweighted case corresponds to $g= \buni\in \R^m$.\\

An important fact about Laplacians is that the rank of $L$ is $n-1$ and that $L \buni = 0$. It easily verified that $L +(1/n)\buni \buni^T$ is invertible and its inverse equals to,
\begin{equation}\label{equ_laplac_inv}
    (L + (1/n)\buni \buni^T)^{-1} = L^+ + (1/n)\buni \buni^T,
\end{equation}
where $L^+$ is the Pseudo-inverse of $L$.

Throughout the paper we will always assume that $g$ is normalized such that $\boldsymbol{1}^T g =1$. One nice consequence of this normalization is that $AM(\lambda(L_g)) = \frac{2}{n-1}$. To see why recall that $L_g = D_g - A(g)$. So
\[
    \Tr L_g = \Tr D_g - \Tr A(g) = \Tr D_g = \sum_{i} \sum_{j \sim i} w(i,j) =2 \cdot  \sum_{e \in E} w(e) = 2 \cdot \buni^T g = 2.
\]
$L_g$ has only $n-1$ positive EVs, so
\begin{align*}
 AM(\lambda(L_g)) = \frac{1}{n-1} \sum_{i=2}^n \lambda_i = \frac{1}{n-1} \Tr L_g = \frac{2}{n-1}. \numberthis \label{equ_AM_Laplacian}
\end{align*}
We will use that later on.\\

Throughout the paper, we shall use the following shorthand for the $\linf$-Lewis Weight of a given graph $G$  
\begin{equation}\label{equ_LW_normalize_grpahs}
    g_{lw} := \frac{1}{n-1} w_\infty(B^T) . 
\end{equation}
This normalization is due to fact~\eqref{equ_LS_facts} and that $\rank(B^T) = n-1$.\\

With these definitions, we can now formally define the Effective Resistances of a graph:
\begin{definition}[Effective Resistance] \label{def_ER}
Given a weighted graph $G$, with Laplacian $L_g = BWB^T$, the Effective Resistance (ER) between a pair of terminal nodes $(i,j)$ in the graph is: 
\[ ER(i,j) = b_{ij}^T L_g^+ b_{ij} \]
where, $b_{ij} = e_i - e_j$, and $L^+$ is Pseudo-inverse of $L$.
\end{definition}

Through the paper we sometimes denote the ER between two vertices $i$ and $j$ , with weight vector $g$ as $R_{ij}(g)$. We denote the Effective Resistance on the $l$'th edge, for $G(g)$ by
\[
R_l(g) = b_l^T L_g^+ b_l \ \ , \ l=1\dots m
\]
when the weights $g$ are clear, we will write $R_l$.
One useful property, which directly follows from the definition and properties of Pesudo-inverse is that ER is a homogeneous function of $g$ of degree $-1$, i.e
\begin{equation}\label{equ_ER_homogenous}
    R_{ij}(\alpha g) = R_{ij}(g)/\alpha.
\end{equation} 
Another important property of ER is that it defines a metric on the graph \cite{Klein93}, and as such satisfies the triangle inequality (for any weights $g$):
\begin{align*}
    R_{ij} \leq R_{ik} + R_{kj} \ \forall i,k,j \in V \numberthis \label{equ_ER_tri_ineq}
\end{align*}


Our paper continues the work of \cite{saberi} on the problem of minimizing the total Effective Resistance of a graph (ERMP), also known as the Kirchhoff Index $\cK_G(g)$  of $G$  \cite{Lukovits1999}. $\cK_G$ can also be expressed as
\begin{align*}
     \cK_G(g) := & \sum_{i,j} R_{ij}(g) \\
          = & n\Tr L_g^{+}  \numberthis \label{equ_R_tot_rep_tr} \\
          = & n\Tr (L_g + (1/n) \boldsymbol{1}\boldsymbol{1}^T)^{-1} - n 
\end{align*}
where we used the definition and trace laws for the first equality, and equation~\eqref{equ_laplac_inv} for the second one. 
Using the above expression, we can re-write the ERMP as the following SDP:
\begin{equation*}
\begin{array}{ll@{}ll}
\text{minimize}  & \Tr X^{-1} &\\
\text{subject to}&  X = \sum_{l=1}^{m} g_lb_lb_l^T + (1/n)\boldsymbol{1}\boldsymbol{1}^T \tag{ERMP}\\  \label{equ_R_tot_A_design}
                    & \boldsymbol{1}^Tg=1
\end{array}
\end{equation*}
which is a special case of A-optimal design (see~\eqref{equ_A_optimal_def}) \emph{over Laplacians} (with the incidence matrix $B$ as the experiment matrix). 


\subsection{A New Characterization of Leverage Scores and Effective Resistances}\label{sec_new_char_LS_ER}


Before proceeding to prove our results, we present here a new geometric characterization for leverage scores (LS) and effective resistances (ER). As mentioned earlier in the introduction, the optimal solution for D-optimal design is precisely the LW of the experiment matrix \cite{CP14,ccly}. Our characterization follows from a different (and to the best of our knowledge, a new) 
proof of the latter result, using elementary convex optimization analysis. For brevity, we only provide here a high-level overview, and defer the details to ~\cref{appendix_D_design}.  
Let $B$ be the edge incident matrix of some graph $G$, $L_g$ the weighted Laplacian of $G(g)$, and denote our  
objective (see remark~\ref{remark_LD_G_objective}) by 
\[ LD_G(g) := \logdet (L_g^+ + (1/n) \buni \buni^T).\] 
In~\eqref{equ_logdet_der}, we prove that 
the ER on the $l$'th edge equals  to
\begin{align*}
 R_l(g) = b_l^T L_g^+ b_l = b_l^T(B W_g B^T)^+ b_l = -\frac{\partial LD(g)}{\partial g_l}. 
\end{align*}
In other words, the gradient of $LD_G(g)$ (w.r.t $g$) equals  to the ER on the edges (up to a minus sign). 
This can be generalized to Leverage Scores of an arbitrary matrix as well: Let $V \in \R^{n \times m}$ and define $C = W_g^{1/2} V^T$, the weighted experiment matrix. Define $E_V(g) = V W_g V^T$ and $LD_V(g) = \logdet (E_V(g)^{-1}) = -\logdet (E_V(g))$. Then the $l$'th LS of $C$ equals  to
\begin{align}\label{eq_char_LS_grad}
&& \tau_l(C) = -g_l \cdot \frac{\partial LD_V(g)}{\partial g_l} && \text{see~\eqref{equ_der_LD_g_with_LS}}
\end{align}
I.e., the Leverage Scores of the weighted matrix are the weighted gradient of $LD_V(g)$ (up to a minus sign).

This characterization has the following geometric interpretation: We know that D-optimal design is equivalent to finding the minimum-volume enclosing ellipsoid on $V$. This means that the gradient of $LD_V(g)$ is proportional to the gradient of the volume of $\mathcal{E}$. Thus, \eqref{eq_char_LS_grad} implies:
\begin{lemma}
    The Leverage Scores of a matrix $V$ are equal (up to constant factors) to the gradient of the volume of the ellipsoid induced by $E_V(g)$.
    In particular, in the case of \emph{Laplacians}, the Effective Resistance on the edges of a weighted graph are equal to the gradient of the volume of the ellipsoid induced by $L_g$.
\end{lemma}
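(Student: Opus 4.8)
The plan is to read the volume of the enclosing ellipsoid directly off the objective $LD_V(g)$ and then invoke the gradient identity \eqref{eq_char_LS_grad}. By \eqref{equ_ellipsoid_semiaxis_def}, the ellipsoid $\mathcal{E}_A = \{x : x^T A x \le 1\}$ has semiaxes $\sigma_i = \lambda_i(A)^{-1/2}$, hence $\mathrm{vol}(\mathcal{E}_A) = c_n \prod_i \sigma_i = c_n\,|A|^{-1/2}$, where $c_n$ is the volume of the Euclidean unit ball in $\R^n$. The ellipsoid induced by $E_V(g) = V W_g V^T$ is $\mathcal{E} = \{x : x^T E_V(g)^{-1} x \le 1\}$ (the convention used throughout for the ellipsoid ``defined by'' a Gram-type matrix), so $\mathrm{vol}(\mathcal{E}) = c_n\,|E_V(g)|^{1/2}$ and therefore
\[
\log \mathrm{vol}(\mathcal{E}) = \log c_n + \tfrac12 \logdet E_V(g) = \log c_n - \tfrac12\, LD_V(g).
\]

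Next I would differentiate this identity in each coordinate $g_l$ and substitute \eqref{eq_char_LS_grad}. Writing $C = W_g^{1/2} V^T$, the chain rule gives
\[
\frac{1}{\mathrm{vol}(\mathcal{E})}\,\frac{\partial\, \mathrm{vol}(\mathcal{E})}{\partial g_l} = -\tfrac12\,\frac{\partial LD_V(g)}{\partial g_l} = \frac{\tau_l(C)}{2 g_l},
\]
i.e. $g_l\,\partial_{g_l}\mathrm{vol}(\mathcal{E}) = \tfrac12\,\mathrm{vol}(\mathcal{E})\cdot \tau_l(C)$. Since $\mathrm{vol}(\mathcal{E})$ is a single scalar not depending on the index $l$, this is precisely the assertion that the leverage scores of the reweighted matrix are, up to the global constant $\mathrm{vol}(\mathcal{E})/2$, the logarithmic gradient (the gradient in $\log g$) of the enclosing-ellipsoid volume; this proves the general statement.

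For the Laplacian case I would specialize $V = B$, so $E_V(g) = L_g$, $C = W_g^{1/2} B^T$, and the $l$-th leverage score is $\tau_l(C) = g_l\, b_l^T L_g^+ b_l = g_l R_l(g)$. Plugging this in and cancelling the factor $g_l$ yields $\partial_{g_l}\mathrm{vol}(\mathcal{E}) = \tfrac12\,\mathrm{vol}(\mathcal{E})\cdot R_l(g)$, so the edge effective resistances are the (ordinary) gradient of the ellipsoid volume, up to the same global constant --- consistent with Lemma \ref{lem_ER_char}.

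The one delicate point --- and the main obstacle --- is that $L_g$ is rank-deficient ($L_g \buni = 0$), so ``the ellipsoid induced by $L_g$'' and its volume are only meaningful inside $\buni^\perp$; equivalently one should work with the regularized matrix $L_g^+ + \tfrac1n \buni\buni^T = (L_g + \tfrac1n\buni\buni^T)^{-1}$ of \eqref{equ_laplac_inv} (this is exactly why $LD_G(g)$ is defined with that regularization). One then has to check that restricting to $\buni^\perp$ changes $\logdet$ only by an additive constant and does not affect the $g$-derivative, which follows from \eqref{equ_laplac_inv} together with the fact that the kernel direction $\buni$ is independent of $g$. Given \eqref{eq_char_LS_grad}, everything else is a routine chain-rule computation.
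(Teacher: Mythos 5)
Your proof is correct and follows essentially the same route as the paper: express $\log\text{vol}(\mathcal{E})$ as $-\tfrac12 LD_V(g)$ plus an additive constant, differentiate, and invoke \eqref{eq_char_LS_grad}, with the rank-deficient Laplacian handled through the $\buni\buni^T$ regularization of \eqref{equ_laplac_inv}. You also correctly flag that for a general experiment matrix the leverage scores match the gradient in $\log g$ (i.e.\ $g_l\partial_{g_l}$) of the volume rather than the ordinary $g$-gradient --- the $g_l$ factor only cancels in the Laplacian specialization $\tau_l(W_g^{1/2}B^T) = g_l R_l(g)$ --- which is a genuine imprecision in the lemma's phrasing as written.
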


In fact, for Laplacians a stronger statement holds, namely, the optimality criterion for $LD_G$ becomes   
\begin{flalign*}
&& -\frac{\partial LD_G(g)}{g_l} = R_l(g) \leq \rank(B) = n-1. && \text{see~\eqref{equ_LD_G_opt_crit}}
\end{flalign*}
Moreover, at $g=g_{\ell w}$, we saturate this condition (see equation~\eqref{equ_LD_LW_opt_saturation}), meaning that
\[
R_l(g_{\ell w}) = n-1 \ , \ \forall \ l=1\dots m
\]

Now, from the uniqueness of the optimal solution for $LD(g)$, we know that for any other normalized weight vector $g$ the optimality criteria doesn't hold. In other words, for any $\hat{g} \neq g_{\ell w} \in \R^m$, such that $\buni^T \hat{g} = 1$, there exist $l \in [m]$ such that
\[
R_l \left( \ \hat{g} \ \right) > n-1.
\]
We conclude that $\linf-$LW \emph{minimizes the maximal} ER over  edges, among all (normalized) weight-vectors.


\section{ 
ERMP Approximation via $\linf$-Lewis Weights   
}\label{sec_R_tot}

Recall the Kirchoff Index of a graph, $\cK_G^*$, is the optimal solution for the ERMP problem \eqref{equ_ERMP_def}. Throughout this section, we denote by $\cK_G^{\ell w}$ the value of ERMP at $g_{\ell w}$.  The approximation ratio obtained by $\linf$-LW is defined as 
\begin{equation}\label{def_alpha_AD}
    \alpha_{A,D} \coloneqq \frac{\cK_G^{\ell w}}{\cK_G^*}. 
\end{equation}
Recall that $g_{\ell w}$ is the optimal solution for the D-optimal design, so $\alpha_{A,D}$ exactly captures the gap of A- and D-optimal design over Laplacians. Since we are proposing $\linf$-LW as an approximation-algorithm for the ERMP, $\alpha_{A,D}$ is simply the \emph{approximation ratio} of our algorithm.
\\
This section is divided to roughly two independent results. We first focus solely on \emph{trees} thus proving \Cref{thm_LW_apx_ERMP_trees}. For trees, the ERMP gets a more simple form, and we used a designated technique for proving our theorem. The second result is an  attempt to expand to general graphs. We give a more general analysis, and showing two different upper bounds on our approximation ratio. We conjecture that they yield an $O(\log n)$-UB, and in particular implying \Cref{conj_LW_apx_ERMP}. We finish this section by showing some formulas regards this UBs.

\subsection{$\linf$-LW are $O(1)$-Approximation for Trees}

We divide this section to two parts. First, we rewrite the approximation ratio especially for trees, using the formulas for the optimal weights derived by \cite{saberi}. Then we give the technical details of the proof of \Cref{thm_LW_apx_ERMP_trees}.\\ 

\noindent We first show that for trees $g_{\ell w} = g_{uni} = (1/m)\buni$: 
\begin{claim}\label{clm_LW_tree}
    For any tree $T$, we have $g_{\ell w} = g_{uni}$.
\end{claim}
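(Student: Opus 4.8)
The plan is to reduce the claim to a single fixed-point verification. Recall the normalization $g_{\ell w} := \frac{1}{n-1}\, w_\infty(B^T)$ from \eqref{equ_LW_normalize_grpahs}, and that a tree on $n$ vertices has exactly $m = n-1$ edges, so $g_{uni} = \frac1m\buni = \frac{1}{n-1}\buni$. Hence it suffices to show $w_\infty(B^T) = \buni$, i.e. that the all-ones vector satisfies the $\linf$-LW fixed-point equation of Definition \ref{equ_inf_LW} applied to the matrix $B^T$; by the \emph{uniqueness} of $\linf$-Lewis weights asserted there, this identifies $w_\infty(B^T)$ and finishes the proof.

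So the only thing to check is that, with $W_\infty = \bI$, every row of $B^T$ has leverage score $1$, i.e. $b_l^T (B B^T)^+ b_l = 1$ for all $l \in [m]$. Note that $BB^T = L_{\buni}$ is the unweighted Laplacian of $T$, so this quantity is exactly $\tau_l(B^T) = R_l(\buni)$, the effective resistance across edge $l$ in the unweighted tree.

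I would then give two equivalent justifications that $R_l(\buni) = 1$. First, a direct electrical argument: since $T$ is a tree, every edge $l = (i,j)$ is a bridge, so the unique $i$--$j$ path in $T$ is the edge $l$ itself; injecting a unit current between $i$ and $j$ forces all of it through $l$, and since each edge has unit conductance the induced potential difference is $1$, i.e. $R_l = 1$. Alternatively, one may invoke Foster's lemma (Fact \ref{equ_LS_facts}): $\sum_{l=1}^m \tau_l(B^T) = \rank(B^T) = n-1 = m$, while $0 \le \tau_l(B^T) \le 1$ for every $l$ because $\tau(B^T)$ is the diagonal of the projection matrix $\Pi_{B^T}$; a sum of $m$ numbers in $[0,1]$ equal to $m$ forces each to equal $1$.

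Combining, $b_l^T(BB^T)^+ b_l = 1$ for all $l$, so $\buni$ solves the $\linf$-LW fixed-point equation, and uniqueness gives $w_\infty(B^T) = \buni$, hence $g_{\ell w} = \frac{1}{n-1}\buni = \frac1m\buni = g_{uni}$. There is essentially no technical obstacle here — the whole content is the observation that every edge of a tree is a bridge (equivalently, lies in the unique spanning tree, so has leverage score $1$), after which the statement is just bookkeeping with the normalization constants; the real work begins in the subsequent analysis, where the ERMP optimum on trees is no longer uniform and must be compared against this uniform $\linf$-LW solution.
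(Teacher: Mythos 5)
Your proof is correct, and its second justification (Foster's lemma plus the pointwise bound $\tau_l \le 1$, forcing all leverage scores to equal $1$) is exactly the paper's argument, merely repackaged: the paper applies that counting argument directly to the (a priori unknown) Lewis weight vector $w_\infty$, using the two facts $w_\infty \le \buni$ and $\sum_l (w_\infty)_l = \rank(B^T) = m$, while you instead verify that $\buni$ satisfies the fixed-point equation and then invoke uniqueness. These two framings are logically equivalent and both rest on the same arithmetic. Your first justification — that every edge of a tree is a bridge, so $R_l(\buni) = 1$ by Kirchhoff's laws — is a genuinely more elementary and combinatorially transparent route that the paper does not state; it sidesteps Foster's lemma entirely and makes the content of the claim visible at a glance. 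Either path is fine, and including both is a nice touch, but it is worth noting that the electrical one generalizes less readily (it leans on the tree structure twice: once for $m=n-1$ and once for bridges), whereas the Foster's-lemma counting is the version the paper exports to its discussion of $\linf$-LW for general graphs.
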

\begin{proof}
    For simplicity, let's look first at the non-normalized weights, $w_\infty = w_\infty(B^T)$. We know that $w_\infty \leq \buni$ (equation~\eqref{equ_LS_facts}). Moreover, we have by~\eqref{equ_LS_facts} that 
    \begin{align*}
        \sum_{l=1}^m (g_{\ell w})_l = \underbrace{n-1 = m}_\text{for trees}.
    \end{align*}
    Thus, having $m$ positive numbers, which at most $1$ and sums up to $m$, inevitably are all equal to $1$. Hence,
    $ (g_{\ell w})_l = 1 \ , \forall \ l=1 \dots m$, so normalizing by $n-1 = m$ we  get $g_{\ell w}=g_{uni}$, as claimed.

\end{proof}

Next, we restate some key formulas from \cite{saberi} for the ERMP on trees. First, when the graph is a tree, for any two vertices $i,j$, the ER between them is
\[
R_{ij}(g) = \sum_{e_l \in P(i,j)} \frac{1}{g_l}
\]
This is clear from an electrical network POV\footnote{i.e The ER of resistors in series is additive.}. In other words, the $l$'th edge contributes $(1/g_l)$ to all the paths that contain it. With this in mind, we define the \emph{congestion} of the $l$'th edge as follows:

\begin{definition}[congestion of an edge]\label{def_congestion}
The congestion of an edge $e_l \in T$, denoted $c_l(T)$, is the number of paths in $T$ that contain it.
\end{definition}
It's easy to see that 
\[
c_l =  n_l(n-n_l) ,
\]
where $n_l$ is the number of nodes in the \emph{sub-tree} of one side of $e_l$ (this is symmetric). Note that from the concavity of the function $f(x) = x(n-x)$, the minimal congestion is achieved at the leaves of $T$ ($1(n-1) = m$), whereas the maximal congestion is 
$\frac{n}{2}(n-\frac{n}{2}) = (n/2)^2$. 
More generally, 
\begin{align*}
f(x) = x(n-x) < y(n-y) = f(y) \iff \min(x,n-x) < \min(y,n-y). \numberthis \label{equ_cong_relation}
\end{align*}
With this definition and the above fact, we get that for any tree $T$, and for any weight-vector $g \in \R^m$, 
\[
\cK_T(g) = \sum_{l=1}^m \frac{c_l}{g_l} \;\; .
\]
In addition, \cite{saberi} (Section 5.1) proved that, for trees, the optimal ERMP solution is  
\begin{equation}\label{equ_opt_tree}
    \cK_T^* = \left(\sum_{l} c_l^{1/2} \right)^2, 
\end{equation}
and that the closed-form expression for the optimal weights is 
\begin{equation}\label{equ_opt_g_trees}
    g^*_l = (c_l/\cK^*)^{1/2} = \frac{c_l^{1/2}}{\sum_{k} c_k^{1/2}} \  \ \ \ \forall l=1,\dots,m.
\end{equation}
Using the last expressions, it immediately follows that 
\begin{equation}\label{equ_cong_sum}
    c_l = \cK^* \cdot (g^*_l)^2 \implies \sum_l c_l = \cK^* \cdot \sum_l (g^*_l)^2 =  \cK^* \cdot ||g^*||_2^2 . 
\end{equation}
Lastly, our proof will use the fact that the optimal weights are at most $1/2$. Indeed,
\begin{align*}
g^*_{max} \leq \frac{c_{max}^{1/2}}{\min \{\sum_l c_l^{1/2} \}} \leq \frac{n/2}{m\sqrt{m}} \leq \frac{1}{\sqrt{m}} \ll \frac{1}{2}, \numberthis \label{equ_g_star_UB}
\end{align*}
when we used the bounds from \eqref{equ_cong_relation}.\\

Using the fact that $g_{\ell w}= g_{uni}$ (claim~\ref{clm_LW_tree}), we can write $\cK^{\ell w}_T$ as:
\[
\cK_T^{\ell w} =\sum_{l=1}^m \frac{c_l}{(g_{\ell w})_l} = \sum_{l=1}^m \frac{c_l}{1/m} = m\sum_{l} c_l  
\]
Thus, for any tree $T_n$ of order $n$, we can write the approximation ratio in terms of the congestion as:
\begin{equation}\label{equ_apx_ratio_trees}
    \alpha_{A,D}(T_n) = \alpha(T_n) = \frac{m\sum\limits_{l} c_l}{\left(\sum\limits_{l} c_l^{1/2} \right)^2} 
\end{equation}

We shall also use the following claim 
in the proof of Theorem \ref{thm_LW_apx_ERMP_trees} :
\begin{claim}\label{clm_apx_ratio_subtree}
    Let $T,T'$ be two trees such that $T \subset T'$. Then $\alpha(T) \leq \alpha(T')$.
\end{claim}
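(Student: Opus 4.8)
The plan is to reduce to a single leaf addition and then verify the resulting scalar inequality by a careful (but elementary) estimate using the tree formula \eqref{equ_apx_ratio_trees}.

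\textbf{Reduction to attaching one leaf.} Since $T$ is a connected subgraph of the tree $T'$ it is an induced subtree (an edge of $T'$ between two vertices of $T$ not already in $T$ would create a cycle). Each connected component $C$ of $T'-V(T)$ is joined to $T$ by exactly one edge --- two such edges would close a cycle --- so $C$ has a well-defined root $r_C$. Listing $V(T')\setminus V(T)$ one component at a time, each component in breadth-first order from its root, gives an ordering $v_1,\dots,v_k$ in which every $v_i$ has precisely one neighbour in $V(T)\cup\{v_1,\dots,v_{i-1}\}$. Hence $T=T_0\subset T_1\subset\cdots\subset T_k=T'$, where $T_i:=T'[V(T)\cup\{v_1,\dots,v_i\}]$ and each $T_{i+1}$ is $T_i$ with one new leaf; by transitivity it suffices to treat $T'=T+v$ for a single new leaf $v$ attached at $u\in V(T)$.

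\textbf{Setting up the one-leaf step.} Let $n=|V(T)|$ and $m=n-1=|E(T)|$. For $l\in E(T)$ let $\delta_l$ be the number of vertices on the side of $l$ \emph{not} containing $u$, so by Definition \ref{def_congestion}, $c_l(T)=\delta_l(n-\delta_l)$ with $1\le\delta_l\le n-1$. Attaching $v$ at $u$ adds the single vertex $v$ to the $u$-side of every edge of $T$, so $c_l(T')=\delta_l(n+1-\delta_l)=c_l(T)+\delta_l$ for $l\in E(T)$, and creates one new leaf-edge $f=(u,v)$ with $c_f(T')=1\cdot n=n$. Writing $A=\sum_{l\in E(T)}c_l(T)$, $B=\sum_{l\in E(T)}\sqrt{c_l(T)}$, $D=\sum_{l\in E(T)}\delta_l$ and $B'=\sum_{l\in E(T)}\sqrt{c_l(T)+\delta_l}$, formula \eqref{equ_apx_ratio_trees} gives $\alpha(T)=(n-1)A/B^2$ and $\alpha(T')=n(A+D+n)/(B'+\sqrt{n})^2$, so the claim becomes the scalar inequality
\begin{equation*}
n\,(A+D+n)\,B^2 \;\ge\; (n-1)\,A\,\bigl(B'+\sqrt{n}\,\bigr)^2 .
\tag{$\star$}
\end{equation*}

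\textbf{Proving $(\star)$ --- the main obstacle.} This last step is the crux. The useful elementary facts are: the $u$-side of $l$ has at least one vertex, so $\delta_l\le c_l$, hence $D\le A$; dually $A=\sum_l\delta_l(n-\delta_l)\le(n-1)\sum_l\delta_l=(n-1)D$; $c_l\ge n-1$ for every edge (minimum at a leaf), so $B\ge(n-1)^{3/2}$; and $\sqrt{c_l}\le\sqrt{c_l+\delta_l}\le\sqrt2\,\sqrt{c_l}$, so via $\sqrt{c_l+\delta_l}-\sqrt{c_l}=\delta_l/\bigl(\sqrt{c_l+\delta_l}+\sqrt{c_l}\bigr)$ one can control $B'$ through $B$ and the auxiliary sum $\Theta:=\sum_l\delta_l/\sqrt{c_l}=\sum_l\sqrt{\delta_l/(n-\delta_l)}\le B$. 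The combinatorial input that makes $(\star)$ go through is the tree constraint on the $\delta_l$: they are subtree sizes relative to the fixed vertex $u$, so they cannot all be large --- only the leaf-edge of $u$, when $u$ is a leaf, can have $\delta_l=n-1$ --- which forces $\Theta$ to be strictly smaller than $B$ except in degenerate cases. The obstacle is that $(\star)$ is essentially tight: equality holds for stars and short paths ($\alpha(S_k)=\alpha(S_{k+1})=1$), and already $P_{10}\subset P_{11}$ has $\alpha$ rising only from $\approx 1.030$ to $\approx 1.033$; consequently, bounding $\cK^*_{T'}$ from above by a crude feasible weighting (a rescaling of $g^*(T)$ plus mass on $f$) is too lossy, and one must compare the multiplicative growth of $\cK^{\ell w}$ against that of $\cK^*$ directly, leaving essentially no slack in the Cauchy--Schwarz steps.
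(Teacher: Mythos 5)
Your reduction to a single leaf addition is correct, and your scalar reformulation $(\star)$ is exactly what must be shown; but the proposal never establishes $(\star)$. The third paragraph collects candidate estimates ($D\le A$, $A\le (n-1)D$, $B\ge (n-1)^{3/2}$, controlling $B'-B$ via $\Theta$) and observes, correctly, that they leave no slack because $(\star)$ is tight at stars and nearly tight for short paths, but it stops there: the crux inequality is diagnosed, not derived. As written, this is a well-posed reduction together with an accurate assessment of why the obvious bounds fail, not a proof.

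For what it is worth, the paper's own argument for this claim is also flawed, so the difficulty you ran into is real and not an artifact of your route. The paper asserts $\cK^{\ell w}_T < \cK^{\ell w}_{T'}$ (true) and $\cK^*_{T'}\le \cK^*_T$, citing Ghosh--Boyd--Saberi with the parenthetical ``the optimal weight for $T$ is feasible for $T'$''. That monotonicity is valid when $T$ and $T'$ share a vertex set and differ by edge deletions, where one may extend $g^*(T)$ by zeros; here $T'$ has extra \emph{vertices}, a zero extension disconnects $T'$, and in fact the reverse inequality holds strictly: for every $l\in E(T)$ one has $c_l(T')\ge c_l(T)$, and $T'$ has additional edges, so by \eqref{equ_opt_tree}, $\cK^*_{T'}=\bigl(\sum_{l\in E(T')}\sqrt{c_l(T')}\bigr)^2 > \bigl(\sum_{l\in E(T)}\sqrt{c_l(T)}\bigr)^2 = \cK^*_T$ (for instance the paths $P_2\subset P_3$ have $\cK^*_{P_2}=1<8=\cK^*_{P_3}$). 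Thus both numerator and denominator of $\alpha$ grow, the ``easy'' monotonicity argument collapses, and one is forced into precisely the kind of growth comparison you set up in $(\star)$ --- which still needs to be carried through.
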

\begin{proof}
    Recall the definition of $\alpha(T)$:
    \[
        \alpha(T) = \frac{\cK^{\ell w}_T}{\cK^*_T} = \frac{m \cdot \sum_l c_l}{\cK^*_T}
    \]
    Clearly, $\cK^{\ell w}_T < \cK^{\ell w}_{T'}$ (we are only adding positive terms). Moreover, \cite{saberi} showed that 
    \[
    T \subset T' \implies \cK^*_{T'} \leq \cK^*_T \;\; .
    \]
    (this can be seen as the optimal weight for $T$ is feasible for $T'$). From this, it's easy to see that
    \[
        \alpha(T) \leq \alpha(T')
    \]
\end{proof}






\subsubsection{Proof of Theorem \ref{thm_LW_apx_ERMP_trees}}\label{sec_prf_LW_apx_trees}

We begin by showing when an LT increases the approximation ratio. 
\begin{proof}[Proof of Claim~\ref{clm_apx_ratio_ET}]
\normalfont
    Denote by $T' = E_k \circ T$, $c_k' = c_k(T')$ (the rest of the congestions remains the same).
    Recall that $\alpha = (m\sum_l c_l)/ \left(\sum_l c_l^{1/2}\right)^2$, so let's write the modified numerator and denominator. 
    \begin{align*}
        &\sum_{l} c_l' = \sum_{l} c_l + (c_k' - c_k). \\
        &\sum_{l} c_l'^{1/2} = \sum_{l} c_l^{1/2} +(c_k'^{1/2} - c_k^{1/2}).
    \end{align*}
    
    We get that,
    \begin{align*}
        \alpha(T) \leq \alpha(T') &\iff \frac{\sum_{l} c_l}{\left(\sum_l c_l^{1/2}\right)^2} \leq \frac{\sum_{l} c_l + (c_k' - c_k)}{\left(\sum_{l} c_l^{1/2} +(c_k'^{1/2} - c_k^{1/2})\right)^2} \\
        &\iff \frac{\sum_{l} c_l}{\left(\sum_l c_l^{1/2}\right)^2} \leq \frac{\sum_{l} c_l + (c_k' - c_k)}{\left(\sum_{l} c_l^{1/2}\right)^2 +\left(c_k'^{1/2} - c_k^{1/2}\right)^2 + 2\left(\sum_{l} c_l^{1/2}\right) \left(c_k'^{1/2} - c_k^{1/2}\right) } \\
    \end{align*}
    
    Using the fact that, assuming the denominators are positive (which true in our case),
    \[
    \frac{A}{B} \leq \frac{A+D}{B+C} \iff AC \leq BD,
    \]
    we get that
    \begin{align*}
        \alpha(T) \leq \alpha(T') &\iff \left(\sum_l c_l \right)\left[ \left(c_k'^{1/2} - c_k^{1/2}\right)^2 + 2\left(\sum_{l} c_l^{1/2}\right) \left(c_k'^{1/2} - c_k^{1/2}\right) \right]  \leq (c_k' - c_k)\left(\sum_{l} c_l^{1/2}\right)^2 \\
        &\iff \left(\sum_l c_l \right) \left(c_k'^{1/2} - c_k^{1/2}\right) \left( 2\sum_{l} c_l^{1/2} + c_k'^{1/2} - c_k^{1/2} \right)  \leq (c_k'^{1/2} - c_k^{1/2}) (c_k'^{1/2} + c_k^{1/2})\left(\sum_{l} c_l^{1/2}\right)^2 \\
    \end{align*}

    We divide the rest of the proof to two cases.\\
    
    \textbf{Case 1:} $E_k$ is an upper LT. By definition, $c_k' > c_k$. Then, $c_k'^{1/2} - c_k^{1/2} > 0$, so we can divide by the latter to get, 
    \begin{align*}
        \alpha(T) \leq \alpha(T') &\iff \left(\sum_l c_l \right) \left( 2\sum_{l} c_l^{1/2} + c_k'^{1/2} - c_k^{1/2} \right)  \leq (c_k'^{1/2} + c_k^{1/2})\left(\sum_{l} c_l^{1/2}\right)^2 \\
        &\iff ||g^*||_2^2 \left( 2\frac{c_k^{1/2}}{g^*_k} + c_k'^{1/2} - c_k^{1/2} \right) \leq (c_k'^{1/2} + c_k^{1/2}) && \text{using~\eqref{equ_cong_sum}, and \eqref{equ_opt_g_trees} for the $k$'th edge.}\\
        &\iff  2\frac{c_k^{1/2}}{g^*_k} - c_k^{1/2}\left(1 + \frac{1}{||g^*||_2^2} \right) \leq c_k'^{1/2}\left(\frac{1}{||g^*||_2^2} -1\right) && \text{group terms by $c_k^{1/2}$ and $c_k'^{1/2}$.}
    \end{align*}
   
    In our case, $c_k'/c_k > 1$, so after dividing both sides by $c_k^{1/2}$, it's sufficient to require
    \begin{align*}
        \frac{2}{g^*_k} - 1 - \frac{1}{||g^*||_2^2}  < \frac{1}{||g^*||_2^2} -1 \iff g_k^* > ||g^*||_2^2 .
    \end{align*}
    Thus we get that,
    \[
     g_k^* > ||g^*||_2^2 \implies \alpha(T) \leq \alpha(T').
    \]

    \textbf{Case 2:} $E_k$ is a lower LT . Using the same arguments we get that,
    \[
    g_k^* \leq ||g^*||_2^2 \implies \alpha(T) \leq \alpha(T')
    \]
    Unifying the two cases completes the proof. 
\end{proof}

Remember that we defined the following partition:
\[
E_<(T) \coloneqq \{ l \ \mid \ g_l^*(T) \leq ||g^*(T)||_2^2 \} \ , \ E_>(T) \coloneqq \{ l \ \mid \ g_l^*(T) > ||g^*(T)||_2^2 \}
\]

We now show the key feature of this partition -- $E_<$ and $E_>$ are \emph{invariant} under upper and lower LT:
\begin{claim}\label{clm_invariant_LT}
    For any tree $T$, and $k \in E_>$ we have that,
    \[
    E_>(E_k^{\uparrow} \circ T) = E_>(T).
    \]
    and vice versa for $E_<$.
\end{claim}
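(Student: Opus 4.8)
The plan is to collapse the two-set membership test into a comparison against a single scalar, and then to show that a local transformation moves that scalar only ``in the right direction'' and only by a controlled amount. Using the closed forms \eqref{equ_opt_g_trees} and \eqref{equ_cong_sum}, we have $\sqrt{\cK^*} = \sum_j c_j^{1/2}$, $g_l^*(T) = c_l^{1/2}/\sum_j c_j^{1/2}$ and $||g^*(T)||_2^2 = \big(\sum_j c_j\big)/\big(\sum_j c_j^{1/2}\big)^2$. Dividing $g_l^*(T)$ by $||g^*(T)||_2^2$, the defining inequality $g_l^*(T) \le ||g^*(T)||_2^2$ becomes
\[
l \in E_<(T) \iff c_l^{1/2} \le \mu(T), \qquad\text{where}\qquad \mu(T) := \frac{\sum_{j} c_j}{\sum_{j} c_j^{1/2}},
\]
and symmetrically $l \in E_>(T) \iff c_l^{1/2} > \mu(T)$. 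So the partition just records, for each edge, whether $c_l^{1/2}$ lies below or above the common threshold $\mu(T)$, and Claim~\ref{clm_invariant_LT} amounts to: an upper LT at $k\in E_>(T)$ raises $\mu$, but not past any $c_l^{1/2}$ that was strictly above it (and the mirror statement for a lower LT at $k\in E_<(T)$).

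The first step is monotonicity of $\mu$. Since an LT $E_k$ leaves $c_l$ unchanged for $l\ne k$ by \eqref{equ_LT_def}, set $P := \sum_{l\ne k} c_l$ and $Q := \sum_{l\ne k} c_l^{1/2}$ (both invariant), so that $\mu$ is a function of $t := c_k^{1/2}$ alone, $\mu = (P + t^2)/(Q + t)$, with $\mu'(t) = (t^2 + 2Qt - P)/(Q+t)^2$. The hypothesis $k \in E_>(T)$ says $t_0 := c_k^{1/2} > \mu(T)$, which rearranges to $t_0 Q > P$; this makes $\mu'(t_0)>0$ and, since $tQ>P$ persists as $t$ grows, $\mu$ is strictly increasing on the whole increment performed by $E_k^\uparrow$, so $\mu(E_k^\uparrow\circ T) > \mu(T)$. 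Two inclusions follow at once: (i) every $l$ with $c_l^{1/2}\le\mu(T)$ still has $c_l^{1/2}\le\mu(E_k^\uparrow\circ T)$, so $E_<(T)\subseteq E_<(E_k^\uparrow\circ T)$; and (ii) the edge $k$ itself stays in $E_>$, because $c_k$ only grew and $tQ>P$ still holds afterward, which is exactly $c_k(E_k^\uparrow\circ T)^{1/2} > \mu(E_k^\uparrow\circ T)$.

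The remaining — and, I expect, the hard — direction is to show $E_>(T)\setminus\{k\} \subseteq E_>(E_k^\uparrow\circ T)$: no edge $l$ with $c_l^{1/2} > \mu(T)$ may drop to or below the raised threshold. Writing $c_k' := c_k(E_k^\uparrow\circ T)$, the target inequality $c_l^{1/2}(Q + c_k'^{1/2}) > P + c_k'$ rearranges to $c_l^{1/2}Q - P > c_k'^{1/2}\big(c_k'^{1/2} - c_l^{1/2}\big)$, while the hypothesis $l\in E_>(T)$ only gives $c_l^{1/2}Q - P > c_k^{1/2}\big(c_k^{1/2} - c_l^{1/2}\big)$. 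These differ on the right-hand side, and indeed $\mu$ \emph{would} overtake a fixed $c_l^{1/2}$ if $c_k$ were allowed to grow without bound; so the invariance is not formal — it uses the \emph{local} nature of $E_k^\uparrow$, namely that it bumps $c_k$ to the next attainable congestion value rather than to an arbitrary larger one. I would finish by invoking that precise definition to pin down $c_k'$ in terms of $T$ (congestions of a tree are products $a\,(n-a)$, and an LT changes this by a single step), and then checking directly that with this $c_k'$ the right-hand side above is still dominated by $c_l^{1/2}Q-P$; equivalently, that the jump $\mu(E_k^\uparrow\circ T) - \mu(T)$ never reaches $\min\{c_l^{1/2} : l\in E_>(T)\} - \mu(T)$.

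Finally, the ``vice versa'' half is the mirror image: for a lower LT $E_k^\downarrow$ at $k \in E_<(T)$ we have $t_0 := c_k^{1/2} \le \mu(T)$, i.e. $t_0 Q \le P$, so the same formula gives $\mu'(t)\le 0$ on the (now decreasing) increment, whence $\mu(E_k^\downarrow\circ T) \le \mu(T)$; this yields $E_>(T)\subseteq E_>(E_k^\downarrow\circ T)$ and keeps $k$ in $E_<$, and the leftover direction — no edge of $E_<(T)\setminus\{k\}$ rising above the lowered threshold — again reduces to the minimality of the lower LT. In each case we end up with both inclusions $E_<(T)\subseteq E_<(T')$ and $E_>(T)\subseteq E_>(T')$ where $T' = E_k\circ T$; since $E_<(T)\sqcup E_>(T) = E = E_<(T')\sqcup E_>(T')$, both inclusions must be equalities, which is exactly Claim~\ref{clm_invariant_LT}.
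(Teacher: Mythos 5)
Your $\mu$-threshold reformulation --- $l \in E_<(T) \iff c_l^{1/2} \le \mu(T)$, with $\mu(T) = \bigl(\sum_j c_j\bigr)/\bigl(\sum_j c_j^{1/2}\bigr)$ --- is correct and noticeably cleaner than the paper's direct comparison of $g^*_l$ to $||g^*||_2^2$, and the closing observation that the two one-sided inclusions $E_<(T)\subseteq E_<(T')$ and $E_>(T)\subseteq E_>(T')$ force equality by disjointness is a nice reduction the paper does not make explicit. For the upper LT your analysis is right: with $P,Q$ fixed, $\mu'(t) = (t^2+2Qt-P)/(Q+t)^2$, and $t_0 > \mu(T)$ (i.e.\ $t_0Q>P$) makes $\mu'>0$ on the whole increment, so $\mu$ rises, which delivers $E_<(T)\subseteq E_<(T')$ and keeps $k$ in $E_>$ exactly as you say.

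Two substantive issues remain. In the lower-LT case, ``$t_0 Q \le P$ gives $\mu'(t)\le 0$'' is false: $\mu'(t_0)\le 0$ requires the strictly stronger $t_0(t_0+2Q)\le P$, and at the boundary $t_0=P/Q$ one in fact gets $\mu'(t_0)=t_0/(Q+t_0)>0$; moreover, even granting $\mu'\le 0$, a \emph{decreasing} $t$ would yield $\mu(T')\ge\mu(T)$, the reverse of what you need --- so this half has both an unjustified step and a sign error, and needs to be rethought rather than dismissed as ``the mirror image.'' Second, the gap you flag ($E_>(T)\setminus\{k\}\subseteq E_>(T')$: no edge slips below the raised threshold) is real, and your diagnosis is correct that monotonicity of $\mu$ alone cannot close it --- one must invoke the discreteness of the LT, since $\mu$ would certainly overtake some $c_l^{1/2}$ if $c_k$ were allowed to grow arbitrarily. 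Worth noting: the paper's own proof does not address this either. It establishes only that the modified edge $k$ itself remains in $E_>(T')$ (and, by the asserted symmetry, in $E_<(T')$ for the lower case), which is strictly weaker than the claimed set equality $E_>(T')=E_>(T)$ and says nothing about the remaining edges; so the hole you point at is a genuine hole in the paper's argument as written, not merely a loose end in your own.
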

\begin{proof}
\normalfont
    We will only prove this for $E_>$ as the proof for $E_<$ is symmetric. Let $k \in E_>(T)$. Denote by $T'=E_k^{\uparrow} \circ T$. We first show that $g^*_k(T') > g^*_k(T)$. From equation~\eqref{equ_opt_g_trees} we know that
    \[
    g_l^*(T) = \left(\frac{c_l(T)}{\cK^*_T}\right)^{1/2} \ , \ g_l^*(T') = \left(\frac{c_l(T')}{\cK^*_{T'}}\right)^{1/2}
    \]
    In addition, from the definition of upper LT, the congestion on the $k$'th edge increases, and doesn't change on any other edge, meaning that:
    \[
    \cK^*_{T'} = \left(\sum_l c_l(T')^{1/2} \right)^2 > \left(\sum_l c_l(T)^{1/2} \right)^2 = \cK^*_T.
    \]
    
    Combining the last two equations ,implies that for all $l \neq k$ we have:
    \begin{equation}\label{equ_g_l_upper_LT}
            c_l(T') = c_l(T) \implies \frac{g_l^*(T)}{g_l^*(T')} = \frac{\cK^*_{T'}}{\cK^*_T} > 1 \implies g_l^*(T') < g_l^*(T).
    \end{equation}
    
    Recall that $\buni^T g^* = 1$, so,
    \begin{equation}\label{equ_g_k_upper_LT}
        g_k^*(T') = 1-\sum_{l\neq k} g^*_l(T') > 1 - \sum_{l\neq k} g^*_l(T) = g_k^*(T). 
    \end{equation}
    
    We need to show that $k \in E_>(T')$, i.e
    \[
    g_k^*(T') > ||g^*(T')||_2^2
    \]
    
    Using~\eqref{equ_g_k_upper_LT}, we can write:
    \begin{align*}
        g_k^*(T') &= g_k^*(T) + (g_k^*(T') - g_k^*(T)) \\
        &> \sum_{l \neq k} g_l^*(T)^2 + g_k^*(T)^2 +(g_k^*(T') - g_k^*(T)) && \text{since $k \in E_>(T)$} \\
        &> \sum_{l \neq k} g_l^*(T')^2 + g_k^*(T)^2 +(g_k^*(T') - g_k^*(T)) && \text{using~\eqref{equ_g_l_upper_LT}} \\
        &= \sum_{l \neq k} g_l^*(T')^2 + g_k^*(T')^2 + \left(g_k^*(T') - g_k^*(T) + g_k^*(T)^2 - g_k^*(T')^2 \right) \\
        &> \sum_{l} g_l^*(T')^2 = ||g^*(T')||_2^2
    \end{align*}
    The last inequality is because $\forall l , \ g^*_l < 1/2$ (see~\eqref{equ_g_star_UB} above), and for any two number $a,b$ such that $a-b>0 , a+b<1$ we have that:
    \[
    a-b > (a-b)(a+b) = a^2 - b^2 \implies (a-b)-(a^2-b^2) >0.
    \]
    Using this fact with $a=g_k^*(T') \ , \ b=g_k^*(T)$ concludes the proof.
\end{proof}

So far we showed that if there exist an LT as defined in~\eqref{equ_upper_lower_LT} then the sets $E_<, \ E_>$ are invariant under the matching transformations, and the approximation ratio of the transformed tree is larger than the original tree. Using these properties, we can repeatedly apply an LT (until saturation), and the final tree is guaranteed to be the hardest instance. So, we are ready to describe the main process of this section. We begin with arbitrary tree $T$ of order $n$. We partition its edges to two sets $E_<(T),E_>(T)$ as defined above. We know that this two sets are invariant under lower and upper LT - $E^{\uparrow} , E^{\downarrow}$ - so we can repeatedly apply them on both sets until we have $E \circ T = T$. Furthermore, from claim~\ref{clm_apx_ratio_ET} we can say that for the final tree $\Tilde{T}$, $\alpha(\Tilde{T})$, is an UB for the approximation ratio of the original tree. It is left to explicitly define $E^{\uparrow} , E^{\downarrow}$ and compute $\alpha(\Tilde{T})$.
\\
\begin{figure}
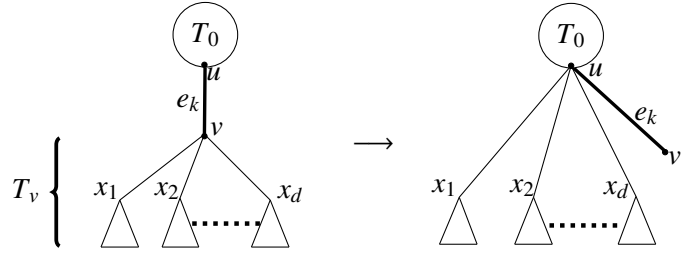

    \centering
    \tikzfig{figs/fig_lower_ET}
    \caption{Description of lower LT }
    \label{fig_lower_ET}
\end{figure}

We first define $E^{\downarrow}$. We are assuming that the tree is rooted, and "up" and "down" directions are well defined. Let $k \in E_<(T)$ be a non leaf edge (meaning its congestion is greater than $m$). Let $e_k = (u,v)$. Denote by $T_v$ the downward sub tree rooted at $v$. Then $E^{\downarrow}_k$ is to take $T_v$ and "ascend" it to $u$ (see Figure~\ref{fig_lower_ET}). Clearly, for any $l \neq k$ , $c_l(E^{\downarrow}_k \circ T) = c_l(T)$ so $E^{\downarrow}_k$ is an LT. Furthermore $c_k(E^{\downarrow}_k \circ T) = m < c_k(T)$ 
 \eqref{equ_cong_relation}, so $E^{\downarrow}_k$ is indeed a lower LT. In other words, applying $E^{\downarrow}_k$ on all $k \in E_<(T)$ simply turns all of them into leafs in $T'$. So, we first apply $E^{\downarrow}_k$ on all edges $k \in E_<(T)$, leaving us with $T'$ such that for any $l \in E_<(T')$ , $e_l$ is a leaf edge. So from now on we will assume that all edges in $E_<(T)$ are leaf edges.  
\\

Next, we define $E^{\uparrow}$. We can focus on the sub-tree, $T_>$, spanned by $E_>(T)$ - all the other edges are leafs, so the structure of both trees is the same. Let $k \in E_>(T)$. Denote $e_k = (u,v)$.  If $d_{T_>}(u) , d_{T_>}(v) \leq 2 $, do nothing. We emphasize that $d(u)$ is the degree of $u$ in the \textbf{sub-tree} $T_>$ - i.e number of edges of $u$ that in $E_>$. We now divide the transformation to two cases: \\
\\
\textbf{Case 1:} $d_{T_>}(u) , d_{T_>}(v) > 2 $ (see Figure~\ref{fig_upper_ET_case1}). Let $n_v = |T_v| $. w.l.o.g. assume that $n_v \leq n/2$. Denote by $l_i = (v,x_i)$ for $i=1,...,d(v)$ such that $|T_{x_1}| = n_1 \leq n_2 \leq ... \leq n_{x_{d(v)}}$. Then, $E^{\uparrow}_k$ is to change $(v,x_2)$ to $(x_1,x_2)$.
Clearly the congestion changes only on $l_1'=(x_1,x_2)$. But, since $n_1,n_2 \leq n_1+n_2 \leq n_v \leq n/2$, then
\begin{align*}
    \min&(n_1,n-n_1) = n_1 < n_1+n_2 = \min(n_1+n_2,n-(n_1+n_2)) \\
    &\implies \underbrace{b_1(T) = n_1(n-n_1)}_{\text{congestion on $l_1$}} < \underbrace{(n_1+n_2)(n-(n_1+n_2)) = b_1(E^{\uparrow}_k \circ T)}_\text{congestion on $l_1'$}.
\end{align*}

So, indeed $E^{\uparrow}_k$ is an upper LT\footnote{While $E^{\uparrow}_k$ is indeed defined by $k$ it actually changes the congestion on $l_1$. While this is somewhat ambiguous, since $l_1 \in E_>$ it doesn't affect the proof, so we abuse this notation.}. \\
\\
\textbf{Case 2:} w.l.o.g.  $d(u)=2 , \ d(v) >  2$ (see Figure~\ref{fig_upper_ET_case2.1}).  
Similar to the first case, let $l' = (u,y) \ , \ l_i = (v,x_i)$ for $i=1,...,d(v)$ such that $|T_{x_1}|  = n_1 \leq n_2 \leq ... \leq n_{x_{d(v)}} \ , \ n' = n-|T_v|$.\\

\textbf{Case 2.1:} If $n_1 \leq n'$, then once again $E^{\uparrow}_k$ is to change $(v,x_2)$ to $(x_1,x_2)$.
The congestion changes only on $l_1'=(x_1,x_2)$, but $n_1 < n/(d(v)-1) \leq n/2$ and $n_1<n_i,n'$, so
\begin{align*}
    \min&(n_1,n-n_1) = n_1 < \min (n_1+n_2 , 1+n' + n_3+\dots)\\
    &\implies b_1(T) = n_1(n-n_1) < (n_1+n_2)(n-(n_1+n_2)) = b_1(E^{\uparrow}_k \circ T),
\end{align*}

and, indeed, $E^{\uparrow}_k$ is an upper LT.
\\

\textbf{Case 2.2:} If $n' < n_1$, then $E^{\uparrow}_k$ is to take $(v,x_1)$ with $T_{x_1}$ and move it to $(u,x_1)$ (see Figure~\ref{fig_upper_ET_case2.2}). The congestion changes only on $e_k=(u,v)$. But $n' < n/(d(v)-1) \leq n/2 \ , \ n' < n_1 \leq n_i$ so,
\begin{align*}
    \min&( n',n-n') = n' < \min ( n_1+n' , 1+n_2+\dots)\\
    &\implies c_k(T) = n'(n-n') <  (n'+n_1)(n-(n'+n_1)) = c_k(E^{\uparrow}_k \circ T),
\end{align*}
so, again, we get that $E^{\uparrow}_k$ is an upper LT.
\\

\begin{remark}
    Note that at the end of case \textbf{(2.2)} we either move to case \textbf{(1)} (if $d(v) > 3$) or to case \textbf{(2.1)} (if $d(v) =3$), so we can focus only on the first type of upper LT (first two cases). Now, for this type, the degree of $v$ after the transformation is strictly smaller than before, so this process will terminate at some point and we guarantee to end up with $d(v) = 2$ for all $v \in E_>(T)$ - i.e \emph{path graph} (of edges in $E_>$).
\end{remark}  

\begin{figure}[t]
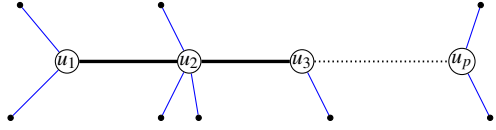

    \centering
    \tikzfig{figs/fig_path_with_exits}
    \caption{Stem graph with branches. The "black dots" are leafs, the black edges are from $E_>$, and the blue edges from $E_<$.}
    \label{fig_path_exits}
\end{figure}

We can finally describe the final tree $\Tilde{T}$. As mentioned earlier, we apply lower LT until all edges from $E_<$ are leafs. In addition, we apply upper LT until we get a single path of edges from $E_>$. Thus the final tree has a "stem" structure (i.e path) of size $p=|E_>(T)|$ with $s=|E_<(T)|$ "branches" along the way (see \Cref{fig_path_exits}). At this point it should be clear that "pushing aside" branches to the closest side would increase the approximation ratio, since this increases the bottleneck in the tree. Formally speaking, pushing leafs to the end will increase the congestion along the path which is again an upper LT, thus increasing the approximation ratio as well (remember that edges on the path are from $E_>$). So the final tree $\Tilde{T}$ is simply
\[
\Tilde{T} = \mathcal{B}_{s_1,p,s_2}
\]
where $s_1+s_2 = s$ , and $\mathcal{B}_{t,p,s}$ is the \emph{Bowtie} graph (see \Cref{fig_tps_tree}).\\

All that's left is to compute $\alpha(\mathcal{B}_{t,p,s})$. We know that $T \subset T' \implies \alpha(T) \leq \alpha(T')$ (see claim~\ref{clm_apx_ratio_subtree}), so it's sufficient to prove for $\mathcal{B}_{n,n,n}$ ($n$ is the size of the original tree). While we can do it explicitly, we will use a much easier calculation using an UB we derive ahead for general graphs, giving us an UB of $\sim 3$, and concluding the proof (see lemma~\ref{lem_a2_bowtie} for further details).

\subsection{ERMP for General Graphs}\label{sec_general_UB}

We now continue to the case of general graphs. Unlike for trees, for general graphs the ERMP doesn't have a clear "combinatorical" form but rather an algebraic expression. Hence, our proving method should take a different aim. We derive two different UB for the approximation ratio, and we conjecture that their \emph{minimum} is $\Tilde{O}(1)$. Indeed, we will show that in the case of Bowtie graph the \emph{minimum} is $\Tilde{O}(1)$, matching \Cref{thm_LW_apx_ERMP_trees}. Finally, we provide many simulations that support our conjecture.  \\


Recall that the ERMP can be formulated as:
\begin{equation*}
\begin{array}{ll@{}ll}
\text{minimize}  & \Tr X^{-1} &\\
\text{subject to}&  X = \sum_{l=1}^{m} g_lb_lb_l^T + (1/n)\boldsymbol{1}\boldsymbol{1}^T \\
                    & \boldsymbol{1}^Tg=1
\end{array}
\end{equation*}





Since a closed-form expression for the exact approximation ratio seems hard to compute, 
our goal is to derive an upper bound on the approximation ratio of $\linf$-LW, which is henceforth denoted $\alpha_{A,D}$. It turns out that we can derive two \emph{different} upper bounds (with interesting relation) for $\alpha_{A,D}$. The first bound is using the $AM-GM$ inequality, and the second bound is via Lagrange duality analysis.

\subsubsection{The AM-GM UB}
Recall that we can formulate both A-optimality and D-optimality in terms of "means" (see \cref{sec_optimal_design}). So trying to use $AM-GM$ inequality makes a lot of sense. We know we can write $\cK_G(g)$ as (equation~\eqref{equ_R_tot_rep_tr})
\begin{equation}\label{equ_R_tot_rep_HM}
    \cK(g) = n \Tr L_g^+ = n \sum_{i=1}^{n-1} \lambda_i(L_g^+) = n  \sum_{i=1}^{n-1} \lambda_i(L_g)^{-1} = \frac{n(n-1)}{HM(\lambda(L_g))}
\end{equation}
(we only have $n-1$ positive eigenvalues). So, we get
\[
HM(\lambda(L_{g^*})) = \frac{n(n-1)}{\cK_G^*} \ ,\  HM(\lambda(L_{g_{\ell w}})) = \frac{n(n-1)}{\cK_G^{\ell w}}
\]
Using \emph{AM-GM} inequality, with equation~\eqref{equ_AM_Laplacian}, gives us,
\[
\alpha_{A,D} = \frac{\cK_G^{\ell w}}{\cK_G^*} = \frac{HM(\lambda(L_{g^*}))}{HM(\lambda(L_{g_{\ell w}}))} \leq \frac{AM(\lambda(L_{g^*}))}{HM(\lambda(L_{g_{\ell w}}))} = \frac{2/(n-1)}{HM(\lambda(L_{g_{\ell w}}))} = \frac{2}{n(n-1)^2} \cK_G^{\ell w}
\]

We denote this bound, the \emph{AM-GM} bound, by $\alpha_{AM}$:
\[
\alpha_{A,D} \leq \alpha_{AM}(\ell w) = \frac{2}{n(n-1)^2} \cK_G^{\ell w} = \frac{2}{(n-1)^2} \Tr L_{\ell w}^+
\]

It is nice to see that, since $\cK_G^* \geq n(n-1)^2/2$ (see lower bound on the optimal solution from \cite{saberi}) we get that:
\[
\alpha_1(g) = \frac{2}{n(n-1)^2}\cK_G(g) \geq \frac{2}{n(n-1)^2}\cK_G^* \geq 1 ,
\]
with equality iff $g=g^*$, so this is well defined.

\subsubsection{The Duality-gap UB}
Next we would like to use the \emph{dual gap} of ERMP to derive a second UB. The usage of the dual problem to bound the sub-optimality of a feasible solution is a well known technique, so our motivation is clear. We begin by restating the result of \cite{saberi} regards the duality gap of the ERMP problem. For the sake of complicity as it is one of the key component in our result, we derive the duality gap from scratch, but note that this is already done by \cite{saberi}. We state here the final results, see the full proof in \cref{appendix_ERMP_dual_gap}.\\

The dual problem of ERMP is (see equation~\eqref{equ_ERMP_dual_prob}),
\begin{equation*}
    \begin{array}{ll@{}ll}
     & \text{maximize}  \ \ & n(\Tr V)^2  \\
     & \text{subject to} &  b_l^T V^2 b_l = ||V b_l|| \leq 1 , \\ 
     & & V \succeq 0 \ ,\  V \boldsymbol{1} = 0
\end{array}
\end{equation*} 

We know that the gradient of $\cK_G(g)$, w.r.t to $g$, is equals to (see equation~\eqref{equ_der_L_g_plus}):
\[
\frac{\partial L_g^+}{\partial g_l} = -\Tr L_g^+ b_l b_l^T L_g^+ = -||L_g^+ b_l||^2
\]
Given any feasible $g$, let $V = \frac{1}{\underset{l}{\max} ||L_g^+ b_l||} L_g^+$ (this is clearly dual feasible). By definition, the duality gap in $g$, is equals to,
\begin{align*}
    \eta &= n \Tr L_g^+ - n(\Tr V)^2 = n \Tr L_g^+ - \frac{n (\Tr L_g^+)^2}{\underset{l}{\max}  ||L_g^+ b_l||^2} \\
    &= \frac{\cK_G(g)}{-\min\limits_l \ \partial \cK_G(g)/\partial g_l} (-\min_l \frac{\partial \cK_G(g)}{\partial g_l} - \cK_G(g))
\end{align*}
Using that, we can derive our second UB. By the definition of dual gap,
\[
\cK_G(g) - \cK_G^* \leq \eta = \frac{\cK_G(g)}{-\min\limits_l \ \partial \cK_G(g)/\partial g_l} (-\min_l \frac{\partial \cK_G(g)}{\partial g_l} - \cK_G(g)) \\
= \cK_G(g) + \frac{(\cK_G(g))^2}{\min\limits_l \ \partial \cK_G(g)/\partial g_l} \\
\]
Remember that the derivative of $\cK_G(g)$ is negative, and with simple arrangement we get,
\[
\frac{\cK_G(g)}{\cK_G^*} \leq \frac{-\min\limits_l \frac{\partial \cK_G(g)}{\partial g_l}}{\cK_G(g)} = \frac{-\min\limits_l ( -n||L^+ b_l||_2^2 )}{n \Tr L^+} = \frac{\max\limits_l ||L^+ b_l||_2^2}{\Tr L^+}
\]

Inserting $g_{\ell w}$, we denote this bound, the \emph{duality gap} bound, by $\alpha_{dual}$,
\[
\alpha_{A,D} \leq  \alpha_{dual}(\ell w) \coloneqq \frac{\max\limits_l ||L_{\ell w}^+ b_l||_2^2}{\Tr L_{\ell w}^+}
\]

\subsubsection{Approximation Ratio UB for the ERMP}

We can now define our derived UB for the approximation ratio - $\alpha_{A,D}$:
\[
    \alpha_{A,D} \leq \alpha_{min} \coloneqq \min(\alpha_1 , \alpha_2),
\] 

Throughout the paper, $\alpha_{AM}, \alpha_1$ always refers to $\alpha_{AM}(\ell w)$, and $\alpha_{dual}, \alpha_2$ always refers to $\alpha_{dual}(\ell w)$, while simply $\alpha$ will refer to $\alpha_{min}$ unless stated otherwise.\\

We can now propose the following, stronger, conjecture:
\begin{conjecture}\label{conj_a_min}
    For any graph $G$, \( \alpha_{min} \leq O(\log n) \)
\end{conjecture}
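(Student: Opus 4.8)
The plan is to establish \cref{conj_a_min} — which by \cref{thm_UB_general_graphs} immediately implies \cref{conj_LW_apx_ERMP} — through a \emph{dichotomy} on the value of $\alpha_1=\alpha_{AM}(\ell w)=\frac{2}{(n-1)^2}\Tr L_{\ell w}^+$, in the spirit of the scalar identity $\min\{x,(\log x)'\}=1$ used in the technical overview. If $\alpha_1\le O(\log n)$ there is nothing to prove since $\alpha_{min}\le\alpha_1$; in fact the bound $\alpha_1\le\diam(G)$ already disposes of every graph of polylogarithmic diameter. Conversely, one must show that whenever the $\linf$-LW reweighted graph is ``globally poorly connected'' in the sense that $\alpha_1=\omega(\log n)$, the dual bound $\alpha_2=\alpha_{dual}(\ell w)=\max_l\|L_{\ell w}^+ b_l\|^2/\Tr L_{\ell w}^+$ is $O(\log n)$. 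That the minimum — and not either bound alone — is what should be $\tilde O(1)$ is unavoidable: on a path $P_n$ (a tree, so $g_{\ell w}=g_{uni}$) one computes $\alpha_1=\Theta(n)$ while $\alpha_2=\Theta(1)$, whereas on a ``dumbbell'' (two expanders joined by a single edge, whose $\linf$-LW solution thickens the bridge to weight $\Theta(1/n^2)$) one gets $\alpha_1=\Theta(1)$ but $\alpha_2=\Theta(n)$. So the plan is genuinely to trade one bound off against the other.

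The second step is a spectral reformulation in which the $\linf$-LW fixed point enters through averaging identities. Writing $L_{\ell w}=\sum_{i\ge2}\lambda_i u_iu_i^\top$, the saturation condition $R_l(g_{\ell w})=b_l^\top L_{\ell w}^+ b_l=n-1$ for \emph{every} edge $l$ (see \eqref{equ_LD_LW_opt_saturation}) says $\sum_{i\ge2}(u_i^\top b_l)^2/\lambda_i=n-1$ for all $l$, while summing against $g_{\ell w}$ gives, for each low mode, $\sum_l (g_{\ell w})_l (u_i^\top b_l)^2=u_i^\top L_{\ell w} u_i=\lambda_i$ and $\sum_l (g_{\ell w})_l \|L_{\ell w}^+ b_l\|^2=\Tr L_{\ell w}^+$. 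Thus $\alpha_2$ is exactly the max-to-$g_{\ell w}$-average ratio of the edge energies $\|L_{\ell w}^+ b_l\|^2=\sum_{i\ge2}(u_i^\top b_l)^2/\lambda_i^2$, and Cauchy--Schwarz $n-1=\langle L_{\ell w}^+ b_l, b_l\rangle\le\sqrt2\,\|L_{\ell w}^+ b_l\|$ already yields the matching lower bound $\alpha_1\alpha_2\ge1$. Grouping the eigenvalues into $O(\log\kappa(L_{\ell w}))=O(\log n)$ dyadic bands — the estimate $\log\kappa=O(\log n)$ following from the weighted Mohar bound of \cref{thm_spectral_LW}(1), which at $g_{\ell w}$ reads $\lambda_2(G_{\ell w})\ge 2/(nD(n-1))$ — reduces the task to controlling, for each edge $l$, the contribution of the \emph{lowest} band to $\|L_{\ell w}^+ b_l\|^2$, i.e.\ to showing that no edge vector $b_l$ aligns with the near-kernel of $L_{g_{\ell w}}$ beyond an $O(\log n)$ factor past what the averaging identities permit.

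The mechanism I expect to be operative is that a large value of $\Tr L_{\ell w}^+$ can only be produced by a \emph{diffuse} family of small eigenvalues, not by a few localized ones: a single balanced sparse cut contributes only $\Theta(n^2)$ to $\Tr L_{\ell w}^+$ (this is exactly why the dumbbell has $\alpha_1=\Theta(1)$), so forcing $\alpha_1=\omega(\log n)$ requires $\omega(\log n)$ genuinely small eigenvalues whose eigenvectors are spread over $\Omega(n)$ edges each, as happens for paths and grids, where indeed $(u_i^\top b_l)^2/\lambda_i=O(1)$ for all $i,l$. Turning this into a theorem seems to require a Cheeger/expander-decomposition analysis of $L_{g_{\ell w}}$ showing that its low eigenvectors witness balanced sparse cuts, together with a quantitative ``no-localization'' bound on $\sum_{i:\,\lambda_i\le t}(u_i^\top b_l)^2$ for every edge. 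This is the main obstacle: presently only the trivial $\alpha_2\le n-1$ is provable from $R_l\equiv n-1$ together with $\Tr L_{\ell w}^+\ge1/\lambda_2$, and excluding an edge that carries almost all of its effective resistance through a single low mode — a scenario consistent with the averaging identities — needs structural input about how $\linf$-LW reshapes the bottom of the spectrum that we do not currently possess. This is why the statement is left as a conjecture, supported by the simulations of \cref{sec_experiments}.
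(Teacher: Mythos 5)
The statement you are attempting is stated in the paper as a \emph{conjecture} (Conjecture~\ref{conj_a_min}), not a theorem: the authors offer no proof, only the partial results $\alpha_1\le\diam(G)$ (Lemma~\ref{lem_apx_diam_UB}), $\alpha_2\le\kappa(L_{lw})$ (Lemma~\ref{lem_a2_cond_num_UB}), the constant bound for trees, and the experimental evidence of \cref{sec_experiments}. Your proposal is honest about this: it ends by naming the missing ingredient and conceding the statement remains a conjecture. So the correct verdict is that neither you nor the paper proves the claim, and you have correctly recognized that.

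Within that framing, your plan of attack is sound and in places sharper than what the paper records. The dichotomy on $\alpha_1$ reproduces the paper's intuition from $\min\{x,(\log x)'\}=1$; the path and dumbbell examples are computed correctly and genuinely illustrate that neither $\alpha_1$ nor $\alpha_2$ alone can be bounded; the Cauchy--Schwarz lower bound $\alpha_1\alpha_2\ge 1$ (from $n-1=\langle L_{\ell w}^+b_l,b_l\rangle\le\sqrt2\,\|L_{\ell w}^+b_l\|$) is a clean observation not in the paper; and the $\log\kappa(L_{\ell w})=O(\log n)$ estimate via the weighted Mohar bound together with $\lambda_n\le\Tr L_{\ell w}=2$ is correct. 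The obstacle you isolate — that the LW saturation condition $R_l\equiv n-1$ and the averaging identities $\sum_l(g_{\ell w})_l\|L_{\ell w}^+b_l\|^2=\Tr L_{\ell w}^+$ do not by themselves exclude an edge whose ``edge energy'' $\|L_{\ell w}^+b_l\|^2$ is dominated by a single very small eigenvalue — is precisely the kind of no-localization statement that would need to be proved about how $\linf$-LW shapes the bottom of the spectrum, and it is a real gap: absent it, only the trivial $\alpha_2\le n-1$ follows. In short, your proposal is a correct assessment of the state of the problem rather than a proof, and it should not be read as closing the conjecture.
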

We emphasize that this is not equivalent to \Cref{conj_LW_apx_ERMP}, as we don't know whether $\alpha_{min}$ is tight. However, our simulations indicate that $\alpha_{min}$ is sufficient (see ahead).\\

For the rest of this section, we derive some formulas on $\alpha_1$ and $\alpha_2$ and their relation. Our first lemma is the connection of the two bounds:
\begin{lemma}\label{lem_a2_der_log_a1}
    $\alpha_2(g) \ = \left\lVert - \nabla_g (\log \alpha_1) \right\rVert_\infty $
\end{lemma}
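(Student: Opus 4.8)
The plan is a direct computation via the chain rule, using the gradient formula for $\Tr L_g^+$ that was already recorded in the excerpt (equation~\eqref{equ_der_L_g_plus}). First I would write $\alpha_1$ in logarithmic form: since $\alpha_1(g) = \frac{2}{(n-1)^2}\Tr L_g^+$, we have
\[
\log \alpha_1(g) = \log\!\left(\tfrac{2}{(n-1)^2}\right) + \log \Tr L_g^+,
\]
so the constant prefactor drops out of the gradient and $\nabla_g \log \alpha_1(g) = \nabla_g \log \Tr L_g^+$.

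Next I would apply the chain rule coordinate-wise. For each edge $l \in [m]$,
\[
\frac{\partial}{\partial g_l}\log \alpha_1(g) \;=\; \frac{1}{\Tr L_g^+}\cdot \frac{\partial\, \Tr L_g^+}{\partial g_l}
\;=\; \frac{-\|L_g^+ b_l\|_2^2}{\Tr L_g^+},
\]
where the last equality is exactly the derivative formula $\partial_{g_l}\Tr L_g^+ = -\Tr(L_g^+ b_l b_l^T L_g^+) = -\|L_g^+ b_l\|_2^2$ cited from \eqref{equ_der_L_g_plus}. Hence the $l$-th coordinate of $-\nabla_g \log \alpha_1(g)$ equals $\|L_g^+ b_l\|_2^2 / \Tr L_g^+$, which is nonnegative. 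Therefore
\[
\left\lVert -\nabla_g \log \alpha_1(g) \right\rVert_\infty
= \max_{l\in[m]} \frac{\|L_g^+ b_l\|_2^2}{\Tr L_g^+}
= \alpha_{dual}(g) = \alpha_2(g),
\]
by the definition of $\alpha_2$, which completes the proof.

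There is essentially no hard step here; the only points needing a sentence of care are (i) that $g \mapsto \Tr L_g^+$ is differentiable on the relevant domain (positive normalized weights keeping $G_g$ connected), which follows from the smoothness of the pseudoinverse on the manifold of rank-$(n-1)$ PSD matrices, so the chain rule is legitimate; and (ii) that since every coordinate of $-\nabla_g \log \alpha_1(g)$ is nonnegative, the $\ell_\infty$ norm is literally the coordinate-wise maximum, matching the $\max_l$ in the definition of $\alpha_2$. I would also remark that this identity is precisely the scalar intuition $\min\{x, (\log x)'\}$ made exact at the level of the Laplacian, since $\alpha_1 = \Tr L_g^+$ up to scaling and $\alpha_2$ is (the $\infty$-norm of) the logarithmic derivative of the same quantity.
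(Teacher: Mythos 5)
Your proof is correct and takes essentially the same route as the paper: apply the chain rule to $\log \alpha_1$, use the derivative formula $\partial_{g_l}\Tr L_g^+ = -\lVert L_g^+ b_l\rVert_2^2$ from~\eqref{equ_der_L_g_plus}, and observe that the resulting coordinates are nonnegative so the $\ell_\infty$-norm is the max, which is $\alpha_2(g)$ by definition. The only cosmetic difference is that you split off the constant $\tfrac{2}{(n-1)^2}$ via logarithm laws before differentiating, whereas the paper carries it through the chain rule and cancels it; both are the same computation.
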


\begin{proof}
    Recall that \(\alpha_1(g) = \frac{2}{(n-1)^2} \Tr L_g^+ \).
    Using the chain rule and equation~\eqref{equ_der_L_g_plus}, we get that
    \begin{align*}
        - \frac{\partial }{\partial g_l} \log(\alpha_1(g)) &= - \frac{1}{\alpha_1(g)} \frac{\partial \alpha_1}{\partial g_l} = \frac{-(n-1)^2}{2\Tr L_g^+}\frac{2}{(n-1)^2} \Tr \frac{\partial L_g^+}{\partial g_l} \\
        &= \frac{-1}{\Tr L_g^+} \Tr( -L_g^+ b_l b_l^T L_g^+ ) = \frac{\Tr b_l^T L_g^+ L_g^+ b_l}{\Tr L_g^+} = \frac{||L_g^+ b_l||^2}{\Tr L_g^+}
    \end{align*}
    Hence,
    \[
        \left\lVert - \nabla_g (\log \alpha_1) \right\rVert_\infty = \max_l \frac{||L_g^+ b_l||^2}{\Tr L_g^+} = \alpha_2(g) .
    \]
\end{proof}

We can also give a more `combinatorical` interpretation for $\alpha_1$:
\begin{lemma}\label{lem_apx_diam_UB}
    \( \alpha_1 \leq D \), where $D$ is the diameter of the graph.
\end{lemma}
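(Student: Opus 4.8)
The plan is to unwind the definition $\alpha_1 = \alpha_{AM}(\ell w) = \frac{2}{n(n-1)^2}\,\cK_G^{\ell w}$ back into a sum of pairwise effective resistances, and then bound each term using the fact that effective resistance is a graph metric together with the optimality-saturation identity for $\linf$-LW. Concretely, recall from \eqref{equ_R_tot_rep_HM} that $\cK_G^{\ell w} = \sum_{i<j} R_{ij}(g_{\ell w})$, and from the saturation of the $LD_G$ optimality criterion at $g = g_{\ell w}$ (see \eqref{equ_LD_LW_opt_saturation}; equivalently, from $g_{\ell w} = \frac{1}{n-1} w_\infty(B^T)$ and Definition~\ref{equ_inf_LW}, which give $b_l^T (B W_\infty B^T)^+ b_l = 1$, hence $b_l^T L_{g_{\ell w}}^+ b_l = n-1$) that every \emph{edge} $e_l$ of $G$ satisfies $R_l(g_{\ell w}) = n-1$.

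First I would fix a pair of vertices $i \neq j$ and take a shortest path $i = v_0, v_1, \dots, v_k = j$ in $G$, so that $k \leq D$. Since effective resistance is a metric, iterating the triangle inequality \eqref{equ_ER_tri_ineq} along this path gives
\[
R_{ij}(g_{\ell w}) \;\leq\; \sum_{t=1}^{k} R_{v_{t-1} v_t}(g_{\ell w}) \;=\; \sum_{t=1}^{k} R_{l_t}(g_{\ell w}) \;=\; k\,(n-1) \;\leq\; D\,(n-1),
\]
where $l_t \in E$ denotes the edge $(v_{t-1}, v_t)$ and we used $R_{v_{t-1} v_t} = R_{l_t}$ because $v_{t-1} v_t$ is an actual edge of $G$ (so $e_{v_{t-1}} - e_{v_t} = b_{l_t}$). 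Summing over all $\binom{n}{2}$ vertex pairs,
\[
\cK_G^{\ell w} \;=\; \sum_{i<j} R_{ij}(g_{\ell w}) \;\leq\; \binom{n}{2} D\,(n-1) \;=\; \frac{n(n-1)^2}{2}\, D,
\]
and substituting into the definition of $\alpha_1$ yields $\alpha_1 = \frac{2}{n(n-1)^2}\,\cK_G^{\ell w} \leq D$, as claimed.

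There is no genuine obstacle here: the statement is essentially a linear combination of two facts already available to us — effective resistance is a metric, and the $\linf$-LW weights saturate $R_l = n-1$ on every edge. The only places that merit a sentence of justification are (i) that the saturation holds with $R_l(g_{\ell w}) \leq n-1$ on \emph{every} edge (an inequality is all we need, and it is exactly the first-order optimality of $g_{\ell w}$ for the D-optimal design objective $LD_G$), and (ii) the bookkeeping observation that the edges of a shortest path are genuine edges of $G$, so that the vertex-to-vertex effective resistances appearing after the triangle inequality are precisely the quantities $R_{l}$ controlled by the saturation identity. This also clarifies the combinatorial reading promised in the lemma: the $\linf$-LW-weighted graph has \emph{every edge} at resistance $n-1$, so every pair of vertices is within resistance $D(n-1)$ of each other, and averaging over pairs divides this back down to $D$.
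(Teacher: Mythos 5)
Your proof is correct and follows essentially the same route as the paper's: use the saturation $R_l(g_{\ell w}) = n-1$ on every edge, bound $R_{ij}(g_{\ell w})$ along a shortest path via the triangle inequality for the effective-resistance metric, and sum over $\binom{n}{2}$ pairs. The extra sentence tracing the saturation back to the $LD_G$ optimality condition is a welcome clarification but does not change the argument.
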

\begin{proof}
    We know that at $g_{\ell w}$, the effective resistance on any edge is $n-1$. 
    Let SP($i,j$) be the shortest path between $i,j$. We can bound $R_{ij}^{\ell w} = R_{ij}(g_{\ell w})$ using the triangle inequality (see equation~\eqref{equ_ER_tri_ineq}):
    \[
        R_{ij}^{\ell w} \leq \sum_{l \in \text{SP}(i,j)} R_l^{\ell w} = \sum_{l \in \text{SP}(i,j)} (n-1) = (n-1)|\text{SP}(i,j)| \leq (n-1)D ,
    \]
    where $D$ is the diameter of the graph. Thus we get that,
    \[
        \cK_G^{\ell w} = \sum_{i<j} R_{ij}^{\ell w} \leq \sum_{i<j} (n-1)D = {n \choose 2} (n-1)D = \frac{n(n-1)^2}{2} D ,
    \]
    and,
    \[
        \alpha_1 = \frac{2}{n(n-1)^2} \cK_G^{\ell w} \leq \frac{2}{n(n-1)^2} \cdot \frac{n(n-1)^2}{2} D = D
    \]
\end{proof}

In contrast, $\alpha_{dual}$ has a close connection to the \emph{condition number} of the Laplacian:
\begin{lemma}\label{lem_a2_cond_num_UB}
    \( \alpha_{dual} \leq \kappa(L_{lw}) = \frac{\lambda_{n}(L_{lw})}{\lambda_{2}(L_{lw})}\)
\end{lemma}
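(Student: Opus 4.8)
The plan is to bound the quantity $\alpha_{dual}(\ell w) = \max_l \|L_{\ell w}^+ b_l\|_2^2 / \Tr L_{\ell w}^+$ by controlling the numerator and the denominator separately in terms of the spectrum of $L := L_{g_{\ell w}}$. First I would expand everything in the eigenbasis of $L$. Writing $L = \sum_{i=2}^n \lambda_i u_i u_i^T$ with $\lambda_2 \le \dots \le \lambda_n$ the nonzero eigenvalues, we have $L^+ = \sum_{i\ge 2} \lambda_i^{-1} u_i u_i^T$, so for any edge vector $b_l$ (which lies in the span of $u_2,\dots,u_n$ since $b_l^T\buni = 0$),
\[
\|L^+ b_l\|_2^2 = b_l^T (L^+)^2 b_l = \sum_{i=2}^n \lambda_i^{-2} (u_i^T b_l)^2 .
\]
The key observation is to relate this to the effective resistance $R_l(g_{\ell w}) = b_l^T L^+ b_l = \sum_{i\ge2} \lambda_i^{-1}(u_i^T b_l)^2$, which by the saturation property at $g_{\ell w}$ (equation~\eqref{equ_LD_LW_opt_saturation}, cited above) equals $n-1$ for every edge $l$. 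Factoring out $\lambda_2^{-1}$ from the numerator gives $\lambda_i^{-2} = \lambda_i^{-1}\cdot \lambda_i^{-1} \le \lambda_2^{-1}\lambda_i^{-1}$, hence
\[
\|L^+ b_l\|_2^2 = \sum_{i\ge2}\lambda_i^{-1}\cdot\lambda_i^{-1}(u_i^Tb_l)^2 \le \frac{1}{\lambda_2}\sum_{i\ge2}\lambda_i^{-1}(u_i^Tb_l)^2 = \frac{R_l(g_{\ell w})}{\lambda_2} = \frac{n-1}{\lambda_2}.
\]

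For the denominator, I would use $\Tr L^+ = \sum_{i=2}^n \lambda_i^{-1} \ge (n-1)/\lambda_n$, since each of the $n-1$ terms is at least $\lambda_n^{-1}$. Combining the two bounds,
\[
\alpha_{dual}(\ell w) = \frac{\max_l \|L^+ b_l\|_2^2}{\Tr L^+} \le \frac{(n-1)/\lambda_2}{(n-1)/\lambda_n} = \frac{\lambda_n}{\lambda_2} = \kappa(L_{\ell w}),
\]
which is exactly the claimed inequality. (Here $\lambda_2 = \lambda_2(L_{\ell w})$ and $\lambda_n = \lambda_n(L_{\ell w})$ in the notation of the paper, where eigenvalues are indexed in increasing order and $\lambda_1 = 0$.)

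I expect the proof to be essentially this short, so there is no real "main obstacle," but the one point to be careful about is the inequality $\lambda_i^{-2}(u_i^Tb_l)^2 \le \lambda_2^{-1}\lambda_i^{-1}(u_i^Tb_l)^2$, which requires $\lambda_i \ge \lambda_2$ for all $i \ge 2$ — true by the ordering convention — and the fact that the orthogonal-complement projection of $b_l$ onto $\ker L = \mathrm{span}(\buni)$ vanishes, so that only the indices $i \ge 2$ appear and the identity $R_l = \sum_{i\ge2}\lambda_i^{-1}(u_i^Tb_l)^2$ is exact. The only external input is the saturation identity $R_l(g_{\ell w}) = n-1$ for all edges $l$, which was already established in~\eqref{equ_LD_LW_opt_saturation}; alternatively one can avoid even this by writing $\alpha_{dual} \le \max_l\|L^+b_l\|^2/\Tr L^+$ and noting the bound $\|L^+b_l\|^2 \le \lambda_2^{-1} R_l$ together with $\Tr L^+ \ge \lambda_n^{-1}\max_l R_l \ge \lambda_n^{-1} R_l$ for the maximizing edge — but using the clean value $R_l = n-1$ is cleanest.
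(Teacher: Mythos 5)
Your proof is correct and follows essentially the same route as the paper: both bound the numerator by $\|L_{\ell w}^+ b_l\|^2 \le \lambda_{\max}(L_{\ell w}^+)\, R_l(g_{\ell w}) = (n-1)\lambda_{\max}(L_{\ell w}^+)$ and the denominator by $\Tr L_{\ell w}^+ \ge (n-1)\lambda_{\min}^{>0}(L_{\ell w}^+)$, then take the quotient. The paper phrases the first step via Courant–Fisher applied to $x = (L_{\ell w}^+)^{1/2}b_l$ and the second via $\min \le$ AM, while you write out the eigenbasis expansion explicitly, but the intermediate quantities and the use of the saturation $R_l(g_{\ell w}) = n-1$ are identical.
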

\begin{proof}
    By the Courant-Fisher principle,
    \[
        \lambda_{n}(L_{lw}^+) = \underset{x \in \R^n}{\max} \left\{ \frac{x^T L_{lw}^+ x}{x^T x} \right\}
    \]

    In particular, taking $x = (L_{lw}^+)^{1/2} b_l$ (remember that $L_g^+$ is symmetric PSD matrix) gives us,
    \[
        \frac{b_l^T L_{lw}^+ L_{lw}^+ b_l}{b_l^T L_{lw}^+ b_l} \leq \lambda_{n}(L_{lw}^+)
    \]
    i.e,
    \[
        ||L_{lw}^+ b_l ||^2 = b_l^T L_{lw}^+ L_{lw}^+ b_l \leq  \lambda_{n}(L_{lw}^+) \cdot b_l^T L_{lw}^+ b_l = (n-1)\lambda_{n}(L_{lw}^+).
    \]
    
    In addition, remember that, \( \min(x_1,...,x_k) \leq AM(x_1,...x_k)\). So, using it for the eigenvalues of $L_{lw}^+$, and the fact that the trace is the sum of the eigenvalues, we get that,
    \[
        \lambda_{2}(L_{lw}^+) \leq \frac{1}{n-1} \Tr L_{lw}^+
    \]
    Thus,
    \[
        \alpha_2 = \max_l \frac{||L_{lw}^+ b_l||^2}{\Tr L_{lw}^+} \leq \max_l \frac{(n-1)\lambda_{n}(L_{lw}^+)}{\Tr L_{lw}^+} \leq \frac{(n-1)\lambda_{n}(L_{lw}^+)}{(n-1)\lambda_{2}(L_{lw}^+)} = \frac{\lambda_{n}(L_{lw}^+)}{\lambda_{2}(L_{lw}^+)} = \frac{\lambda_{n}(L_{lw})}{\lambda_{2}(L_{lw})} = \kappa(L_{lw})
    \]
    when we used that fact the eigenvalues of $L^+$ are the reciprocal of the eigenvalues of $L$.
\end{proof}


$ $\\
As promised, we use $\alpha_{dual}$ to derive the approximation ratio for trees: 
\begin{lemma}\label{lem_a2_bowtie}
    For the Bowtie graph, $\mathcal{B}_{n,n,n}$, $\alpha_{dual} \approx 3.12$
\end{lemma}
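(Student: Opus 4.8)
The plan is to compute $\alpha_{dual}(g_{\ell w})$ directly for the Bowtie graph $\mathcal{B}_{n,n,n}$ by exploiting the fact that on a tree, $g_{\ell w} = g_{uni} = (1/m)\buni$ (Claim~\ref{clm_LW_tree}), so we need to understand the pseudo-inverse Laplacian of the uniformly-weighted Bowtie. Recall $\alpha_{dual}(\ell w) = \max_l \lVert L_{\ell w}^+ b_l\rVert_2^2 / \Tr L_{\ell w}^+$. On a tree with unit-resistance edges (after normalizing, each edge has conductance $1/m$, equivalently resistance $m$; the normalization factor $m$ cancels in the ratio since both numerator and denominator scale the same way under $g \mapsto \alpha g$ — numerator as $\alpha^{-2}$, denominator as $\alpha^{-1}$... so actually it does NOT cancel, and I must be careful to track the homogeneity: $\lVert L_g^+ b_l\rVert^2$ is degree $-2$ in $g$ while $\Tr L_g^+$ is degree $-1$, so $\alpha_{dual}$ is degree $-1$ and the normalization matters). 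Concretely I would first work with the unnormalized unit-resistance tree, compute the ratio there, and then rescale by the factor coming from $g_{\ell w} = (1/m)\buni$ versus $\buni$, i.e. multiply by $m$.

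The key computational steps: (1) For a tree, $L^+ b_l$ has an explicit combinatorial meaning — $(L^+ b_l)_u$ is, up to the centering constant, the "potential" at $u$ when a unit current is injected across edge $l$; equivalently $b_u^T L^+ b_l$ equals (signed) the overlap of the $u$-to-reference path with edge $l$'s two sides. Using $\lVert L^+ b_l\rVert_2^2 = \sum_u (L^+ b_l)_u^2 = \sum_u (e_u^T L^+ b_l)^2$ and the identity $e_u^T L^+ b_l = \tfrac1n\sum_{v}(b_{uv}^T L^+ b_l)$, I can express $\lVert L^+ b_l\rVert_2^2$ purely in terms of the quantities $b_{uv}^T L^+ b_l$, which for a tree are $0,\pm 1$ patterns weighted by edge resistances — namely $b_{uv}^T L^+ b_l = \pm r_l$ if edge $l$ lies on the $u$–$v$ path and $0$ otherwise. (2) Specialize to $\mathcal{B}_{n,n,n}$: it has $2n$ leaf (star) edges on each of the two "bows" — wait, it's $\mathcal{B}_{t,p,s}$ with $t=s=n$ stars and a path of length $p=n$, total $\approx 3n$ vertices and $3n-1$ edges. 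The dominant (bottleneck) edge is the middle path edge, whose congestion is $\approx (3n/2)^2$. I would argue the max over $l$ of $\lVert L^+b_l\rVert^2$ is attained at (or near) this middle edge and compute $\lVert L^+ b_{\text{mid}}\rVert^2$ via a sum/integral over the $\approx 3n$ vertices of the squared potential drop. (3) Similarly compute $\Tr L^+ = \tfrac1n \sum_{i<j} R_{ij} = \tfrac1n \mathcal{K}_T$, and for the uniform-weight Bowtie $\mathcal{K}_T^{\ell w} = m\sum_l c_l$ from \eqref{equ_apx_ratio_trees}. Both numerator and denominator are $\Theta(n^3)\times(\text{normalization})$ and the ratio tends to a universal constant; I expect the arithmetic to reduce, after replacing sums by integrals $\int_0^2\!\sqrt{t(3-t)}\,dt$ etc. (the same integrals $c_1\approx 1.031$, $c_2\approx 2.503$ appearing in the commented-out computation), to the claimed $\approx 3.12$.

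The main obstacle I anticipate is step (1)–(2): getting a clean enough closed form (or asymptotic) for $\lVert L^+ b_l\rVert_2^2 = \sum_u (e_u^T L^+ b_l)^2$, because unlike $b_l^T L^+ b_l = R_l$, the quantity $\lVert L^+ b_l\rVert^2$ involves the \emph{centered} potentials $e_u^T L^+ b_l = \phi_u - \bar\phi$ where $\phi$ solves $L\phi = b_l$, so one must actually solve the flow on the Bowtie, subtract the mean potential, and square-sum. For the Bowtie this is tractable because the graph is essentially one-dimensional: injecting current across the middle path-edge splits the graph into two halves, potentials are piecewise-linear along the stem and constant-per-star on the bushes, so $\phi_u$ takes $O(n)$ distinct values arranged arithmetically, and $\sum_u(\phi_u-\bar\phi)^2$ becomes a discrete sum that I would approximate by $\int$. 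I would also need to verify the maximizing edge really is the central one (a monotonicity/convexity argument on $\lVert L^+b_l\rVert^2$ as $l$ moves along the stem, analogous to the congestion being maximized in the middle), which should follow from the same piecewise-linear structure. Once those are in hand, plugging into $\alpha_{dual} = m\cdot(\max_l\lVert L^+b_l\rVert^2)/(m\sum_l c_l)$ and evaluating the resulting constant-ratio of integrals finishes the lemma; and by Claim~\ref{clm_apx_ratio_subtree} and the earlier reduction to $\mathcal{B}_{n,n,n}$, this bounds $\alpha_{A,D}(T)$ for all trees, giving Theorem~\ref{thm_LW_apx_ERMP_trees}.
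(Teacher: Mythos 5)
Your overall plan of computing $\alpha_{dual}$ directly for the uniformly-weighted Bowtie is reasonable, but you have missed the simple combinatorial shortcut the paper uses and, more importantly, your mental picture of the intermediate computation is wrong in two respects, which would send you down a much harder and partly incorrect path.

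First, the shortcut: for a tree $\cK_T(g) = \sum_l c_l/g_l$, so $-\partial\cK/\partial g_l = c_l/g_l^2$, and the identity $-\partial\cK/\partial g_l = n\lVert L_g^+ b_l\rVert^2$ then gives $\lVert L_g^+ b_l\rVert^2 = c_l/(n g_l^2)$ with no potential-solving at all. At $g = g_{uni}$ this reduces $\alpha_{dual}$ to the purely polynomial ratio $m\,c_{\max}/\sum_k c_k$; the paper then just plugs in $c_{\max}\le(3n/2)^2$ and $\sum_k c_k \approx \tfrac{13}{6}n^3$ to get $\tfrac{27/4}{13/6}\approx 3.12$. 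No pseudoinverse entries, no mean-subtraction, no maximizing-edge argument are needed.

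Second, the parts of your plan that would actually fail. (a) You say the potentials $(L^+ b_l)_u$ for the middle edge are ``piecewise-linear along the stem''. In a tree, injecting a unit current across a single edge $l=(a,b)$ sends current only through $l$ itself, so the potential is \emph{two-valued}: constant on the $a$-side and constant on the $b$-side, differing by $R_l$. (Carrying this out, $\sum_u(\phi_u-\bar\phi)^2 = n_a n_b/n = c_l/n$, which recovers the identity above; the calculation is elementary, not an arithmetic progression.) Your piecewise-linear picture is the one for potentials between two \emph{far-apart} terminals, which is not what $b_l$ encodes. (b) You expect the arithmetic to reduce to the integrals $\int_0^b\sqrt{t(3-t)}\,dt$. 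Those integrals belong to the commented-out computation of $\alpha_{A,D}(\mathcal{B}_{n,n,n}) = m\sum c_l/(\sum c_l^{1/2})^2$, where the square roots genuinely appear. In $\alpha_{dual}$ there are no square roots: the numerator is a single congestion and the denominator is $\sum_k c_k$, a cubic polynomial sum. Expecting $\sqrt{\cdot}$ integrals is a signal you are conflating the two bounds. Once you correct these two points and use $\lVert L^+b_l\rVert^2 = c_l/n$, your plan collapses to the paper's one-line computation, but as written it contains a wrong model of the tree electrical problem and the wrong asymptotic evaluation, so it does not constitute a valid proof.
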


\begin{proof}
    Recall that for trees
    \[
        \cK_T(g) = \sum_l \frac{c_l}{g_l} ,
    \]
    and,
    \[
    \frac{\partial \cK(g)}{\partial g_l} = -\frac{c_l}{g_l^2}
    \]
    So, we can write $\alpha_{dual}$, for trees, as
    \begin{flalign*}
    &&\alpha_{dual} = \underset{l}{\max} \left( \left. -\frac{\partial \cK(g)}{g_l}/\cK(g)  \right|_{g_{\ell w}} \right) = \underset{l}{\max} \frac{m^2 c_l}{m \sum_k c_k} = \frac{m \cdot c_{\max}}{\sum_k c_k} &&\text{$g_{\ell w} = g_{uni}$, see claim~\ref{clm_LW_tree}}
    \end{flalign*}
    
    Clearly, $c_{max} \leq (3n/2)^2$ (the middle edge). As for the denominator, an easy calculation gives us,
    \begin{align*}
        \sum_k c_k &= \underbrace{2n\cdot m}_\text{$2n$ leafs edges} + \underbrace{\sum_{k=n+1}^{2n-1} k(3n-k)}_\text{path's edges}\\
        &= 2n\cdot m + \sum_{k=1}^{2n-1} k(3n-k) - \sum_{k=1}^{n} k(3n-k)  \approx 2n\cdot m + \frac{10}{3}n^3 - \frac{7}{6}n^3 = \frac{13}{6}n^3 + 6n^2.
    \end{align*}
        
    Thus, we get that
    \[
    \alpha_{dual} (\mathcal{B}_{n,n,n}) \leq \underbrace{\frac{m(3n/2)^2}{(13/6)n^3} \leq \frac{27/4}{13/6}}_{m = 3n-1} \approx 3.12
    \]
\end{proof}

We should note that this is an UB, and experimental results shows a slightly better ratio (see ahead).\\

While we haven't managed to prove our conjecture for general graphs, we can use the above formulas to show it for certain families. For instance, for low-diameter graphs, lemma~\ref{lem_apx_diam_UB} proves our conjecture. In addition, we know that good expanders graphs have a low condition number \cite{spielman11}, thus lemma~\ref{lem_a2_cond_num_UB} proves the conjecture for them as well. For more graphs, we provide a strong evidence for our conjecture, using various simulations, more on that below.

\newpage 

\section{Experimental Results}\label{sec_experiments}

\paragraph{Setup.} The simulation computes the approximation ratio using $\alpha_{min}$, thus matching \Cref{conj_a_min}. All random graphs results are taken to be the maximum of $100$ runs. We compute the LW of the graph using \Cref{alg_LW_computation_ccly} for $\epsilon = 0.01$. Whenever the graph isn't connected or simple, we remove all self loops and take its largest connected component (this is just a technicality).
We simulate our conjecture for several elementary graph families, e.g $k$-regular graphs, lollipop, and grids. 
\Cref{tab_dataset_elem} summarizes the (interesting) graphs we checked, and the approximation ratio of these graphs.
\Cref{fig_simulation_charts} shows an asymptotic behavior of regular graphs and lollipop.

\begin{savenotes}
\begin{table*}[ht!]\small
\normalfont
\footnotesize
  \setlength{\tabcolsep}{6pt}
  \centering
    \vspace{-1em}
    \begin{tabular}{rrn{9}{0}n{9}{0}n{9}{0}r}
    \toprule
      \multicolumn{4}{c}{{\bfseries Dataset}}& \multicolumn{1}{c}{}\\ 
    {\bfseries type} & {\bfseries graph} & \multicolumn{1}{r}{\bfseries nodes} & \multicolumn{1}{r}{\bfseries edges} & \multicolumn{1}{c}{\bfseries $\alpha_{min}(G)$} \\
    \midrule
    \bfseries random, high diameter, $d$-regular & $3$-regular  & 400 & 600 & $1 \period 55$ \\
     & $4$-regular  & 400 & 800 & $1\period 17$ \\
     & $5$-regular & 400 & 1000 & $1\period 11$ \\ 
     & $6$-regular  & 400 & 1200 & $1\period 08$ \\
    \midrule
    \bfseries small-world graph & Watts–Strogatz small-world graph\footnote{with parameters -- $k=4$, $p=2/3$} & 400 & 800 & $1\period 64$ \\
    
    \midrule
    \bfseries grid & balanced grid\footnote{$2$-dimensional square} & 400 & 760 & $1\period 35$ \\
     & 10x-grid\footnote{rectangle with width size $10$} & 400 & 750 & $1 \period 94$ \\
    \midrule
    \bfseries expanders & Margulis-Gabber-Galil graph & 400 & 1520 & $1\period 06$ \\
     & chordal-cycle graph & 400 & 1196 & $1\period 64$ \\
    \midrule
    \bfseries dense graphs & (400,400)--lollipop & 800 & 80200 & $3\period 03$ \\
    
    \midrule
    \bfseries trees & Bowtie & 3000 & 2999 & $2\period 5$ \\ 

    \midrule
    \bfseries real-world graphs\footnote{graphs taken from \cite{MSJ12}} & \textsc{Yeast} & 2224 & 6609 & $2\period 23$ \\
    & \textsc{Stif} & 17720 & 31799 & $4\period 08$ \\
    & \textsc{royal} & 2939 & 4755 & $8\period 96$ \\
    \bottomrule
  \end{tabular}
  \caption{Summary of approximation bounds for elementary graphs.
  \label{tab_dataset_elem}}
\end{table*}
\end{savenotes}

\begin{figure}[H]
  \centering
  \begin{tabular}[c]{lr}
    \begin{subfigure}[c]{0.4\linewidth}
      \includegraphics[width=\textwidth]{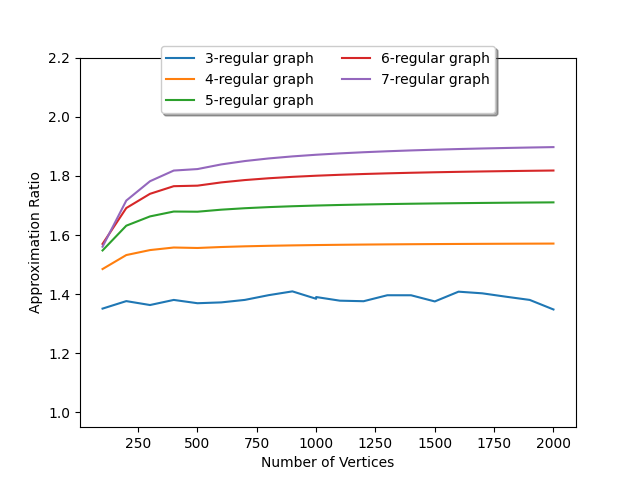}
      \caption{$\alpha_{min}$ of high diameter regular graphs}
      \label{fig_simulation_charts_regular}
    \end{subfigure}&
    \begin{subfigure}[c]{0.4\linewidth}
      \includegraphics[width=\textwidth]{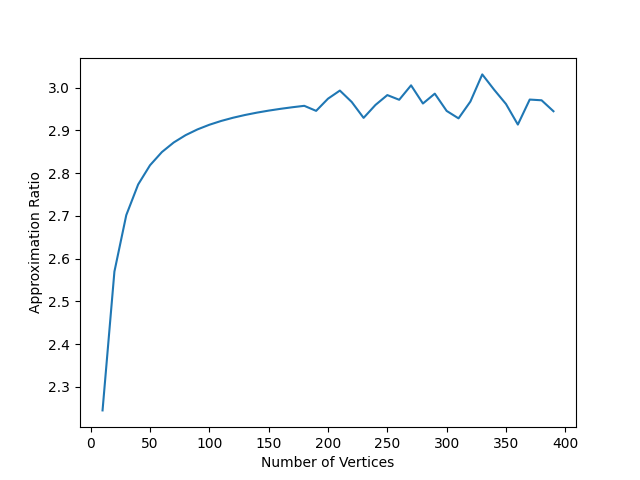}
      \caption{$\alpha_{min}$ of lollipop graph up to $800$ vertices}
      \label{fig_simulation_charts_lolipop}
    \end{subfigure}\\
  \end{tabular}    
  \caption{Asymptotic behavior of elementary graphs}
  \label{fig_simulation_charts}
\end{figure}

\newpage

\section{Spectral Implications of $\linf$-LW Reweighting (Proof of  \Cref{thm_spectral_LW})}\label{sec_LW_spectral}

In this section we study how $\linf$-LW affect the eigenvalue distribution of the reweighted graph Laplacian. We present  several results in this direction, which are of both mathematical and algorithmic interest.

\subsection{
Bound The Algebraic Connectivity}

A classic result of \cite{Mohar91} asserts that for unweighted graphs the algebraic connectivity is bounded by 
\[
\lambda_2 \geq 4/nD.
\]
We generalize this bound for weighted graphs in the following way:
\begin{lemma}
    Given a graph $G(g)$, denote the maximal edge-ER in $R_{max}(g) = \max\limits_l R_l(g)$, and the maximal pairwise ER in $\Tilde{R_{max}}$. Then,
    \begin{flalign}
        &&\lambda_2(L_g) \geq \frac{2}{n \Tilde{R_{max}}} \geq \frac{2}{nD \cdot R_{max}(g)} &&\\
        \text{and}&& && \nonumber \\ 
        && \lambda_2(L_g) \geq \frac{4}{n \sum_l R_l(g)}        
    \end{flalign}
    
\end{lemma}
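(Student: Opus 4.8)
The plan is to mimic Mohar's original argument, replacing graph distances by effective resistances. First I would recall the variational characterization $\lambda_2(L_g) = \min\{ x^T L_g x / x^T x : x \perp \boldsymbol{1}\}$, so it suffices to exhibit, for \emph{every} nonzero $x \perp \boldsymbol{1}$, a lower bound $x^T L_g x \geq \frac{2}{n \Tilde{R}_{max}} \, x^T x$. The key quantity to control is $x^T L_g x = \sum_l g_l (x_i - x_j)^2$ (the Dirichlet energy), and the key tool connecting it to effective resistances is the standard identity $R_{ij}(g) = \max_{y} \frac{(y_i - y_j)^2}{y^T L_g y}$, equivalently $(y_i-y_j)^2 \leq R_{ij}(g)\cdot y^T L_g y$ for all $y$. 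This is exactly the "$\ell_2$-congestion" fact: a unit potential difference across $(i,j)$ costs energy at least $1/R_{ij}$.

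The core step: fix $x \perp \boldsymbol 1$ with $\sum_i x_i = 0$. Then for any pair $i,j$, $(x_i - x_j)^2 \leq \Tilde{R}_{max}\cdot x^T L_g x$. Now average over all ordered pairs: $\sum_{i,j}(x_i - x_j)^2 = 2n \sum_i x_i^2 - 2(\sum_i x_i)^2 = 2n\, x^T x$, using $\sum_i x_i = 0$. Since there are $n^2$ ordered pairs, averaging the inequality $(x_i-x_j)^2 \leq \Tilde{R}_{max}\, x^TL_gx$ and summing gives $2n\, x^Tx = \sum_{i,j}(x_i-x_j)^2 \leq n^2 \Tilde{R}_{max}\, x^TL_gx$, hence $x^T L_g x \geq \frac{2}{n\Tilde{R}_{max}} x^T x$, which yields the first inequality $\lambda_2(L_g) \geq \frac{2}{n\Tilde{R}_{max}}$. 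The second inequality $\Tilde{R}_{max} \leq D \cdot R_{max}(g)$ is immediate from the triangle inequality for effective resistances \eqref{equ_ER_tri_ineq}: any pair is joined by a shortest path of at most $D$ edges, each contributing at most $R_{max}(g)$, so $R_{ij}(g) \leq D \cdot R_{max}(g)$ for all $i,j$; taking the max over pairs gives the claim, and chaining the two bounds proves line one.

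For the second displayed bound $\lambda_2(L_g) \geq \frac{4}{n \sum_l R_l(g)}$, I would instead route the energy through a single well-chosen edge rather than a worst-case pair. Here the natural approach is Mohar's: for the optimal $x$, pick the edge $l^\star = (i^\star, j^\star)$ maximizing $|x_{i^\star} - x_{j^\star}|$ among edges lying on a path realizing the "spread" of $x$; using $\sum_i x_i = 0$ one shows $\max_i x_i - \min_i x_i \geq \sqrt{2 x^Tx / n}$ after normalization, and then a telescoping/Cauchy–Schwarz argument along a path of edges bounds $(\max x - \min x)^2 \leq (\text{path length}) \sum_{l \in \text{path}} (x_i-x_j)^2$; combining with $(x_i-x_j)^2 \le R_l(g)\, x^T L_g x$ per edge and summing the $R_l$ over the path (bounded by $\sum_l R_l(g)$) yields the bound, with the factor $4$ coming from the $2x^Tx/n$ spread estimate squared against a factor handled as in Mohar. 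I expect \textbf{this second bound to be the main obstacle}: the first follows cleanly from the averaging identity, but the second requires reproducing Mohar's more delicate path/spread argument and being careful that the congestion inequality is applied edge-by-edge with the resistances telescoping correctly — the exact constant $4$ hinges on getting the spread lower bound and the path-length Cauchy–Schwarz step to interlock as in the unweighted proof.
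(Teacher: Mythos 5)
Your proof of the first displayed bound is correct and takes a genuinely different route from the paper. You invoke the Courant--Fischer characterization of $\lambda_2$, the variational fact $(x_i-x_j)^2 \leq R_{ij}(g)\,x^TL_gx$, and the identity $\sum_{i,j}(x_i-x_j)^2 = 2n\,x^Tx$ for $x\perp\buni$, obtaining $\lambda_2\geq \frac{2}{n\Tilde{R}_{max}}$ directly. The paper instead bounds the Kirchhoff index $\cK_G(g)=n\Tr L_g^+ = \sum_{i<j}R_{ij}$ by $\binom{n}{2}\Tilde{R}_{max}$ and then uses $\lambda_n(L_g^+)\leq\Tr L_g^+$, which actually gives the slightly stronger $\lambda_2 \geq \tfrac{2}{(n-1)\Tilde{R}_{max}}$ before weakening. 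These two arguments are essentially dual: your pairwise averaging identity is precisely $\sum_{i,j}R_{ij}=2n\Tr L_g^+$ unpacked, and both then discard all but the top eigenvalue of $L_g^+$. Your version is self-contained and avoids the Kirchhoff-index formula; the paper's is marginally tighter and recycles machinery already set up elsewhere.

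For the second bound, however, your proposal has a real gap: you abandon the averaging framework and pivot to a Mohar-style spread-plus-Cauchy--Schwarz argument that you leave unverified and yourself flag as the main obstacle. That detour is unnecessary. The same averaging inequality you already established,
\[
2n\,x^Tx = \sum_{i,j}(x_i-x_j)^2 \leq \Bigl(\sum_{i,j}R_{ij}\Bigr)\,x^TL_gx,
\]
combined with the triangle inequality $R_{ij}\leq\sum_{l\in P_{ij}}R_l$ along a shortest path $P_{ij}$ and Mohar's counting fact that each edge lies on at most $n^2/4$ of the (unordered) shortest paths, gives $\sum_{i,j}R_{ij}\leq \tfrac{n^2}{2}\sum_l R_l$, and hence $\lambda_2\geq \tfrac{4}{n\sum_l R_l}$ immediately. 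This is exactly what the paper does, phrased through $\cK_G(g)$ and a characteristic function $\chi_{ij}(e)$. So the missing ingredient in your proposal is not a delicate telescoping/spread step but simply the edge-congestion count; once you have it, the second bound falls out of your own first-bound argument with no extra work.
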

\begin{proof}
    We know that ER is a metric, so for any two vertices $i,j$,
    \[
    R_{ij}(g) \leq \sum_{l \in P_{ij}} R_l(g) \leq D\cdot R_{max}(g)
    \]
    where $P_{ij}$ is the shortest (length) path between $i$ and $j$. In particular, $\Tilde{R_{max}}(g) \leq D \cdot R_{max}$. Thus
    \[
    \cK_G(g) = \sum_{i<j} R_{ij}(g) \leq {n \choose 2} \cdot \Tilde{R_{max}} = \frac{n(n-1)}{2} \Tilde{R_{max}}(g)
    \]
    Using the fact that $\cK_G(g) = n \Tr L_g^+$, we get
    \[
    \Tr L_g^+ \leq \frac{(n-1)}{2} \Tilde{R_{max}}(g)
    \]
    We know that $\lambda_{max}(L_g^+) \leq \Tr L_g^+ $, so
    \[
    \lambda_{max}(L_g^+) \leq \frac{(n-1)}{2}\Tilde{R_{max}}(g) \implies \lambda_2 \geq \frac{2}{(n-1)\Tilde{R_{max}}(g)} \geq \frac{2}{n\Tilde{R_{max}}(g)} \geq \frac{2}{n D \cdot R_{max}(g)}
    \]

    For the second bound, define the characteristic function of $P_{ij}$ as
    \[
    \chi_{ij}(e) = \twopartdef{1}{e \in P_{ij}}{0}{otherwise}
    \]
    Now, we can write
    \begin{align*}
        \cK(g) &= \frac{1}{2}\sum_{i,j} R_{ij}(g) \\
        &\leq \frac{1}{2} \sum_i \sum_j \sum_{l \in P_{ij}} R_l = \frac{1}{2} \sum_i \sum_j \sum_{l \in E} R_l \cdot \chi_{ij}(e) \\
        &=\frac{1}{2}  \sum_{l} R_l \sum_i \sum_j \chi_{ij}(e) \leq \frac{n^2}{4} \sum_{l} R_l
    \end{align*}
    where in the last line we use the fact that any edge $e$ belongs to at most $\frac{n^2}{4}$ of the paths $P_{ij}$ \cite{Mohar91}. Following the same arguments from before, we get that
    \[
        \lambda_{max}(L_g^+) \leq \frac{n}{4} \sum_{l} R_l \implies \lambda_2 \geq \frac{4}{n \sum_l R_l}
    \]
        
\end{proof}

For sanity check, we should check what we get for unweighted graphs. Indeed for $g=\buni$, we have
\begin{align*}
    &L_g = BB^T \implies L_g^+ = (BB^T)^+ \\ 
    &\implies B L_g^+ B^T = B(BB^T)^+ B^T = BB^+ \\
    &\implies R_l = \diag( B L_g^+ B^T)_l = \diag(BB^+)_l \leq \buni
\end{align*}

where the last line justify because $BB^+$ is projection matrix. So, we got that $R_{max} \leq 1$, thus (using the first bound) 
\[
\lambda_2 \geq \frac{2}{nD \cdot R_{max}} \geq \frac{2}{nD}
\]
which is almost the previous bound (and is better than the weaker version of $1/nD$). \\

\begin{remark} 
    We discussed here only on the algebraic connectivity, but there are many other graph properties and bounds that applies only to unweighted graphs. The above technique, and possibly more, may be used to generalize those bounds to weighted graphs, in terms of ER. 
\end{remark}

\subsection{Minimize the Mixing Time of a Graph}

Recall that the spectral gap of a graph is the smallest non-negative eigenvalue of the Laplacian. Since our graphs are assumed to be connected, this equals $\lambda_2(L_g) = \lambda_2$. Now, a smaller (faster) mixing time is equivalent to higher spectral gap, thus our goal is to maximize $\lambda_2$. 
It is well known that the maximum of $n$ variables can be "smoothed" by using the \emph{LogSumExp} (LSE) function \cite{Nesterov03} :
\[
\LSE(x_1,...,x_n) = \log \left(\sum_i e^{x_i} \right)
\]
The LSE is differentiable, convex function and for any $n$ numbers the LSE satisfy
\[
x_{max} \leq \LSE(x_1,...,x_n) \leq x_{max} + \log(n)
\]
Thus, in an attempt to analyze $\lambda_2 = 1/ \lambda_n(L^+)$, it is natural to consider the LSE of $L^+$ eigenvalues (maximize $\lambda_2$ = minimize $\lambda_n(L^+)$ ). Using the fact that
\[
\Tr \exp(M) = \sum_i e^{\lambda_{i}(M)}
\]
we define the softmaxEV (SEV) function as:
\[
\SEV(g) = \log(\Tr[\exp(L_g^+)])
\]
In order to simplify calculations, exploiting the fact that \emph{log} is a monotone function, we will analyze:
\[
f(g) = \Tr[ \exp(L_g^+)]
\]
Note that $f(g)$ is still convex as exponent preserve convexity. In Appendix \cref{appendix_LSE_opt_crit} we prove that the optimality criterion for SEV is:
\begin{align}\label{equ_sev_opt_crit}
    \Tr [ e^{L_g^+}L_g^+ (I - b_l b_l^T L_g^+)] \geq 0.
\end{align}
In the next section, we show that when \eqref{equ_sev_opt_crit} is \emph{pointwise nonnegative}  (i.e., all summands are PSD ), the criterion \eqref{equ_sev_opt_crit} is tightly connected to the $\linf$-LW optimality condition, hence provides a condition under which LW reweighting improves the mixing-time of the graph $G$.    

\paragraph{A Sufficient condition for optimality}
Note that equation~\eqref{equ_sev_opt_crit} is easily satisfied if \(\bI - b_l b_l^T L_g^+\) is a PSD matrix. Moreover the spectrum of this matrix is quite elegant - \(b_l b_l^T L_g^+\) is one-rank matrix with one non-zero eigenvalue equals to $R_l(g)$ (corresponds to the eigenvector $b_l$). So the spectrum of \(\bI - b_l b_l^T L_g^+ \) is simply
\[
\underbrace{1,...,1}_{n-1} , 1 - R_l(g)
\]
Thus, a sufficient condition for $g$ to be optimal is that the effective resistance on each edge is less than $1$. Note that this condition resembles the normalized LW condition, where the ER on each edge is (at most) $n-1$.\\
Unfortunately, there isn't feasible solution that holds this condition. To see why, note that for any feasible $g$ (equation~\eqref{equ_logdet_grad_g_prod})
\[
 -\nabla \logdet(g)^T \cdot g = \sum_l R_l(g) \cdot g_l = n-1
\]
But, if for some $g'$ we have $R_l(g') \leq 1$ then,
\[
n-1 = \sum_l R_l(g)g_l \leq \sum_l g_l = 1
\]
Contradiction\footnote{$n \leq 2$ is negligible case}. In conclusion, there is no feasible solution such that \(I - b_l b_l^T L_g^+\) is PSD. However this is not means that no optimal solution exist - the trace can be positive even if the matrix is not PSD. Unfortunately, we don't know how to derive closed-form expression for the optimal solution besides the above condition. 

\paragraph{Relative optimality}
While it isn't clear what is the optimal solution for the SEV function, we can use the convex property to try to prove that "solution 1 is better than 2". Formally, for a convex function $f$ 
\[
f(x) \leq f(y) \iff \nabla f(x)^T(y-x) \geq 0
\]

For example, taking $x=g_{lw}=\hat{g} \ , \ y=g_{uni}=(1/m)\buni$, we get the following condition
\[
\nabla f(\hat{g})^T((1/m)\boldsymbol{1} - \hat{g}) = (1/m)\sum_l \frac{\partial f}{\partial g_l} - \nabla f(\hat{g})^T \hat{g} \geq 0
\]

Using the derived formulas from \cref{appendix_LSE_opt_crit}, this gets the form
\[
-(1/m)\sum_l \Tr [e^{L_{lw}^+} L_{lw}^+ b_l b_l^T L_{lw}^+] +\Tr [ e^{L_{lw}^+} L_{lw}^+] = \Tr [e^{L_{lw}^+} L_{lw}^+ (\bI - \sum_l (1/m)b_l b_l^T L_{lw}^+)] = \Tr [e^{L_{lw}^+} L_{lw}^+ (\bI - L_{uni} L_{lw}^+)]  \geq 0
\]

\begin{remark}
The last condition is not a trivial one. To see why, note that if $\bI - L_{uni} L_{lw}^+$ is PSD then the last condition satisfies. It's easy to see that
\[
\bI - L_{uni} L_{lw}^+ \succeq 0 \iff  L_{uni} L_{lw}^+ \preceq \bI \iff  (1/m) B^T L_{lw}^+ B \preceq \bI \iff B^T L_{lw}^+ B \preceq m\bI .
\]
multiplying by $W = 1/(n-1) \cdot \diag(w_\infty)$ on both sides, and we get that
\[
 W^{1/2} B^T (B W B^T)^+ B W^{1/2} = \Pi_{W^{1/2} B^T} \preceq m W
\]
It is known that the eigenvalues of a projection matrix are either $0$ or $1$. On the other hand, the $i$'th eigenvalue of $m W$ is clearly $m (w_\infty)_i /(n-1)$. In other words, if $w_\infty \geq (n-1)/m$ then we can conclude that the mixing time of the reweighted graph is faster than the unweighted graph. Unfortunately, this can never happen -- remember that the sum of LW is $\rank(B)=n-1$, so if $(w_\infty)_{\min} > (n-1)/m$ we get
\[
n-1 \sum_i (w_\infty)_i \geq m \cdot (w_\infty)_{\min} > m \cdot ((n-1)/m) = n-1
\]
contradiction, unless $g_{\ell w} = g_{uni}$. We emphasize that this doesn't mean that LW don't improve the mixing time, since $\bI - L_{uni} L_{lw}^+ \succeq 0$ is just a sufficient condition, and simulations indeed show that LW improves the mixing time. However, we don't have any closed form condition for this improvement. 
\end{remark}


\subsection{Optimal Spectrally-Thin Trees}

Our final application of LW is in spectral thin trees. A spanning tree $T$ of $G$ is $\gamma$-spectral-thin if $L_T \preceq \gamma L_G$. In \cite{AGM10} it has been shown that there is a deep connection between Asymmetric TSP and spectrally-thin trees as finding a low STT can be used with the fractional solution of ASTP to find an approximate solution. This has been the key ingredient in \cite{AGM10} and followup work. In \cite{HO14} the authors showed that for a connected graph $G = \langle V,E \rangle$, any spanning tree $T$ of $G$ has spectral thinness of at least $O(\max\limits_{e \in T} R_G(e))$ (i.e. the maximal edge ER of $T$ in $G$). Furthermore, it is possible to find the tree that achieve this LB in polynomial time. Now, we know (\cref{sec_new_char_LS_ER}) that LW minimizes the latter. So, it is natural to ask how LW can be used to find the optimal spectral thin tree of a graph. 

Following the work of \cite{HO14}, we prove the following lemma:
\begin{lemma}\label{lem_spectral_thin_tree}
    For any connected graph $G = \langle V,E \rangle$ there is a weighted spanning tree $T_g$ such that $T_g$ has spectral thinness of $((n-1)/m) \cdot  O(\log n / \log \log n)$.
\end{lemma}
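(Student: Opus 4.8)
The strategy is to combine the spanning-tree rounding machinery of \cite{HO14} (pipage rounding with matrix concentration) with the observation from \cref{sec_new_char_LS_ER} that, after reweighting by $\linf$-Lewis weights, every edge of the graph has the \emph{same} effective resistance; hence the (weighted) uniform spanning tree of the reweighted graph is a ``balanced'' fractional spanning tree, which is exactly the kind of object \cite{HO14} rounds with only an $O(\log n/\log\log n)$ loss. First I would fix the fractional object to round. Let $w = w_\infty(B^T)$ be the unnormalized $\linf$-Lewis weights of $G$, so that in the reweighted graph with Laplacian $L_w := B\diag(w)B^T$ one has $b_l^T L_w^+ b_l = 1$ for every edge and $\sum_l w_l = \rank(B^T) = n-1$. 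Consequently the leverage score of edge $l$ in $\diag(w)^{1/2}B^T$ equals $w_l$, and $(w_l)_{l\in[m]}$ is precisely the vector of edge-marginals of the $w$-weighted uniform spanning tree of $G$; in particular it lies in the spanning-tree polytope of $G$. (Equivalently one may round from the marginals $z_l = R_l(\buni) = b_l^T L_G^+ b_l$ of the \emph{ordinary} uniform spanning tree, which by Foster's identity also sum to $n-1$; I return to this below.)

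Next I would choose the cost matrices fed to the rounding so that the resulting comparison is against the \emph{unweighted} Laplacian $L_G = BB^T$. Take $M_l := w_l^{-1}\, b_l b_l^T$. Then the expected cost of a uniformly random $w$-weighted spanning tree is $\sum_l w_l M_l = \sum_l b_l b_l^T = L_G$, and the relevant ``width'' of the instance is $\rho := \max_l \lVert M_l\rVert_{L_G^+} = \max_l w_l^{-1} b_l^T L_G^+ b_l = \max_l R_l(\buni)/w_l$. Applying the pipage-rounding / pessimistic-estimator theorem of \cite{HO14} to round $(w_l)_l$ to an integral spanning tree $T$ while controlling $\sum_{l\in T} M_l$ yields a spanning tree $T$ with $\sum_{l\in T} w_l^{-1} b_l b_l^T \preceq O\!\big(\rho\cdot \tfrac{\log n}{\log\log n}\big)\, L_G$; when $\rho = O(1)$ this is $O\!\big(\tfrac{\log n}{\log\log n}\big) L_G$. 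Put differently, the weighted tree $T_{\hat g}$ with edge-weights $\hat g_l = 1/w_l$ is $O(\log n/\log\log n)$-spectrally-thin in the sense $L_{T_{\hat g}} \preceq O(\log n/\log\log n)\, L_G$.

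Finally I would normalize to expose the $(n-1)/m$ factor. The weights $\hat g_l = 1/w_l$ give $T_{\hat g}$ total weight $\sum_{l\in T} 1/w_l$, whose expectation over the $w$-weighted uniform spanning tree is $\sum_l w_l\cdot(1/w_l) = m$, and the same matrix concentration used in the rounding step also concentrates this scalar quantity around $m$. Rescaling, i.e.\ taking $g_l := \tfrac{n-1}{m}\cdot \tfrac{1}{w_l}$ on the tree edges, brings the total tree weight to $(1\pm o(1))(n-1)$ — the normalization under which spectral thinness is not vacuous — and multiplies the bound by $(n-1)/m$, giving $L_{T_g} \preceq \tfrac{n-1}{m}\cdot O\!\big(\tfrac{\log n}{\log\log n}\big)\, L_G$, as claimed. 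This is within an $O(\log n/\log\log n)$ factor of the lower bound $\gamma_G(T)\ge R_{\max_e}(G)$ of \cite{AO15}, and it is never worse than the unweighted tree of \cite{AO15,AGM10}, since $R_{\max_e}(G)\ge (n-1)/m$ by Foster.

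The main obstacle is Step~2: bounding the width $\rho = \max_l R_l(\buni)/w_l$ by $O(1)$ (or at worst $\mathrm{polylog}$), i.e.\ showing that no $\linf$-Lewis weight $w_l$ drops much below the effective resistance of the corresponding edge in the \emph{unweighted} graph. If this holds the plan goes through verbatim; if it is problematic, the clean fallback is to round directly from the ordinary uniform-spanning-tree marginals $z_l = R_l(\buni)$ with cost matrices $z_l^{-1} b_l b_l^T$, whose width is \emph{exactly} $1$ by the identity $z_l^{-1} b_l^T L_G^+ b_l = 1$, at the cost of the clean $\linf$-LW interpretation. A secondary, bookkeeping-level point is to pin down the precise normalization of ``weighted spectral thinness'' in the statement (so the $(n-1)/m$ is a genuine gain and not an artifact of shrinking tree weights) and to quote the exact form of the \cite{HO14} concentration bound so the constant inside $O(\log n/\log\log n)$ is the one it supplies.
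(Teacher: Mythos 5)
Your instinct to round a fractional point in the spanning-tree polytope via \cite{HO14} is right, and you correctly identify the place where your argument can break: bounding $\rho = \max_l R_l(\buni)/(w_\infty)_l$. This gap is genuine. The Lewis weights and the unweighted leverage scores are two different vectors that both sum to $n-1$ and are $\leq 1$ entrywise, but they are \emph{not} pointwise comparable, so $\rho$ need not be $O(1)$ (or even polylogarithmic), and you have given no argument for it. The paper avoids this obstacle entirely by working in a different inner product: instead of conjugating the rank-one costs by $L_G^{+/2}$, it conjugates by $L_{\overline{w}}^{+/2}$, where $\overline{w} = \tfrac{n-1}{m}\,w_\infty$. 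With this choice the vector $w_l := \sqrt{(n-1)/m}\;L_{\overline{w}}^{+/2}b_l$ has $\lVert w_l\rVert^2 = b_l^T(BW_\infty B^T)^+ b_l = 1$ \emph{exactly}, by the defining fixed-point equation for $\linf$-Lewis weights (Definition~\ref{equ_inf_LW}). This is precisely where Lewis weights (and not leverage scores) are essential: they are the unique reweighting that makes every cost vector a unit vector, so the width is identically $1$ and there is nothing left to bound. With $p_l = \tfrac{1}{n-1}(w_\infty)_l$ the isotropy $\sum_l p_l w_l w_l^T = \tfrac{1}{n-1}\,\bI_{\mathrm{im}\,L_G}$ also holds, so Lemma~\ref{lem_independent_set} applies directly and yields a basis $T$ with $\sum_{l\in T}\tfrac{n-1}{m}\,b_l b_l^T = L_{T_{g_0}} \preceq O(\log n/\log\log n)\,L_{\overline{w}}$, i.e.\ the resulting tree carries \emph{uniform} weights $g_0 = \tfrac{n-1}{m}\buni$, not the inverse-Lewis weights you propose.

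The $(n-1)/m$ factor is then not produced by rescaling the tree weights as in your Step~3, but by the separate (and crucial) pointwise inequality $w_\infty \leq \buni$, which gives $L_{\overline{w}} = \tfrac{n-1}{m}BW_\infty B^T \preceq \tfrac{n-1}{m}BB^T = \tfrac{n-1}{m}L_G$; composing yields $L_{T_{g_0}} \preceq \tfrac{n-1}{m}\cdot O(\log n/\log\log n)\,L_G$. Your fallback (marginals $z_l = R_l(\buni)$, costs $z_l^{-1}b_lb_l^T$) does have width exactly $1$ and would give a valid spanning tree, but notice that it proves the lemma using only $\ell_2$ leverage scores — the $(n-1)/m$ is then a free rescaling rather than a structural gain, and the argument has nothing to do with $\linf$-Lewis weights. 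So if you want a proof in the spirit of the paper's lemma, the fix is to replace your reference matrix $L_G$ by the Lewis-weighted Laplacian $L_{\overline{w}}$ in the rounding step, and only at the end pass from $L_{\overline{w}}$ to $L_G$ via $w_\infty \leq \buni$.
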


The proof is very similar to \cite{HO14} and will make use of the following lemma: 
\begin{lemma}\label{lem_independent_set}
    Let $w_1,\dots,w_m \in \R^n$ be an $n$-dimensional vectors with unit norm. Let $p_1,\dots,p_m$ be a probability distribution on these vectors such that the covariance matrix is $\sum_i p_i w_i w_i^T = (1/n) \bI$. Then, there is a polynomial algorithm that computes a subset $S \subseteq [m]$ such that $\{ w_i \mid i \in S \}$ forms a basis of $\R^n$, and $|| \sum_{i \in S} w_i w_i^T || \leq \alpha$ where $\alpha = O(\log n / \log \log n)$.
\end{lemma}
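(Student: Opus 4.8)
Here is the plan. The idea is to realize $S$ as a \emph{basis of the linear matroid} of $w_1,\dots,w_m$, obtained by rounding a natural fractional point, and to control $\bigl\|\sum_{i\in S}w_iw_i^T\bigr\|$ via a matrix Chernoff inequality; this is the strategy of \cite{HO14}. Write $M_i:=w_iw_i^T$, so each $M_i$ is a rank-one PSD matrix with $\|M_i\|=\|w_i\|^2=1$, and the hypothesis reads $\sum_i p_iM_i=\tfrac1n\bI$. Since $\tfrac1n\bI$ is nonsingular and $\mathrm{im}\bigl(\sum_i p_iM_i\bigr)\subseteq\mathrm{span}\{w_i\}$, the $w_i$ span $\R^n$, so the linear matroid $\mathcal M=([m],\mathcal I)$ with $\mathcal I=\{A:\{w_i\}_{i\in A}\text{ linearly independent}\}$ has rank $n$; hence any matroid basis of $\mathcal M$ is simultaneously a vector-space basis of $\R^n$, which yields the first conclusion once we produce a basis of $\mathcal M$ with the claimed norm bound.

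The first step is to show the marginal vector $x:=(np_1,\dots,np_m)$ lies in the base polytope of $\mathcal M$. Non-negativity is clear; $p_iM_i\preceq\sum_j p_jM_j=\tfrac1n\bI$ forces $np_i\le 1$; and $\sum_i x_i=n=\rank(\mathcal M)$. For a set $A\subseteq[m]$ let $P_A$ be the orthogonal projection onto $\mathrm{span}\{w_i\}_{i\in A}$, so $r_A:=\rank_{\mathcal M}(A)=\Tr P_A$ and $\langle P_A,M_i\rangle=w_i^TP_Aw_i=\|w_i\|^2=1$ for $i\in A$; testing $\sum_{i\in A}p_iM_i\preceq\tfrac1n\bI$ against $P_A$ gives $\sum_{i\in A}p_i\le r_A/n$, i.e.\ $\sum_{i\in A}x_i\le r_A$. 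Thus $x$ is in the base polytope, and there is a polynomial-time rounding procedure (pipage/swap rounding on matroids) producing a random basis $S$ of $\mathcal M$ with $\Pr[i\in S]=x_i$, so that in particular $\E\bigl[\sum_{i\in S}M_i\bigr]=\sum_i x_iM_i=\bI$.

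The second step is the concentration. Applying the matrix Chernoff bound for swap rounding (as in \cite{HO14}) to the matrices $M_i$ with $0\preceq M_i\preceq\bI$ (so the per-term bound is $R=1$) and $\bigl\|\E[\sum_{i\in S}M_i]\bigr\|=1$, one obtains a tail of the form
\[
\Pr\Bigl[\,\bigl\|\textstyle\sum_{i\in S}M_i\bigr\|\ge t\,\Bigr]\;\le\;n\,\Bigl(\tfrac{e}{t}\Bigr)^{t}.
\]
Choosing $t=C\log n/\log\log n$ for a large absolute constant $C$ makes the right-hand side $<1$, so some basis $S$ of $\mathcal M$ satisfies $\bigl\|\sum_{i\in S}M_i\bigr\|\le\alpha:=C\log n/\log\log n=O(\log n/\log\log n)$, which is the second conclusion. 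For the algorithmic claim one may either run the randomized swap rounding $O(1)$ times and keep the first output whose norm (trivially computable) is $\le\alpha$, or derandomize by the method of pessimistic estimators, using $\Tr\exp\bigl(\theta\sum_{i\in S}M_i\bigr)$ with $\theta=\Theta(\log\log n)$ as the potential and conditioning on the successive swap steps, exactly as in \cite{HO14}.

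The main obstacle is the matrix concentration inequality for matroid swap rounding: plain matrix Chernoff handles \emph{independent} Bernoulli sampling with marginals $x_i$, but that does not output a basis, whereas the matroid-constrained rounding is a correlated process, and one must check that the correlations introduced by the exchange steps are ``negative enough'' for a matrix-MGF / martingale argument to close (the matrix analogue of the Chekuri--Vondr\'ak--Zenklusen concentration for swap rounding, established in \cite{HO14}). Everything else --- membership of $(np_i)_i$ in the base polytope, the identification of a matroid basis with a basis of $\R^n$, and the optimization of $t$ --- is routine; note that the $\log n/\log\log n$ rather than $\log n$ is an artifact of the per-term bound $R=1$ not being small, since the Chernoff exponent $t/R$ then equals $t$ and $t\log t\asymp\log n$ forces $t\asymp\log n/\log\log n$.
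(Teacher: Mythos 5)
The paper does not actually prove this lemma: it states it as an imported result from \cite{HO14} (Harvey--Olver), to be used as a black box in the proof of the preceding spectral-thin-tree lemma. Your proof is a correct reconstruction of the Harvey--Olver argument that the paper implicitly relies on: you correctly verify that the scaled marginals $x_i = np_i$ lie in the base polytope of the linear matroid of $\{w_i\}$ (the bound $np_i\le 1$ from $p_iM_i\preceq \tfrac1n\bI$ and $\|M_i\|=1$, and the rank inequality via testing against the projection $P_A$ are exactly right), you correctly reduce the norm bound to the matrix-Chernoff concentration inequality for matroid pipage/swap rounding with $\mu_{\max}=1$, and you correctly note that the per-term bound $R=1$ is what forces $t\asymp\log n/\log\log n$ rather than a bare $\log n$. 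The derandomization via pessimistic estimators on $\Tr\exp(\theta\sum_{i\in S}M_i)$ is indeed the content of \cite{HO14}. In short, since the paper offers no proof of its own, there is nothing to diverge from; your proof faithfully reproduces the argument of the cited source, with the one nontrivial ingredient (concentration under the correlated swap process, not under independent sampling) correctly flagged as the step that cannot be replaced by vanilla matrix Chernoff.
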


\begin{proof}[Proof of Claim~\ref{lem_spectral_thin_tree}]
\normalfont
Let $w_\infty$ be the LW of the graph. Define $g_0 = \frac{n-1}{m}\buni$ , $\overline{w} = \frac{n-1}{m}w_\infty$. For each $e_l \in E$, define $w_l = \sqrt{(g_0)_l} \cdot (L_{\overline{w}})^{+/2} b_l$, $\boldsymbol{p} = \frac{1}{n-1}w_\infty$. Indeed, by ~\eqref{equ_LS_facts} and the multiplicative property of pseudo-inverse we have
\begin{align*}
    \sum_l p_l = 1 \ , \ ||w_l||^2 = (g_0)_l \cdot \frac{m}{n-1} b_l^T (B W_\infty B^T)^+ b_l = \frac{n-1}{m} \cdot \frac{m}{n-1} = 1
\end{align*}
In addition,
\begin{align*}
    \sum_{e_l \in E} p_l w_l w_l^T &= \frac{1}{n-1} (L_{\overline{w}})^{+/2} \left (\sum_l (w_\infty)_l (g_0)_l b_l b_l^T \right) (L_{\overline{w}})^{+/2} \\
    &= \frac{1}{n-1} (L_{\overline{w}})^{+/2} \left( \sum_l \overline{w}_l b_l b_l^T \right) (L_{\overline{w}})^{+/2} \\
    &= \frac{1}{n-1} \bI_{\text{im} L_G}
\end{align*}
So, we can view the vectors $\{w_l \mid l \in E \}$ as $(n-1)$-dimensional vectors (in the linear span of $B$) and apply \cref{lem_independent_set} to get a set $T \subset E$ of size $n-1$ ($T$ forms a basis in $\R^{n-1}$) such that $\{ w_e \mid e \in T\}$ is linearly independent and
\[
\sum_{e_l \in T} w_l w_l^T \preceq O(\log n/\log \log n) \bI_{\text{im} L_G} 
\]
The first two conditions imply that $T$ induced a spanning tree of $G$. Then, the last condition gives us (simple rearrangement)
\[
\sum_{e_l \in T} (g_0)_l b_l b_l^T = L_{T_g} \preceq O(\log n/\log \log n) L_{\overline{w}}
\]
Now, we know LW are at most $1$, so, $\overline{w} = \frac{n-1}{m}w_\infty \leq \frac{n-1}{m} \buni$, implying that
\[
L_{\overline{w}} \preceq \frac{n-1}{m} BB^T = \frac{n-1}{m} L_G
\]
Thus we got that
\[
L_{T_g} \preceq ((n-1)/m) \cdot O(\log n / \log \log n) L_G
\]
\end{proof}

\paragraph{The ATSP Angle}
It would be interesting to understand if and how can  \emph{weighted STTs} be used for ATSP  rounding schemes, a-la \cite{AGM10}. Nevertheless, we emphasize two advantages of LW-weighted STTs:  First, in contrast to the unweighted case  \cite{AO15},
we guarantee to find an $O((n-1)/m)$-STT 
\emph{regardless} of the original graph, which is optimal -- to see why, recall that LW is the optimal minimizer of the maximal edge-ER. Taking $g_{\ell w}$ and $g_{uni}$, both normalized, we have 
\[
R_{max}(g_{\ell w}) = n-1 \leq R_{max}(g_{uni}) = m \cdot R_{max}(\buni) \implies R_{max}(\buni) \geq \frac{n-1}{m}.
\]
Second, the unweighted STT guaranteed by \cite{AO15} can at best achieve this bound with total weight $n-1$. By contrast, the total weight of our weighted STT is $\frac{(n-1)^2}{m} \leq n-1$ (and for most graphs $\ll$). In this sense, the tree from Lemma \ref{lem_spectral_thin_tree} is always "cheaper".

\section{Computing Lewis Weights}\label{sec_computing_LW}

In this section we outline recent accelerated methods for computing $\linf$-LW of a matrix via \emph{repeated leverage-score} computations (in out case, Laplacian linear systems). 
The first method, due to \cite{ccly} (building on \cite{CP14}), runs in input-sparsity time and is very practical, but provides only low-accuracy solution. 
The second method, due to \cite{flps21}, 
provides a \emph{high-accuracy} algorithm 
for $\ell_p$-LW, using $\tilde{O}(p^3 \log (1/\eps))$ leverage-score computations (Laplacian linear systems in our case).   
Fortunately, as we show below, low-accuracy is sufficient the ERMP application, hence we focus on the algorithm of \cite{ccly}. 
However, we remark that in most optimization applications, 
 computing $\ell_p$-LW for $p=n^{o(1)}$ is suffices (see \cite{flps21,LS14}) in which the \cite{flps21} algorithm yields a high-accuracy $O(m^{1+o(1)}\log(1/\eps))$ time algorithm for Laplacians.


\subsection{Computing $\linf$-Lewis Weights to Low-Precision \cite{ccly}}


\begin{algorithm}
\caption{Computing $\linf$-Lewis weight}
\label{alg_LW_computation_ccly}

\hspace*{\algorithmicindent}\textbf{Input:} A matrix $A \in \R^{m \times n}$ with rank $k$, $T$ - number of iterations. \\
\hspace*{\algorithmicindent}\textbf{Result:} $\linf$-LW, $w \in \R^m$. \\
\begin{algorithmic}
\State Initialize: $w_l^{(1)} = \frac{k}{m}$ for $l=1, \dots ,m$. 
\For{$t = 1, \dots ,T-1$} 
\State $w_l^{(t+1)} = w_l^{(t)} \cdot a_l^T (A^T \diag(w^{(t)})A)^+ a_l$ ; for $l=1, \dots, m$ 
\Comment{// We can use here a Laplacian LS solver.}
\EndFor
\State $(w)_i = \frac{1}{T} \sum\limits_{t=1}^T w_i^{(t)}$ for $i=1,\dots ,m$ \\
\Return $w$
\end{algorithmic}
\end{algorithm}

Since we use \cite{ccly} in a black-box, we only give a high level overview of Algorithm \ref{alg_LW_computation_ccly}. 
The basic idea of \cite{ccly, CP14} is 
to use the observation that for (exact) $\linf$-LW, we have 
\begin{align}\label{equ_ccly_alg_fixed_equ}
    w_i = \tau_i(W^{1/2} A). 
\end{align}
\cite{ccly} use \eqref{equ_ccly_alg_fixed_equ} to define a fixed point iteration described in \Cref{alg_LW_computation_ccly}. In other words, the algorithm repeatedly computes the leverage scores of the weighted matrix, updating the weights according to the average of past iterations, until an (approximate) fixed-point is reached. The performance of this algorithm is as follows:
Let $\epsilon >0$ and denote by $\tilde{w}$ the output of the algorithm for input $A \in \R^{m \times n}$. Our goal is to compute the approximate LW, $\tilde{w}$, such that 
\begin{align*}
 \tilde{w} \approx_\epsilon w_\infty  \numberthis \label{equ_mul_apx_LW}
\end{align*}
where $a \approx_\epsilon b$ iff $a = (1 \pm \epsilon)b$. However, the approximation guarantee of \cref{alg_LW_computation_ccly} is in the \emph{optimality} sense, meaning  
\begin{align*}
\tau_i(\tilde{W}^{1/2} A) \approx_\epsilon \tilde{w}_i  , \ \  \forall i \in [m]. \numberthis \label{equ_opt_apx_LW}
\end{align*}
Fortunately, \cite{flps21} recently showed that $\epsilon$-approximate optimal LS \eqref{equ_opt_apx_LW} imply 
$\epsilon$-approximate LW \eqref{equ_mul_apx_LW}: 
\begin{theorem}
    (LS-apx $\Rightarrow$ LW-apx,  \cite{flps21}). For all $\epsilon \in (0, 1)$, Algorithm~\ref{alg_LW_computation_ccly} outputs a ($1+\epsilon$)-approximation of LW with $T = O(\epsilon^{-1} \log \frac{m}{n})$ iterations.
\end{theorem}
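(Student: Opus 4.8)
The stated result packages two ingredients, and I would prove them in sequence. First, the \emph{iteration-complexity} part: after $T$ rounds of Algorithm~\ref{alg_LW_computation_ccly} the time-average $\bar w := \frac{1}{T}\sum_{t=1}^{T} w^{(t)}$ is $\epsilon$-optimal in the sense of~\eqref{equ_opt_apx_LW}, i.e.\ $\tau_i(\bar W^{1/2}A)\approx_\epsilon \bar w_i$ for every $i\in[m]$, provided $T = \Theta(\epsilon^{-1}\log(m/k))$ where $k=\rank(A)$; since $k\le n$ (and $k=n-1$ for incidence matrices) this is $\Theta(\epsilon^{-1}\log(m/n))$, matching the stated bound. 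Second, the \emph{reduction} of~\cite{flps21} quoted in the excerpt, which upgrades $\epsilon$-optimality~\eqref{equ_opt_apx_LW} to the multiplicative guarantee $\bar w \approx_{O(\epsilon)} w_\infty$ of~\eqref{equ_mul_apx_LW}. Only the first part needs genuine work; the second is invoked verbatim.

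The right object for the first part is the concave potential $\Phi(w):=\logdet\!\big(A^\top\diag(w)A\big)$, restricted to the scaled simplex $\Delta_k := \{w\in\R^m_{\ge 0} : \buni^\top w = k\}$. The vector $w_\infty$ of~\eqref{equ_inf_LW} is precisely the maximizer of $\Phi$ over $\Delta_k$: writing $M(w):=A^\top\diag(w)A$, the first-order condition $\partial_i\Phi(w)=a_i^\top M(w)^{-1}a_i=\lambda$ for all $i$ in the support, together with $\sum_i w_i\partial_i\Phi(w)=\sum_i\tau_i(W^{1/2}A)=k=\buni^\top w$, forces $\lambda=1$ and recovers the fixed-point equation~\eqref{eq_Lp_LW_def}; this is the D-optimal-design reading of~\eqref{equ_John_Ellipsoid}. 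The iteration of Algorithm~\ref{alg_LW_computation_ccly} is exactly the multiplicative (entropic-mirror-descent) step for this maximization, $w_i^{(t+1)}=w_i^{(t)}\,\partial_i\Phi(w^{(t)})=\tau_i\big((W^{(t)})^{1/2}A\big)$, and since leverage scores of a full-support reweighting of $A$ sum to $k$, every iterate --- hence $\bar w$ --- lies in $\Delta_k$, as does the initialization $w^{(1)}=(k/m)\buni$.

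The engine of the analysis is a per-step monotonicity-with-progress inequality,
\[
\Phi(w^{(t+1)}) - \Phi(w^{(t)}) \ \ge\ k\cdot\KL\big(w^{(t+1)}/k \,\big\|\, w^{(t)}/k\big) \ \ge\ 0,
\]
which I would prove by diagonalizing: with $\tilde u_i := M(w^{(t)})^{-1/2}a_i$ one has $\sum_i w_i^{(t)}\tilde u_i\tilde u_i^\top = \bI_k$ and $M(w^{(t)})^{-1/2}M(w^{(t+1)})M(w^{(t)})^{-1/2} = \sum_i w_i^{(t)} r_i^{(t)}\tilde u_i\tilde u_i^\top$ with $r_i^{(t)} := a_i^\top M(w^{(t)})^{-1}a_i = w_i^{(t+1)}/w_i^{(t)}$; expanding this determinant by Cauchy--Binet produces a probability distribution $\{p_S\}_{|S|=k}$ whose marginals satisfy $\sum_{S\ni i}p_S = w_i^{(t+1)}$ (Jacobi's formula), so weighted AM--GM over $\{p_S\}$ gives $\det\!\big(\sum_i w_i^{(t)} r_i^{(t)}\tilde u_i\tilde u_i^\top\big)\ge\prod_i (r_i^{(t)})^{w_i^{(t+1)}}$, and Gibbs' inequality turns $\sum_i w_i^{(t+1)}\log r_i^{(t)}$ into the displayed $\KL$. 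Telescoping over $t=1,\dots,T-1$ and bounding $\Phi(w_\infty)\le\logdet(A^\top A)$ (since $W_\infty\preceq\bI$, as $w_\infty\le\buni$) against $\Phi(w^{(1)})=\logdet(A^\top A)+k\log(k/m)$ yields $\sum_t\KL\big(w^{(t+1)}/k \,\big\|\, w^{(t)}/k\big)\le\log(m/k)$.

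It remains to convert this aggregate bound into a per-coordinate, two-sided multiplicative guarantee \emph{for the average} $\bar w$, and to do so at rate $1/\epsilon$ rather than the $1/\epsilon^2$ one would get by combining the $\KL$-telescoping with Pinsker --- this is the crux, and the reason Algorithm~\ref{alg_LW_computation_ccly} returns the average of the iterates. The upper direction, $a_i^\top M(\bar w)^{-1}a_i \le \tfrac1T\sum_t a_i^\top M(w^{(t)})^{-1}a_i = \tfrac1T\sum_t r_i^{(t)}$, follows from joint convexity of $X\mapsto a_i^\top X^{-1}a_i$ on the PSD cone; rewriting $\sum_t (r_i^{(t)}-1) = \log(w_i^{(T)}/w_i^{(1)}) + \sum_t (r_i^{(t)}-1-\log r_i^{(t)})$ reduces the task to bounding the nonnegative Bregman sum $\sum_t (r_i^{(t)}-1-\log r_i^{(t)})$, which --- together with the matching lower direction from concavity of $\Phi$ --- is controlled by the telescoped potential through the online-to-batch/regret viewpoint of~\cite{ccly} (entropic mirror descent against the comparator $w_\infty$, with initial Bregman divergence $\KL(w_\infty/k \,\|\, w^{(1)}/k)\le\log(m/k)$). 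Pinning down exactly this step --- the $1/\epsilon$ rate for the averaged output with an $\ell_\infty$-multiplicative conclusion --- is where I expect the real difficulty to lie; the potential-and-telescoping part is essentially forced. Combining it with the~\cite{flps21} reduction finishes the proof.
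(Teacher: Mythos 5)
The paper does not prove this statement --- it is imported as a black box, with the LS-apx~$\Rightarrow$~LW-apx direction cited to \cite{flps21} and the iteration complexity resting on \cite{ccly} (see the surrounding text of \cref{sec_computing_LW}). So there is no in-paper proof to compare against; what you have written is a reconstruction of the cited results. Your decomposition is the right one, and the framework you set up for part (1) is correct and matches \cite{ccly}: $\Phi(w)=\logdet(A^\top WA)$ on the scaled simplex, $w_\infty$ as the D-optimal maximizer via the first-order condition, the multiplicative update as the entropic step, the Cauchy--Binet/Jacobi calculation giving $\Phi(w^{(t+1)})-\Phi(w^{(t)})\ge k\cdot\KL\big(w^{(t+1)}/k\,\big\|\,w^{(t)}/k\big)$, and the telescoping bound $\sum_t\KL\le\log(m/k)$ using $W_\infty\preceq I$ against the uniform initialization. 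All of that is sound.

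The genuine gap is at exactly the step you flag, and the route you sketch to close it would fail. The convexity bound $a_i^\top M(\bar w)^{-1}a_i\le\frac1T\sum_t r_i^{(t)}$ followed by the decomposition
\begin{equation*}
\sum_t\big(r_i^{(t)}-1\big)=\log\!\big(w_i^{(T)}/w_i^{(1)}\big)+\sum_t\big(r_i^{(t)}-1-\log r_i^{(t)}\big)
\end{equation*}
cannot deliver the claimed $T=O(\epsilon^{-1}\log(m/k))$ rate, because the per-coordinate Bregman sum is \emph{not} $O(\log(m/k))$ in the worst case. Take a row with $\tau_i(A)=\Theta(1)$. Then already $r_i^{(1)}=a_i^\top\big((k/m)A^\top A\big)^{+}a_i=(m/k)\tau_i(A)=\Theta(m/k)$, so the very first Bregman term $r_i^{(1)}-1-\log r_i^{(1)}=\Theta(m/k)$, and your bound reads $\frac1T\sum_t r_i^{(t)}\ge 1+\Theta(m/k)/T$; suppressing this to $1+\epsilon$ forces $T=\Omega\!\big(m/(k\epsilon)\big)$, exponentially worse than the theorem. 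Notice that in this same scenario the \emph{true} quantity $a_i^\top M(\bar w)^{-1}a_i$ is actually fine --- $w_i^{(2)}=\tau_i(A)=\Theta(1)$ so $\bar w_i=\Theta(1)$ and $a_i^\top M(\bar w)^{-1}a_i\le 1/\bar w_i=O(1)$ --- which shows it is your first-step convexity bound that is irreparably lossy, not the algorithm. So the crux you defer is not merely ``the regret bookkeeping'': the \cite{ccly} argument must, and does, depart from the convexity-plus-Bregman route, and that departure is the entire technical content of the iteration-complexity half. Part (2), invoking the \cite{flps21} reduction verbatim, is fine as stated.
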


\paragraph{Approximate Lewis Weights:}
Since the output of \Cref{alg_LW_computation_ccly} satisfies
$\tilde{w} \approx_\epsilon w_\infty$, 
this implies 
\[
\Tilde{L_{w}} = \sum_l (\tilde{g_{w}})_l b_l b_l^T \approx_\epsilon \sum_l (g_{\ell w})_l b_l b_l^T = L_{\ell w}
\]
where $\tilde{g_{w}} ,g_{\ell w} $ is the weight vector defined by $\tilde{w} ,w_\infty$, respectively. Thus,
\[
\Tr \Tilde{L_{w}} \approx_\epsilon \Tr L_{\ell w}
\]
In other words, computing the approximate LW guarantees  a-$(1+\epsilon)$ factor for the trace of $L_{\ell w}$. Since our objective function is solely the trace, that means approximate LW will gives us a-$(1+\epsilon)$ factor for our approximation ratio. Since we are only shooting for $O(1)$-approximation for ERMP, the latter is sufficient, so \Cref{alg_LW_computation_ccly} can be used to produce the approximate-ERMP weights claimed in Theorems \ref{thm_LW_apx_ERMP_trees} and \ref{thm_UB_general_graphs}. Note that this also applies to the rest of results, as the proximity notion in~\eqref{equ_mul_apx_LW} also implies spectral approximation.



\section{
An $\Omega(n/\log^2 n)$ Separation of A-vs-D Design 
for General PSD Matrices}
\label{sec_optimal_design}



We finish by justifying our focus on Laplacians. We show that, for general PSD matrices, the problem of A- vs D-optimal design are in fact very different, in the sense that the $\linf$-LW cannot provide better than $\Omega(n/\log^2 n)$ approximation to the A-optimal design problem.


The motivation for this separation is a reformulation of A- and D-optimal design as \emph{Harmonic mean} (HM) vs. \emph{Geometric mean} (GM) minimization problem. We show an unbounded gap between the HM and GM of a general sequence (up to some constraint, more on that later). We then construct a simple experiment matrix (i.e a pointset in space) for which the D-optimal solution (given by $\linf$-LW) is an $\Omega(n/log^2(n))$ multiplicative approximation to the A-optimal solution.

\paragraph{A-optimal design as Harmonic mean optimization:}

We begin with A-optimal design. Given an experiment matrix, $V = (v_1, \dots, v_m) \  \in \R^{n \times m}$, the dual problem of A-optimal design is

\begin{equation*}
\begin{array}{ll@{}ll}
\text{maximize}  & \Tr(W^{1/2})^2&\\
\text{subject to}&  v_i^T W v_i \leq 1 \ , \ i=1,\dots,m
\end{array}
\end{equation*}
with $W \in \boldsymbol{S}^n_+$ (see also~\eqref{equ_ERMP_dual_prob}). Note that the constraint define an ellipsoid $\mathcal{E}(W) = \{ x \in \R^n \mid x^T W x \leq 1 \}$ that enclose on the pointset $\{v_i\}_{i=1}^m$. We can write our objective function (neglecting the square since it does not change the optimal solution, just its value) as (see~\eqref{equ_ellipsoid_semiaxis_def}):
\[
\Tr \ W^{1/2} = \sum_{i=1}^n \lambda_i(W)^{1/2} = \sum_{i=1}^n (\sigma_i)^{-1} = \frac{n}{HM(\boldsymbol{\sigma})} ,
\]
where $\boldsymbol{\sigma}$ is the semiaxis lengths of the ellipsoid $\mathcal{E}(W)$.

In other words, the A-optimal design problem attempts to minimize the \emph{Harmonic mean} of the semiaxis lengths of the enclosing ellipsoid $\mathcal{E}(W)$. Thus, A-optimal design is equivalent to the following problem

\begin{equation*}
\begin{array}{ll@{}ll}
\text{minimize}  & HM(\boldsymbol{\sigma}(W))&\\
\text{subject to}&  v_i^T W v_i \leq 1 \ , \ i=1,\dots,m
\end{array}
\end{equation*}

\paragraph{D-optimal design as Geometric mean minimization}
We can do the same for D-optimal design. The dual problem of D-optimal designs is
\begin{equation*}
\begin{array}{ll@{}ll}
\text{maximize}  & \logdet(W) &\\
\text{subject to}&  v_i^T W v_i \leq 1 \ , \ i=1,\dots,m
\end{array}
\end{equation*}
with $W \in \boldsymbol{S}^n_+$. Same as before, we can write our objective function as
\[
\log |W| = \log(\lambda_1(W) \cdots \lambda_n(W)) = \log((\sigma_1 \cdots \sigma_n)^{-2}),
\]
and we get that,
\[
\logdet \ W = \log |W| = -2n\log((\sigma_1 \cdots \sigma_n)^{1/n}) = -2n\log(GM(\boldsymbol{\sigma})),
\]
where we used the same notations as above.

In other words, maximize the function $\logdet(W)$, is the same as minimizing the \emph{Geometric mean} of the semiaxis lengths of the enclosing ellipsoid $\mathcal{E}(W)$. Thus, D-optimal design is equivalent to  the following problem 

\begin{equation*}
\begin{array}{ll@{}ll}
\text{minimize}  & GM(\boldsymbol{\sigma}(W))&\\
\text{subject to}&  v_i^T W v_i \leq 1 \ , \ i=1,\dots,m
\end{array}
\end{equation*}

\begin{remark}
    The geometric interpretation of D-optimal design as the minimal volume enclosing ellipsoid is also known as Outer John Ellipsoid, and is a well studied problem. For further details see also \cite{KT93,ccly,ZF20,todd16,Puk06}.
\end{remark}

\subsection{HM and GM comparison}\label{sec_HM_GM_comp}

As we just saw,  A-optimality and D-optimality are correlated to the Harmonic and Geometric mean of the same value - semiaxis lengths of some enclosing ellipsoid. We know that for any vector $\x \in \R^n$ , \(HM(\boldsymbol{x}) \leq GM(\boldsymbol{x})\) , with equality achieved only at a uniform vector, and that they are continuous functions of $\x$. So, It is natural to think that they are "monotone" in some sense. For example, one might think that for two vectors $\x,\x' \in \R^n$ 
\[
GM(\x) \leq \alpha \ GM(\x') \iff HM(\x) \leq \alpha  \ HM(\x') ,
\]
for some constant factor $\alpha$. Note that this kind of property makes the two problems equivalent (up to maybe some constant factors). 
We show that this is not the case. 
The following technical claim shows that, from any sequence $\x$ of $n$ positive numbers, it is possible to construct a related sequence $\x'$ such that $GM(\x') = GM(\x)$ but $HM(\x') \ll HM(\x)$. We defer the proof to \cref{appendix_prf_clm_hm-gm}.
\begin{claim}\label{clm_HM_GM_gap}
    Let $n \in \mathbb{N}$ , $\x \in \R^n_+$. For any $0<t<1 \in \R$, there exist $\x' \in \R^n_+$ such that $GM(\x')=GM(\x)$ but $HM(\x') = t \cdot HM(\x)$ 
\end{claim}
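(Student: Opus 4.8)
The plan is a one-parameter perturbation that keeps the product of the entries fixed while driving the sum of their reciprocals to infinity. Assume $n \geq 2$ (for $n = 1$ the harmonic and geometric means coincide, so the only $\x'$ with $GM(\x') = GM(\x)$ is $\x$ itself and the statement is degenerate; this edge case should be excluded or noted in passing). First I would single out two coordinates, say the first two, and for a parameter $s \geq 1$ define
\[
\x'(s) := \bigl(s\,x_1,\; x_2/s,\; x_3,\; \dots,\; x_n\bigr) \in \R^n_+ .
\]
Since $\prod_i \x'(s)_i = \prod_i x_i$ for every $s$, we immediately get $GM(\x'(s)) = GM(\x)$, so the geometric-mean constraint is satisfied for free, and only the harmonic mean needs to be tracked.

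Next I would view the harmonic mean as a function of $s$. Writing $C := \sum_{i=3}^n x_i^{-1} \geq 0$, we have
\[
h(s) := HM(\x'(s)) = \frac{n}{\tfrac{1}{s x_1} + \tfrac{s}{x_2} + C}.
\]
The map $h$ is continuous on $[1,\infty)$, satisfies $h(1) = HM(\x)$, and since the denominator tends to $+\infty$ as $s \to \infty$, we have $h(s) \to 0$. Hence one can pick $N$ large enough that $h(N) < t\cdot HM(\x)$.

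The final step is the intermediate value theorem applied to $h$ on $[1,N]$: because $0 < t < 1$ we have $h(1) = HM(\x) > t\cdot HM(\x)$, while $h(N) < t\cdot HM(\x)$, so there is some $s^\ast \in (1,N)$ with $h(s^\ast) = t\cdot HM(\x)$. Taking $\x' := \x'(s^\ast) \in \R^n_+$ then gives $GM(\x') = GM(\x)$ and $HM(\x') = t\cdot HM(\x)$, which is exactly the claim. Equivalently, one can package the last two steps as: the image $h([1,\infty))$ is a subinterval of $(0,\infty)$ containing $HM(\x)$ with infimum $0$, hence contains all of $(0,HM(\x)]$, in particular the target value $t\cdot HM(\x)$.

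I do not expect a genuine obstacle here — the argument is elementary. The only points that require a moment of care are that the perturbation stays inside $\R^n_+$ (it does, as $s^\ast > 0$), and the degenerate $n=1$ case where the two means coincide; everything else is a direct continuity/IVT argument, and no estimate beyond "denominator $\to \infty$" is needed.
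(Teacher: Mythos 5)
Your proof is correct and uses essentially the same construction as the paper: the two‐coordinate perturbation $(p x_1,\, x_2/p,\, x_3,\dots,x_n)$, which preserves the product (hence the GM) while freely moving the reciprocal sum. The only cosmetic difference is that the paper pins down the parameter $p$ by explicitly solving the resulting quadratic equation (and thereby also gets an exact formula and a remark on how the AM grows), whereas you locate it non-constructively via continuity and the intermediate value theorem; both are valid, and your note about the degenerate $n=1$ case is a reasonable caveat the paper leaves implicit.
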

Note that this claim is a bit stronger than what we need - we don't even have to "pay" at the GM (although we do pay at the AM) in order to decrease the HM \emph{as much as we want}.

\subsection{A Separation of A-Optimal and D-Optimal Design for General PSD Matrices}

As we just said, the HM and GM aren't approximately close for general sequences (i.e there is an unbounded gap between HM and GM). Since we can interpret A-optimal and D-optimal design as "HM vs GM" problem, this naturally raises the following claim
\begin{claim}\label{clm_A_D_gen_gap}
    There exist $V \in \R^{n \times m}_+$ such that 
    \[
        \Tr (Vg^*_D V^T)^{-1} \gg \Tr (Vg^*_AV^T)^{-1}
    \]
    where $g^*_D,g^*_A$ are the optimal solutions for D-optimal and A-optimal design, respectively.
\end{claim}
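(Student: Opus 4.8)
The plan is to short‑circuit the HM‑vs‑GM intuition with a single, completely explicit example: a diagonal nonnegative matrix for which the D‑optimal and A‑optimal designs are solvable in closed form and diverge by a factor of $\Omega(n)$ (which in particular yields the ``$\gg$'' asked for). Take $V=\diag(b_1,\dots,b_n)\in\R^{n\times n}_+\subseteq\R^{n\times m}_+$ with $b_1=1$ and $b_2=\dots=b_n=R$ for a large parameter $R$; every entry is nonnegative, as required. Since $V$ is diagonal, $\sum_i g_i v_iv_i^T=\diag(g_ib_i^2)$, so both design problems \emph{separate} over coordinates.

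First I would solve the two problems. For D‑optimal design, $\max_g\logdet\diag(g_ib_i^2)=\sum_i\log b_i^2+\max_g\sum_i\log g_i$ over the simplex $\{g\ge 0,\ \buni^Tg=1\}$, whose maximizer is the uniform vector $g^*_{D,i}=1/n$; this agrees with the (normalized) $\linf$‑Lewis weights of $V$, obtained directly from the fixed‑point condition of \cref{sec_preliminary_LS} (for $A=\diag(b_i)$ one gets $w_\infty=\buni$). For A‑optimal design, $\min_g\Tr\diag(g_ib_i^2)^{-1}=\min_g\sum_i 1/(g_ib_i^2)$, and a one‑line Lagrange‑multiplier (equivalently Cauchy--Schwarz) computation gives $g^*_{A,i}\propto 1/b_i$ with optimal value $\big(\sum_i 1/b_i\big)^2$; plugging the uniform $g^*_D$ into the A‑objective gives $\Tr(Vg^*_DV^T)^{-1}=n\sum_i 1/b_i^2$. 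Therefore
\[
\frac{\Tr(Vg^*_DV^T)^{-1}}{\Tr(Vg^*_AV^T)^{-1}}
\;=\;\frac{n\sum_i b_i^{-2}}{\big(\sum_i b_i^{-1}\big)^2}
\;=\;\frac{n\big(1+(n-1)/R^2\big)}{\big(1+(n-1)/R\big)^2},
\]
which tends to $n$ as $R\to\infty$; Cauchy--Schwarz shows the left‑hand side never exceeds $n$, so this ratio is in fact $\Theta(n)$, proving the claim with essentially the worst possible separation.

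The steps that need care are mostly bookkeeping: that uniform weights are \emph{exactly} the D‑optimum (immediate from the symmetry/separability of the diagonal $V$, or from the $\linf$‑LW fixed‑point equation), and that the construction lies in $\R^n_+$ (automatic). There is no genuine obstacle for the existence statement as phrased — the ``difficulty'' is conceptual, namely recognizing that $HM$ and $GM$ (here, $\frac{1}{n}\sum_i b_i^{-2}$ versus $\big(\frac{1}{n}\sum_i b_i^{-1}\big)^2$) can be torn apart despite being pointwise close, which is exactly the content of \cref{clm_HM_GM_gap}. If one wants the bounded‑aspect‑ratio / PSD‑matrix version matching the $\Omega(n/\log^2 n)$ of \cref{thm_natural_gap}, the plan is identical but the $b_i$ are chosen from the semiaxis profile produced by \cref{clm_HM_GM_gap} (equal geometric mean, harmonic mean smaller by a prescribed factor), and then the only nontrivial verification is that the resulting spread‑out ellipsoid is genuinely the minimum‑volume (John) ellipsoid of the point set — which one checks against the $\linf$‑LW optimality condition; the $\log^2 n$ loss is precisely the price of keeping the parameters polynomially bounded.
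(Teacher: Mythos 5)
Your proof is correct and follows essentially the same strategy as the paper: a diagonal experiment matrix, the observation that the D-optimal weights are uniform (equivalently the $\linf$-Lewis weights of a diagonal matrix are $\buni$), and a direct evaluation of the A-objective at that point. The differences are in the details and actually work in your favor: the paper takes $b_i=i$ and exhibits a merely feasible $g_0\propto 1/i$ to lower-bound the A-optimum, landing at $\Omega(n/\log^2 n)$, whereas you solve the A-optimal problem in closed form via Lagrange multipliers and pick the profile $b=(1,R,\dots,R)$, which pushes the ratio to $\Theta(n)$ (matching the Cauchy--Schwarz ceiling); taking, e.g., $R=n$ shows the $\Theta(n)$ gap is attainable even with polynomially bounded entries, so the $\log^2 n$ loss in the paper's choice of $b_i=i$ is an artifact of that particular diagonal, not of the method.
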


While we can use claim~\ref{clm_HM_GM_gap} to construct this "counter example", we offer a much simpler instance with $\Tilde{\Omega}(n)$ gap, which is more than enough for showing a separation.   

\begin{proof}
    Let $V$ be a diagonal matrix with $[1,\dots,n]$ on the diagonal. We already mentioned that the optimal solution for D-optimal design is the LW of the experiment matrix $V$, i.e
    \[
        (g^*_D)_l = \left(w(V)_\infty \right)_l.
    \]
    By definition of LW, $g^*_D$ satisfies equation~\eqref{equ_inf_LW}:
    \[
        v_l^T (V^T W_\infty V)^{-1} v_l = 1 ,
    \]
    Where $v_l$ is the $l$'th row of $V$. Using the fact the $V$ is a diagonal matrix whose $l$'th entry equals to $\ell$, we can write the last equation as
    \[
        \ell (\ell^2 (w_\infty)_l)^{-1} \ell = 1 \implies (w_\infty)_l = 1 ,
    \]
    and with normalization we get  
    \[
    g^*_D = (1/n) \buni.
    \]
    
    The value of A-optimal at $g^*_D$ equals  to,
    \[
        \Tr (V g^*_D V^T)^{-1} = \sum_{i=1}^n (i^2/n)^{-1} = n \sum_{i=1}^n \frac{1}{i^2} \approx \frac{\pi^2}{6} n \approx 1.5n.
    \]
    \\
    Next, let's look at $(g_0)_i = 1/i$ (up to normalization). We know that 
    \[
        \sum_{i=1}^n \frac{1}{i} \approx log(n),
    \]
    so, $(g_0)_i \approx \frac{1}{log(n) \cdot i}$. The value of A-optimal at $g_0$ is 
    \[
        \Tr (V g_0 V^T)^{-1} \approx \sum_{i=1}^n \left(\frac{i^2}{i \cdot log(n)}\right)^{-1} = log(n) \sum_{i=1}^n \frac{1}{i} \approx log^2(n).
    \]
    Thus, we get that
    \[
        \Tr (V g^*_D V^T)^{-1} = \theta(n) \gg \Tr (V g_0 V^T)^{-1} = \theta(log^2(n)) \geq \Tr (V g^*_A V^T)^{-1}
    \]
    where that last inequality is from the optimality of $g^*_A$.
    Thus we can conclude that there is a natural gap between A-optimal and D-optimal design of \emph{at least} $\Omega(n/log^2n).$ 
\end{proof}

\section*{Acknowledgement} 
We would like to thank 
Peilin Zhong for insightful discussions and for suggesting the connection between infinite Lewis weights and mixing time of random walks. 
\newpage
\printbibliography

@article{Lukovits1999,
author = {Lukovits, I. and Nikolić, S. and Trinajstić, N.},
title = {Resistance distance in regular graphs},
journal = {International Journal of Quantum Chemistry},
volume = {71},
number = {3},
pages = {217-225},
doi = {https://doi.org/10.1002/(SICI)1097-461X(1999)71:3<217::AID-QUA1>3.0.CO;2-C},
year = {1999}
}

@INPROCEEDINGS{HAAKSM13,
  author={Hackett, Adam and Ajwani, Deepak and Ali, Shoukat and Kirkland, Steve and Morrison, John P.},
  booktitle={2013 IEEE 27th International Symposium on Parallel and Distributed Processing}, 
  title={A Network Configuration Algorithm Based on Optimization of Kirchhoff Index}, 
  year={2013},
  volume={},
  number={},
  pages={407-417},
  doi={10.1109/IPDPS.2013.116}}

@article{Nem99,
author = { Arkadi   Nemirovski },
title = {On self-concordant convex–concave functions},
journal = {Optimization Methods and Software},
volume = {11},
number = {1-4},
pages = {303-384},
year  = {1999},
publisher = {Taylor & Francis},
doi = {10.1080/10556789908805755},
% URL = {   https://doi.org/10.1080/10556789908805755
% },
% eprint = {https://doi.org/10.1080/10556789908805755
% }
}

@inproceedings{HJ0T022,
  author    = {Baihe Huang and
               Shunhua Jiang and
               Zhao Song and
               Runzhou Tao and
               Ruizhe Zhang},
  title     = {Solving {SDP} Faster: {A} Robust {IPM} Framework and Efficient Implementation},
  booktitle = {63rd {IEEE} Annual Symposium on Foundations of Computer Science, {FOCS}
               2022, Denver, CO, USA, October 31 - November 3, 2022},
  pages     = {233--244},
  publisher = {{IEEE}},
  year      = {2022},
  url       = {https://doi.org/10.1109/FOCS54457.2022.00029},
  doi       = {10.1109/FOCS54457.2022.00029},
  timestamp = {Sat, 31 Dec 2022 17:18:44 +0100},
  biburl    = {https://dblp.org/rec/conf/focs/HuangJ0T022.bib},
  bibsource = {dblp computer science bibliography, https://dblp.org}
}

@article{saberi,
	title = {Minimizing Effective Resistance of a Graph},
	volume = {50},
	issn = {0036-1445, 1095-7200},
	url = {http://epubs.siam.org/doi/10.1137/050645452},
	doi = {10.1137/050645452},
	pages = {37--66},
	number = {1},
	journaltitle = {{SIAM} Review},
	shortjournal = {{SIAM} Rev.},
	author = {Ghosh, Arpita and Boyd, Stephen and Saberi, Amin},
	date = {2008-01},
	langid = {english},
}

@book{boyd_convex_2004,
	location = {Cambridge, {UK} ; New York},
	title = {Convex optimization},
	isbn = {978-0-521-83378-3},
	pagetotal = {716},
	publisher = {Cambridge University Press},
	author = {Boyd, Stephen P. and Vandenberghe, Lieven},
	date = {2004},
	langid = {english},
	keywords = {Convex functions, Mathematical optimization},
	
}

@article{Mohar91,
author = {Mohar, B and Alavi, Y and Chartrand, G and Oellermann, Ortrud and Schwenk, Allen},
year = {1991},
month = {01},
pages = {5364},
title = {The Laplacian spectrum of graphs},
volume = {2},
journal = {Graph Theory, Combinatorics and Applications}
}

@article{flps21,
  author    = {Maryam Fazel and
               Yin Tat Lee and
               Swati Padmanabhan and
               Aaron Sidford},
  title     = {Computing Lewis Weights to High Precision},
  journal   = {CoRR},
  volume    = {abs/2110.15563},
  year      = {2021},
  url       = {https://arxiv.org/abs/2110.15563},
  eprinttype = {arXiv},
  % eprint    = {2110.15563},
  timestamp = {Tue, 02 Nov 2021 15:31:04 +0100},
  biburl    = {https://dblp.org/rec/journals/corr/abs-2110-15563.bib},
  bibsource = {dblp computer science bibliography, https://dblp.org}
}

@book{todd16,
author = {Todd, Michael J.},
title = {Minimum-Volume Ellipsoids},
publisher = {Society for Industrial and Applied Mathematics},
year = {2016},
doi = {10.1137/1.9781611974386},
address = {Philadelphia, PA},
edition   = {},
URL = {https://epubs.siam.org/doi/abs/10.1137/1.9781611974386},
% eprint = {https://epubs.siam.org/doi/pdf/10.1137/1.9781611974386}
}

@misc{ZF20,
  doi = {10.48550/ARXIV.2010.08999},
  url = {https://arxiv.org/abs/2010.08999},
  author = {Zhao, Renbo and Freund, Robert M.},
  keywords = {Optimization and Control (math.OC), FOS: Mathematics, FOS: Mathematics},
  title = {Analysis of the Frank-Wolfe Method for Convex Composite Optimization involving a Logarithmically-Homogeneous Barrier},
  publisher = {arXiv},
  year = {2020},
  copyright = {arXiv.org perpetual, non-exclusive license}
}

@book{Puk06,
author = {Pukelsheim, Friedrich},
title = {Optimal Design of Experiments},
publisher = {Society for Industrial and Applied Mathematics},
year = {2006},
doi = {10.1137/1.9780898719109},
address = {},
edition   = {},
URL = {https://epubs.siam.org/doi/abs/10.1137/1.9780898719109},
% eprint = {https://epubs.siam.org/doi/pdf/10.1137/1.9780898719109}
}

@article{KT93,
	title = {On the complexity of approximating the maximal inscribed ellipsoid for a polytope},
	volume = {61},
	issn = {0025-5610, 1436-4646},
	url = {http://link.springer.com/10.1007/BF01582144},
	doi = {10.1007/BF01582144},
	pages = {137--159},
	number = {1},
	journaltitle = {Mathematical Programming},
	shortjournal = {Mathematical Programming},
	author = {Khachiyan, Leonid G. and Todd, Michael J.},
	date = {1993-08},
	langid = {english},
}

@report{CP14,
	title = {$\ell\_p$ Row Sampling by Lewis Weights},
	url = {http://arxiv.org/abs/1412.0588},
	number = {{arXiv}:1412.0588},
	institution = {{arXiv}},
	author = {Cohen, Michael B. and Peng, Richard},
	date = {2014-12-01},
	eprinttype = {arxiv},
	% eprint = {1412.0588 [cs, math]},
	note = {type: article},
	keywords = {Computer Science - Data Structures and Algorithms, Mathematics - Probability},
	
}

@article{Lewis78,
author = {Lewis, D.},
journal = {Studia Mathematica},
keywords = {Basis; P-Absolutely Summing Operators; Extension; L2-Factorization Norm; Projection; Banach-Mazur Distance},
language = {eng},
number = {2},
pages = {207-212},
title = {Finite dimensional subspaces of $L_{p}$},
url = {http://eudml.org/doc/218208},
volume = {63},
year = {1978},
}

@article{ccly,
	title = {A near-optimal algorithm for approximating the John Ellipsoid},
	url = {http://arxiv.org/abs/1905.11580},
	journaltitle = {{arXiv}:1905.11580 [cs]},
	author = {Cohen, Michael B. and Cousins, Ben and Lee, Yin Tat and Yang, Xin},
	date = {2020-02-18},
	eprinttype = {arxiv},
	% eprint = {1905.11580},
	keywords = {Computer Science - Data Structures and Algorithms, Lewis\_weight},
	annotation = {Comment: {COLT} 2019},
	
}

@inproceedings{spielman11,
	location = {Hyderabad, India},
	title = {Algorithms, Graph Theory, and Linear Equations in Laplacian Matrices},
	url = {http://www.worldscientific.com/doi/abs/10.1142/9789814324359_0164},
	doi = {10.1142/9789814324359_0164},
	% eventtitle = {Proceedings of the International Congress of Mathematicians 2010 ({ICM} 2010)},
	pages = {2698--2722},
	booktitle = {Proceedings of the International Congress of Mathematicians 2010 ({ICM} 2010)},
	% publisher = {Published by Hindustan Book Agency ({HBA}), India. {WSPC} Distribute for All Markets Except in India},
	author = {Spielman, Daniel A.},
	date = {2011-06},
	langid = {english},

}

@article{Klein93,
        author={Klein, D. J.
        and Randi{\'{c}}, M.},
        title={Resistance distance},
        journal={Journal of Mathematical Chemistry},
        year={1993},
        month={Dec},
        day={01},
        volume={12},
        number={1},
        pages={81-95},
        issn={1572-8897},
        doi={10.1007/BF01164627},
        url={https://doi.org/10.1007/BF01164627}
}

@INPROCEEDINGS{LS14,
  author={Lee, Yin Tat and Sidford, Aaron},
  booktitle={2014 IEEE 55th Annual Symposium on Foundations of Computer Science}, 
  title={Path Finding Methods for Linear Programming: Solving Linear Programs in $O(\sqrt{\rank})$ Iterations and Faster Algorithms for Maximum Flow}, 
  year={2014},
  volume={},
  number={},
  pages={424-433},
  doi={10.1109/FOCS.2014.52}}

@article{jkl+20,
  author    = {Haotian Jiang and
               Tarun Kathuria and
               Yin Tat Lee and
               Swati Padmanabhan and
               Zhao Song},
  title     = {A Faster Interior Point Method for Semidefinite Programming},
  journal   = {CoRR},
  volume    = {abs/2009.10217},
  year      = {2020},
  url       = {https://arxiv.org/abs/2009.10217},
  eprinttype = {arXiv},
  eprint    = {2009.10217},
  timestamp = {Wed, 02 Dec 2020 16:43:27 +0100},
  biburl    = {https://dblp.org/rec/journals/corr/abs-2009-10217.bib},
  bibsource = {dblp computer science bibliography, https://dblp.org}
}

@misc{st04,
  doi = {10.48550/ARXIV.CS/0310051},
  url = {https://arxiv.org/abs/cs/0310051},
  author = {Spielman, Daniel A. and Teng, Shang-Hua},
  keywords = {Data Structures and Algorithms (cs.DS), Numerical Analysis (math.NA), FOS: Computer and information sciences, FOS: Computer and information sciences, FOS: Mathematics, FOS: Mathematics, F.2.1; G.1.3},
  
  title = {Nearly-Linear Time Algorithms for Graph Partitioning, Graph Sparsification, and Solving Linear Systems},
  
  publisher = {arXiv},
  
  year = {2003},
  
  copyright = {arXiv.org perpetual, non-exclusive license}
}

@article{dmmw12,
  author    = {Petros Drineas and
               Malik Magdon{-}Ismail and
               Michael W. Mahoney and
               David P. Woodruff},
  title     = {Fast approximation of matrix coherence and statistical leverage},
  journal   = {CoRR},
  volume    = {abs/1109.3843},
  year      = {2011},
  url       = {http://arxiv.org/abs/1109.3843},
  eprinttype = {arXiv},
  eprint    = {1109.3843},
  timestamp = {Mon, 13 Aug 2018 16:47:56 +0200},
  biburl    = {https://dblp.org/rec/journals/corr/abs-1109-3843.bib},
  bibsource = {dblp computer science bibliography, https://dblp.org}
}

@misc{cw13,
	title = {Low Rank Approximation and Regression in Input Sparsity Time},
	url = {http://arxiv.org/abs/1207.6365},
	number = {{arXiv}:1207.6365},
	publisher = {{arXiv}},
	author = {Clarkson, Kenneth L. and Woodruff, David P.},
	date = {2013-04-05},
	eprinttype = {arxiv},
	eprint = {1207.6365 [cs]},
	keywords = {Computer Science - Data Structures and Algorithms},
	file = {arXiv Fulltext PDF:C\:\\Users\\t8795008\\Zotero\\storage\\PH39TKHN\\Clarkson and Woodruff - 2013 - Low Rank Approximation and Regression in Input Spa.pdf:application/pdf;arXiv.org Snapshot:C\:\\Users\\t8795008\\Zotero\\storage\\22PMU4GU\\1207.html:text/html},
}

@article{Fos53,
  title={On the Stochastic Matrices Associated with Certain Queuing Processes},
  author={F. G. Foster},
  journal={Annals of Mathematical Statistics},
  year={1953},
  volume={24},
  pages={355-360}
}

@inproceedings{DKP+17,
author = {Durfee, David and Kyng, Rasmus and Peebles, John and Rao, Anup B. and Sachdeva, Sushant},
title = {Sampling Random Spanning Trees Faster than Matrix Multiplication},
year = {2017},
isbn = {9781450345286},
publisher = {Association for Computing Machinery},
address = {New York, NY, USA},
url = {https://doi.org/10.1145/3055399.3055499},
doi = {10.1145/3055399.3055499},
booktitle = {Proceedings of the 49th Annual ACM SIGACT Symposium on Theory of Computing},
pages = {730–742},
numpages = {13},
keywords = {Approximate Schur Complement, Random Spanning Trees, Graph Sparsification, Sampling Algorithm},
location = {Montreal, Canada},
series = {STOC 2017}
}

@article{anari17,
  author    = {Vedat Levi Alev and
               Nima Anari and
               Lap Chi Lau and
               Shayan Oveis Gharan},
  title     = {Graph Clustering using Effective Resistance},
  journal   = {CoRR},
  volume    = {abs/1711.06530},
  year      = {2017},
  url       = {http://arxiv.org/abs/1711.06530},
  eprinttype = {arXiv},
  eprint    = {1711.06530},
  timestamp = {Mon, 13 Aug 2018 16:46:23 +0200},
  biburl    = {https://dblp.org/rec/journals/corr/abs-1711-06530.bib},
  bibsource = {dblp computer science bibliography, https://dblp.org}
}

@article{MST15,
  author    = {Aleksander Madry and
               Damian Straszak and
               Jakub Tarnawski},
  title     = {Fast Generation of Random Spanning Trees and the Effective Resistance
               Metric},
  journal   = {CoRR},
  volume    = {abs/1501.00267},
  year      = {2015},
  url       = {http://arxiv.org/abs/1501.00267},
  eprinttype = {arXiv},
  eprint    = {1501.00267},
  timestamp = {Mon, 13 Aug 2018 16:48:02 +0200},
  biburl    = {https://dblp.org/rec/journals/corr/MadryST15.bib},
  bibsource = {dblp computer science bibliography, https://dblp.org}
}

@article{KM09,
  author    = {Jonathan A. Kelner and
               Aleksander Madry},
  title     = {Faster generation of random spanning trees},
  journal   = {CoRR},
  volume    = {abs/0908.1448},
  year      = {2009},
  url       = {http://arxiv.org/abs/0908.1448},
  eprinttype = {arXiv},
  eprint    = {0908.1448},
  timestamp = {Mon, 13 Aug 2018 16:46:50 +0200},
  biburl    = {https://dblp.org/rec/journals/corr/abs-0908-1448.bib},
  bibsource = {dblp computer science bibliography, https://dblp.org}
}

@inproceedings{AGM10,
author = {Asadpour, Arash and Goemans, Michel X. and M\k{a}dry, Aleksander and Gharan, Shayan Oveis and Saberi, Amin},
title = {An O(Log n/ Log Log n)-Approximation Algorithm for the Asymmetric Traveling Salesman Problem},
year = {2010},
isbn = {9780898716986},
publisher = {Society for Industrial and Applied Mathematics},
address = {USA},
booktitle = {Proceedings of the Twenty-First Annual ACM-SIAM Symposium on Discrete Algorithms},
pages = {379–389},
numpages = {11},
location = {Austin, Texas},
series = {SODA '10}
}

@article{HO14,
  author    = {Nicholas J. A. Harvey and
               Neil Olver},
  title     = {Pipage Rounding, Pessimistic Estimators and Matrix Concentration},
  journal   = {CoRR},
  volume    = {abs/1307.2274},
  year      = {2013},
  url       = {http://arxiv.org/abs/1307.2274},
  eprinttype = {arXiv},
  eprint    = {1307.2274},
  timestamp = {Mon, 13 Aug 2018 16:48:59 +0200},
  biburl    = {https://dblp.org/rec/journals/corr/HarveyO13.bib},
  bibsource = {dblp computer science bibliography, https://dblp.org}
}

@inproceedings{AO15,
author = {Anari, Nima and Gharan, Shayan Oveis},
title = {Effective-Resistance-Reducing Flows, Spectrally Thin Trees, and Asymmetric TSP},
year = {2015},
isbn = {9781467381918},
publisher = {IEEE Computer Society},
address = {USA},
url = {https://doi.org/10.1109/FOCS.2015.11},
doi = {10.1109/FOCS.2015.11},
booktitle = {Proceedings of the 2015 IEEE 56th Annual Symposium on Foundations of Computer Science (FOCS)},
pages = {20–39},
numpages = {20},
series = {FOCS '15}
}

@article{Mat88,
author = {Peter Matthews},
title = {{Covering Problems for Brownian Motion on Spheres}},
volume = {16},
journal = {The Annals of Probability},
number = {1},
publisher = {Institute of Mathematical Statistics},
pages = {189 -- 199},
keywords = {Brownian motion, Grand Tour, hitting time, rapid mixing, sphere covering},
year = {1988},
doi = {10.1214/aop/1176991894},
URL = {https://doi.org/10.1214/aop/1176991894}
}

@article{CRR+97,
author = {Chandra, Ashok and Raghavan, Prabhakar and Ruzzo, Walter and Smolensky, Roman and Tiwari, Prasoon},
year = {1996},
month = {12},
pages = {312-340},
title = {The Electrical Resistance Of A Graph Captures Its Commute And Cover Times},
volume = {6},
journal = {Comput. Compl},
doi = {10.1007/BF01270385}
}

@inproceedings{CKM+11,
author = {Christiano, Paul and Kelner, Jonathan A. and Madry, Aleksander and Spielman, Daniel A. and Teng, Shang-Hua},
title = {Electrical Flows, Laplacian Systems, and Faster Approximation of Maximum Flow in Undirected Graphs},
year = {2011},
isbn = {9781450306911},
publisher = {Association for Computing Machinery},
address = {New York, NY, USA},
url = {https://doi.org/10.1145/1993636.1993674},
doi = {10.1145/1993636.1993674},
booktitle = {Proceedings of the Forty-Third Annual ACM Symposium on Theory of Computing},
pages = {273–282},
numpages = {10},
keywords = {maximum flows, minimum cuts, laplacian linear systems, electrical flows, multiplicative weights update method},
location = {San Jose, California, USA},
series = {STOC '11}
}

@article{MAK+20,
  title={Network Reconfiguration Algorithm (NRA) for scheduling communication-intensive graphs in heterogeneous computing environment},
  author={Masood, Anum and Gulzar Ahmad, Saima and Ullah Khan, Hikmat and Ullah Munir, Ehsan},
  journal={Cluster Computing},
  volume={23},
  number={2},
  pages={1419--1438},
  year={2020},
  publisher={Springer}
}

@article{Nesterov03,
author = {NESTEROV, Yu},
year = {2003},
month = {01},
pages = {},
title = {Smooth minimization of non-smooth functions},
volume = {103},
journal = {University catholique de Louvain, Center for Operations Research and Econometrics (CORE), CORE Discussion Papers},
doi = {10.1007/s10107-004-0552-5}
}

@article{SS11,
author = {Spielman, Daniel A. and Srivastava, Nikhil},
title = {Graph Sparsification by Effective Resistances},
journal = {SIAM Journal on Computing},
volume = {40},
number = {6},
pages = {1913-1926},
year = {2011},
doi = {10.1137/080734029},

URL = { 
        https://doi.org/10.1137/080734029
},
eprint = {
        https://doi.org/10.1137/080734029
}
}

@inproceedings{Haz08,
author = {Hazan, Elad},
title = {Sparse Approximate Solutions to Semidefinite Programs},
year = {2008},
isbn = {3540787720},
publisher = {Springer-Verlag},
address = {Berlin, Heidelberg},
booktitle = {Proceedings of the 8th Latin American Conference on Theoretical Informatics},
pages = {306–316},
numpages = {11},
series = {LATIN'08}
}

@article{AK16,
author = {Arora, Sanjeev and Kale, Satyen},
title = {A Combinatorial, Primal-Dual Approach to Semidefinite Programs},
year = {2016},
issue_date = {May 2016},
publisher = {Association for Computing Machinery},
address = {New York, NY, USA},
volume = {63},
number = {2},
issn = {0004-5411},
url = {https://doi.org/10.1145/2837020},
doi = {10.1145/2837020},
journal = {J. ACM},
month = {may},
articleno = {12},
numpages = {35},
keywords = {combinatorial algorithms, Semidefinite programming, primal-dual approximation algorithms}
}

@article{KMP16,
  author    = {Ioannis Koutis and
               Gary L. Miller and
               Richard Peng},
  title     = {Approaching optimality for solving {SDD} systems},
  journal   = {CoRR},
  volume    = {abs/1003.2958},
  year      = {2010},
  url       = {http://arxiv.org/abs/1003.2958},
  eprinttype = {arXiv},
  eprint    = {1003.2958},
  timestamp = {Mon, 13 Aug 2018 16:47:59 +0200},
  biburl    = {https://dblp.org/rec/journals/corr/abs-1003-2958.bib},
  bibsource = {dblp computer science bibliography, https://dblp.org}
}

@article{MSJ12,
  author    = {Silviu Maniu and
               Pierre Senellart and
               Suraj Jog},
  title     = {An Experimental Study of the Treewidth of Real-World Graph Data (Extended
               Version)},
  journal   = {CoRR},
  volume    = {abs/1901.06862},
  year      = {2019},
  url       = {http://arxiv.org/abs/1901.06862},
  eprinttype = {arXiv},
  eprint    = {1901.06862},
  timestamp = {Fri, 01 Feb 2019 13:39:59 +0100},
  biburl    = {https://dblp.org/rec/journals/corr/abs-1901-06862.bib},
  bibsource = {dblp computer science bibliography, https://dblp.org}
}

\newpage

\appendix

\section{Appendix}

\subsection{Background: Convex Optimization Analysis}\label{appendix_cvx_opt_recipe}

All the optimization problems discussed in this paper are convex problems. On our analysis of the optimal solution, we will use the following recipe for problems of the following form:
\begin{equation*}
\begin{array}{ll@{}ll}
\text{minimize}  & f(g) &\\
\text{subject to}& \boldsymbol{1}^Tg=1 \ , \ g \in \R_+^m 
\end{array}
\end{equation*}
when $f(g)$ is convex. It is a well known identity, that a feasible $g$ is optimal for $f$ iff
\[
\nabla f(g)^T \cdot (\hat{g} - g) \geq 0 \ , \ \forall \ \hat{g}: \  \buni^T\hat{g}=1.
\]
This is the same as,
\[
\nabla f(g)^T \cdot (e_l - g) \geq 0 \ , \ \forall \ l=1\dots m
\]
(expand $\hat{g}$ to the elementary basis representation). So, the optimality criteria for $g$ is,
\[
\frac{\partial f}{\partial g_l}(g) -\nabla f(g)^T \cdot g \geq 0 \ , \ \forall \ l=1\dots m .
\]
Thus, we will use the following recipe - given a convex problem of the above form:
\begin{enumerate}
    \item Compute the partial derivatives of $f$, i.e $\frac{\partial f(g)}{\partial g_l}$.
    \item Compute $\langle \nabla f(g) , g \rangle = \nabla f(g)^T \cdot g$.
    \item Derive the optimality criteria:
    \begin{align*}
        \frac{\partial f}{\partial g_l} -\nabla f(g)^T \cdot g \geq 0 \ , \ \forall \ l=1 \dots m \numberthis \label{equ_cvx_opt_criteria}
    \end{align*}
\end{enumerate}

\subsection{Background: Experimental Optimal Design}\label{appendix_optimal_design}

Here we give a high level overview of experimental optimal design. For more information see \cite{boyd_convex_2004,Puk06}.

Assume you want to estimate a vector $x \in \R^n$, using the following experiments: Each round $i = 1,\dots, m$, you can choose one \emph{test} vector from a given $p$ possible choices -- $v_1,\dots,v_p$ -- and you get $y_i = v_{j_i}^T x + w_i$, where $j_i \in [p]$ and, $w_i$ is some Gaussian noise (i.e independent Gaussian RV with zero mean and unit variance). It usually assumed that $V = (v_1, \dots, v_p)$ is a full rank matrix, but the it can be easily generalized. The optimal estimation is given by the least-square solution:
\[
\hat{x} = \underbrace{\left(\sum_{i=1}^m v_{j_i} v_{j_i}^T\right)^{-1}}_{E} \ \cdot\sum_{i=1}^m y_i v_{j_i} .
\]
For any estimation $x$, we define the estimation error $e=\hat{x}-x$, with the associated covariance matrix $E$. The goal of optimal design is to minimize $E$ w.r.t some partial order (e.g Loewner order). 
In general, optimal design can be a hard combinatorial problem for large $m$. To see why, let $m_j$ be the number of rounds we chose $v_j$, and express $E$ using $m_j$:
\[
    E =  \left(\sum_{i=1}^m v_{j_i} v_{j_i}^T\right)^{-1} =  \left(\sum_{j=1}^p m_jv_jv_j^T\right)^{-1}
\]
This shows that optimal design can be formulate with the integer variables $m_j$, with the constraint that they will sum up to $m$. If $m$ is quite large, then we might deal with relative small integers, making it quite difficult problem. However, we may relax it in the following way. Define $g_j = m_j/m$, and express $E$ in terms of $g_j$:
\[
    E(g) = \frac{1}{m} \left(\sum_{j=1}^p g_jv_jv_j^T\right)^{-1}
\]
Now, instead of searching over the integers, we let $g_j$ be real numbers, under the constraint that they will sum up to $1$. Thus, the relaxed optimal design asks to minimize $E$ such that $\buni^T g = 1$, w.r.t some partial order. For example, A- and D-optimal design minimizes w.r.t the trace and determinant, respectively. We refer to $V$ as the \emph{experiment} matrix, and denote by $E_V(g) = V\cdot \diag(g) \cdot V^T$, such that $E_V(g)^{-1}$ is the error covariance matrix.

\subsection{The D-Optimal Design Problem}\label{appendix_D_design}

Here we provide an alternative way to solve the D-optimal design problem via a convex analysis approach. We focus on the case of Laplacians, but it can be generalized to any experiment matrix. We follow the recipe described in \ref{appendix_cvx_opt_recipe}.
We define the D-optimal design problem over Laplacian, as a relaxation of problem~\eqref{equ_D_optimal_def}, in the following manner:
\begin{definition}\label{def_LD_g}
Given an undirected graph $G$, denote the Laplacian of the weighted graph $G(g)$ in $L_g$. The \emph{LD} minimization problem asks how to re-weight the edges in order to minimize the \emph{logdet} of $L_g^+$:
\begin{equation*}
\begin{array}{ll@{}ll}
\text{minimize}  & \logdet( L_g^+ + (1/n)\boldsymbol{1}\boldsymbol{1}^T) &\\
\text{subject to}& \boldsymbol{1}^Tg=1 \ , \ g \geq 0
\end{array}
\end{equation*}
\end{definition}

\begin{remark}\label{remark_LD_G_objective}
    One might wonder why this is our objective function. Note that $L_g^+$ has zero eigenvalue while we interesting in only the positive eigenvalues. However, it is not difficult to prove that:
    \[ \lambda(L_g^+ + 1/n \boldsymbol{1}\boldsymbol{1}^T) = \twopartdef {\lambda(L_g^+)} {\lambda(L_g^+) \neq 0} {1} {\lambda(L_g^+) = 0} \]
    So, our objective function is simply the product of all positive eigenvalues of $L^+$. We further abuse this notation to write $|L_g^+ + (1/n)\buni \buni^T|$ as $|L_g^+|$.
\end{remark}

We will write the objective function as $LD_G(g) = \logdet(L_g^+)$. Whenever it's clear that we refer to the graphs version we will simply write $LD(g)$. Note that for a general experiment matrix $V$ the objective function is equals to $LD_V(g) = \logdet(E_V(g)^+)$. This leads us to the following result:
\begin{lemma}\label{thm_D_optimal_LW}
    Given an undirected graph $G$ with edge-incident matrix $B \in \R^{n \times m}$, the weight-vector that minimizes the $LD_G(g)$ function is $LW(B^T)$, up to normalization. More generally, for an experiment matrix $V \in \R^{n \times m}$, the solution for D-optimality for $V$ is 
    $LW(V^T)$, up to normalization.
\end{lemma}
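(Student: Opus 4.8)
The plan is to run the convex-optimization recipe of Appendix~\ref{appendix_cvx_opt_recipe} on the objective $LD_G(g) = \logdet\big(L_g^{+} + (1/n)\buni\buni^{T}\big)$ of Definition~\ref{def_LD_g} (more generally $LD_V(g) = \logdet\big(E_V(g)^{+} + P\big)$ for an experiment matrix $V$, $E_V(g) = V\diag(g)V^{T}$, $P$ the projector onto $\ker E_V(g)$): compute $\nabla_g LD_V(g)$, evaluate $\langle\nabla LD_V(g),g\rangle$ by a homogeneity (Euler) argument, read off the first-order optimality criterion, and then exhibit the normalized $\linf$-Lewis weights as a feasible point that satisfies it — in fact with equality. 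Since $LD_V$ is convex on the simplex $\{g\ge 0,\ \buni^{T}g=1\}$ (most transparently via the reformulation $LD_G(g) = -\logdet\big(L_g + (1/n)\buni\buni^{T}\big)$ from \eqref{equ_laplac_inv}, whose argument is affine and positive-definite in $g$, and $-\logdet$ is convex on $\boldsymbol{S}^n_{++}$), any feasible point obeying the first-order condition is \emph{the} unique global minimizer, so producing one such point proves the lemma.

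\emph{Step 1 (gradient).} Working with $-\logdet\big(L_g + (1/n)\buni\buni^{T}\big)$ avoids all rank-deficiency bookkeeping: the argument is invertible and affine in $g$ with $\partial/\partial g_l = b_l b_l^{T}$, so Jacobi's formula $\partial\logdet X = \Tr(X^{-1}\partial X)$ gives
\[
-\frac{\partial LD_G(g)}{\partial g_l} = \Tr\big(\big(L_g + (1/n)\buni\buni^{T}\big)^{-1} b_l b_l^{T}\big) = b_l^{T}\big(L_g^{+} + (1/n)\buni\buni^{T}\big) b_l = b_l^{T} L_g^{+} b_l = R_l(g),
\]
using $\buni^{T}b_l = 0$. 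The same computation on $\mathrm{im}(V)$ yields $-\partial LD_V(g)/\partial g_l = v_l^{T} E_V(g)^{+} v_l = g_l^{-1}\tau_l\big(W_g^{1/2}V^{T}\big)$, since the $l$-th leverage score of $W_g^{1/2}V^{T}$ equals $g_l\, v_l^{T}(V W_g V^{T})^{+} v_l$.

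\emph{Step 2 (Euler relation and criterion).} Because $E_V(\alpha g) = \alpha E_V(g)$ and $\rank E_V(g) = \rank V =: d$ (so $d = n-1$ for a connected graph, the $(1/n)\buni\buni^{T}$ shift only fixing the zero eigenvalue at $1$), scaling gives $LD_V(\alpha g) = LD_V(g) - d\log\alpha$; differentiating in $\alpha$ at $\alpha=1$ yields $\langle\nabla LD_V(g),g\rangle = -d$. Substituting Steps~1--2 into \eqref{equ_cvx_opt_criteria}, a feasible $g$ is optimal iff
\[
-\frac{\partial LD_V(g)}{\partial g_l} = g_l^{-1}\,\tau_l\big(W_g^{1/2}V^{T}\big) \le d \qquad \forall\, l \in [m].
\]
Now set $g_{\ell w} := \tfrac{1}{d}\, w_\infty(V^{T})$. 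By Foster's lemma \eqref{equ_LS_facts}, $\sum_l\big(w_\infty(V^{T})\big)_l = \rank(V^{T}) = d$, so $g_{\ell w}$ is feasible; the Lewis fixed point \eqref{equ_inf_LW}, $v_l^{T}(V\diag(w_\infty)V^{T})^{+}v_l = 1$, together with the degree-$(-1)$ homogeneity of $g\mapsto E_V(g)^{+}$, rescales to $v_l^{T}(V W_{g_{\ell w}} V^{T})^{+}v_l = d$, hence $g_l^{-1}\tau_l\big(W_{g_{\ell w}}^{1/2}V^{T}\big) = d \le d$ for every $l$. Thus $g_{\ell w}$ satisfies — indeed saturates — the optimality criterion, so by convexity and uniqueness it is the minimizer of $LD_V$, which is Lemma~\ref{thm_D_optimal_LW}; the graph case is $V = B$, and saturation reads $R_l(g_{\ell w}) = n-1$ for all edges. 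Uniqueness moreover gives the companion claim of Section~\ref{sec_new_char_LS_ER}: any feasible $\hat g \ne g_{\ell w}$ violates the criterion, i.e. $R_l(\hat g) > n-1$ for some $l$.

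I expect the only genuine obstacle to be the rank-deficient calculus — making Jacobi's formula rigorous when $E_V(g)$ (resp. $L_g$) is singular — and this is precisely what the reformulation $LD_G(g) = -\logdet\big(L_g + (1/n)\buni\buni^{T}\big)$ handles, once one checks that the constant term $(1/n)\buni\buni^{T}$ contributes nothing to $\partial/\partial g_l$ because $B^{T}\buni = 0$; everything after that is the routine recipe and a short rescaling.
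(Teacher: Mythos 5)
Your proof is correct and follows the paper's Appendix~\ref{appendix_D_design} argument essentially step for step: same convex-optimization recipe, same Euler-scaling computation of $\langle\nabla LD,g\rangle$, same first-order criterion, same verification that the normalized Lewis weights saturate it. The one small (and arguably cleaner) difference is that you apply Jacobi's formula directly to the invertible matrix $L_g + (1/n)\buni\buni^{T}$ in the reformulated objective $-\logdet(L_g + (1/n)\buni\buni^{T})$, whereas the paper first derives $\partial L_g^{+}/\partial g_l = -L_g^{+}b_lb_l^{T}L_g^{+}$ and then applies Jacobi's formula to $L_g^{+}$; both routes land on $-\partial LD_G/\partial g_l = R_l(g)$. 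You also make the convexity of the objective (via affinity of $g \mapsto L_g + (1/n)\buni\buni^T$ and convexity of $-\logdet$ on $\boldsymbol{S}^n_{++}$) explicit, which the paper only uses implicitly through the recipe.
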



Let's start by computing the partial derivatives. We will use the following fact: Suppose that the invertible symmetric matrix $X(t)$ is a differentiable function of parameter $t \in \R$. Then we have
\begin{equation}\label{equ_der_inv_mat}
    \frac{\partial X^{-1}}{\partial t} = -X^{-1}\frac{\partial X}{\partial t} X^{-1}
\end{equation} 
We also use \emph{Jacobi's formula} - given differentiable matrix $X(t)$:
\begin{equation}\label{equ_jacobi_formula}
     \frac{\partial |X|}{\partial t} = |X| \Tr (X^{-1} \frac{\partial X}{\partial t})
\end{equation}
Now, if we write the one-rank decomposition of $L_g = \sum_{l=1}^m g_l b_l b_l^T$, it is clear that
\begin{equation}\label{equ_der_L_g}
    \frac{\partial L_g}{\partial g_l} = b_l b_l^T
\end{equation}

Using the above, and equation~\eqref{equ_laplac_inv}, we get that,
\begin{align*}
    \frac{\partial L_g^+}{\partial g_l} &=  \frac{\partial (L_g + 1/n\boldsymbol{11}^T)^{-1}}{\partial g_l} = -(L_g + 1/n\boldsymbol{11}^T)^{-1}\frac{\partial L_g}{\partial g_l}(L_g + 1/n\boldsymbol{11}^T)^{-1} && \text{using~\eqref{equ_der_inv_mat} } \\
    &= -(L_g^+ + 1/n\boldsymbol{11}^T)\frac{\partial L_g}{\partial g_l}(L_g^+ + 1/n\boldsymbol{11}^T) && \text{from~\eqref{equ_laplac_inv}}\\
    &= -(L_g^+ + 1/n\boldsymbol{11}^T)b_l b_l^T(L_g^+ + 1/n\boldsymbol{11}^T) && \text{using~\eqref{equ_der_L_g}} \\
    &= -L_g^+ b_l b_l^T L_g^+ \numberthis \label{equ_der_L_g_plus}
\end{align*}  
when we used the fact that $b_l^T \buni = 0$.

Now, using \emph{Jacobi's formula}~\eqref{equ_jacobi_formula}, and basic chain rule we get:
\begin{align*}
     & \frac{\partial |L_g^+|}{\partial g_l} = |L_g^+| \Tr ((L_g^+)^{-1} \frac{\partial L_g^+}{\partial g_l}) 
\\
     &\implies \frac{\partial \log |L_g^+|}{\partial g_l} = |L_g^+|^{-1} \frac{\partial |L_g^+|}{\partial g_l} = \Tr((L_g^+)^{-1} \frac{\partial L_g^+}{\partial g_l})
\end{align*}
   
Combining the last two equations leads to,
\begin{align*}
\frac{\partial \log |L_g^+|}{\partial g_l} & = -\Tr( (L_g^+)^{-1} L_g^+ b_l b_l^T L_g^+ ) \\
& = - \Tr ( (b_l b_l^T L_g^+) = -\Tr (b_l^T L_g^+ b_l) = -R_l(g) \numberthis \label{equ_logdet_der}
\end{align*}

\begin{remark}
    For the general case, we will get that:
    \begin{align*}
        \frac{\partial LD_V(g)}{\partial g_l} & = -\Tr( E_V(g)^{-1} v_l v_l^T ) \\
        & = - \Tr ( (v_l^T E_V(g))^{-1} v_l ) = v_l^T (V \cdot \diag(g) \cdot V^T)^{-1} v_l \numberthis \label{equ_der_LD_g}
    \end{align*}
    and, in terms of LS:
    \begin{equation}\label{equ_der_LD_g_with_LS}
        \frac{\partial LD_V(g)}{\partial g_l} = - g_l^{-1} \cdot \tau_l(W_g^{1/2} V^T) 
    \end{equation}
\end{remark}

Next, we need to compute $\langle \nabla LD(g) , g \rangle = \nabla LD(g)^T \cdot g$ . We will use the following "trick" - by the multiplicativity of determinant, and properties of pseudo-inverse:
\begin{align*}
    LD(\alpha g) &= \log |L_{\alpha g}| = \log | (B \cdot \diag(\alpha g) \cdot B^T)^+| \\
    &=\log |\alpha^{-1} L_g^+| \\
    &= \log ( \alpha^{-(n-1)}|L_g^+|) && \text{because $\rank(L_g) = n-1$} \\
    &= LD(g) - (n-1)\log(\alpha) && \text{logarithm laws}
\end{align*}
Taking the derivative of both sides w.r.t $\alpha$, we get that:
\[ 
\nabla LD(\alpha g)^T \cdot g = \frac{\partial}{\partial \alpha}(LD(g) - (n-1)\log (\alpha)) = -\frac{n-1}{\alpha} 
\]
Evaluate the expression at $\alpha = 1$, yield
\begin{equation}\label{equ_logdet_grad_g_prod}
    \nabla LD(g)^T \cdot g = -(n-1)
\end{equation} 

\begin{remark}
    It's not hard to see that for general matrix $V$, with rank $d$, $rank(E_V) = rank(V) = d$, and $|\alpha^{-1} E_V^{-1}| = \alpha^{-d} |E_V^{-1}|$. So we will have,
    \[
    LD_V(\alpha g) = LD_V(g) - d\log(\alpha) ,
    \]
    and accordingly,
    \[
    \nabla LD_V(g)^T \cdot g = -d.
    \]
\end{remark}

Now, we can plug all the above in the optimality criteria (equation~\eqref{equ_cvx_opt_criteria}), and get that $g$ is optimal for $LD(g)$ iff,
\[ 
\frac{\partial LD(g)}{g_l} + n-1 = -R_l(g) + n-1 \geq 0 \ , \ l=1,\dots, m 
\]
or,
\begin{align*}
R_l(g) \leq  n-1 \ , \ \forall \ l=1,\dots,m . \numberthis \label{equ_LD_G_opt_crit}
\end{align*}
And for general $V$,
\begin{align*}
g_l^{-1} \cdot \tau_l(W_g^{1/2} V^T)  \leq \rank(V) \ , \ l=1, \dots , m  \numberthis \label{equ_LD_opt_criteria}
\end{align*}

Finally, we can show that $\linf$-LW is indeed the optimal solution for $LD(g)$. We will show it for the general case. Let $C = W_g^{1/2} V^T$. Assume $V$ has rank $d$. Let $g_{\ell w} \coloneqq (1/d) w_\infty(V^T)$. First of all, from equation~\eqref{equ_LS_facts}
\[
\sum_{l=1}^m (g_{\ell w})_l = (1/d) \sum_{l=1}^m w_\infty(V^T)_l = (1/d) \cdot d = 1
\]
so $g_{\ell w}$ is feasible. Next, we need to show that it matches the optimality criteria. Recall from equation~\eqref{equ_inf_LW} that for the non-normalized LW,
\[
v_l^T(V W_\infty V^T)^+ v_l = 1 , \ l=1, \dots , m.
\]
Since we normalize by $(1/d)$ we get that for $g=g_{\ell w}$,
\[
v_l^T(V W_g V^T)^+ v_l = d , \ l=1, \dots , m.
\]
Hence, for $g=g_{\ell w}$ we have,
\[
\tau_l(W_\infty^{1/2} V^T) = g_l^{1/2} \cdot v_l^T (V \cdot W_\infty V^T)^+ v_l \cdot g_l^{1/2} = d \cdot g_l , \ \forall \  l=1, \dots , m .
\]
Substituting the latter in equation~\eqref{equ_LD_opt_criteria} gives us
\begin{align*}
g_l^{-1} \cdot \tau_l(W_g^{1/2} V^T) = d \cdot g_l \cdot g_l^{-1} = d \leq d. \numberthis \label{equ_LD_LW_opt_saturation}
\end{align*}
which matches exactly the optimality criteria, and proving \Cref{thm_D_optimal_LW}. Note that, indeed $g_{\ell w}$ \emph{saturates} the optimality criteria.

\subsection{Proof of Claim~\ref{clm_HM_GM_gap}}\label{appendix_prf_clm_hm-gm}

\begin{proof}
    Let $\x = (x_1,x_2,\dots, x_n)$. Define $\x' = (px1,x2/p, x3,\dots,x_n)$. Indeed,
    \[
        GM(\x') = (\Pi_i \ x'_i)^{1/n} = (\Pi_i \  x_i)^{1/n} = GM(\x)
    \]
    Let $S = \sum_{i=1}^n \frac{1}{x_i}$ and similar for $S'$. Clearly, $HM(\x) = n/S$, so to increase the HM by $t$ we should decrease $S$ by $t$.

    Explicitly, we need that
    \[
        S' = \sum_{i=1}^n \frac{1}{x'_i} = S/t = (1/t)\sum_{i=1}^n \frac{1}{x_i}.
    \]
    From our definition of $\x'$ we get that
    \[
        S' = \frac{1}{px_1} + \frac{p}{x_2} + \sum_{i=3}^n \frac{1}{x_i} = \frac{1}{px_1} + \frac{p}{x_2} + S - \frac{1}{x_1} - \frac{1}{x_2} = S + (1/p - 1)\frac{1}{x_1} + (p-1)\frac{1}{x_2} = S/t,
    \]
    which is a quadratic equation in $p$:
    \begin{align*}
        (1-1/t)S - \left(\frac{1}{x_1} + \frac{1}{x_2} \right) +\frac{1}{x_1}\frac{1}{p} + \frac{1}{x_2}p = 0. \numberthis \label{equ_quad_p_t}
    \end{align*}
    We can change variables to $y_1 = \frac{1}{x_1} , y_2 = \frac{1}{x_2} , t' = (1/t - 1)S$ to get:
    \[
        y_2 \cdot p^2 - (t'+y_1+y_2) \cdot p + y_1 = 0,
    \]
    which has the solution:
    \begin{align*}
        p &=\frac{1}{2y_2}\left[(t' + y_1 + y_2 ) + \sqrt{(t'+y_1+y_2)^2 - 4y_2y_1} \right] \\
        &= \frac{1}{2y_2}\left[(t' + y_1 + y_2) + \sqrt{t'^2 + (y_1+y_2)^2 + 2t'(y_1+y_2) - 4y_2y_1} \right] \\
        &= \frac{1}{2y_2}\left[(t' + y_1 + y_2) + \sqrt{t'^2 + 2t'(y_1+y_2) + (y_1 - y_2)^2} \right] \numberthis \label{equ_sol_p_t}
    \end{align*}   
    Note that $t', y_1,y_2$ are nonnegative so $p$ is well defined.

    With this we can conclude that for any $t<1$, there exist $p_t$ using equation~\eqref{equ_sol_p_t}, with it we can construct $\x'$ such that $GM(\x) = GM(\x')$ and $HM(\x') = t \cdot HM(\x)$. \\
    
    It is worth mentioning how the AM changes in this construction. Indeed,
    \[
    AM(\x') = \frac{1}{n}\sum_{i=1}^n x'_i = \frac{1}{n}(\sum_i x_i - x_1 - x_2 +px_1 + x_2/p) = AM(\x) + \frac{1}{n}((p_t-1)x_1+(1/p_t - 1)x_2).
    \]
    Interestingly, we get that the Arithmetic mean always increases. To see why consider equation~\eqref{equ_quad_p_t}
    \[
        (1/p -1)\frac{1}{x_1} + (p-1)\frac{1}{x_2} = S(1/t-1)
    \]
    Multiplying both sides by $x_1 x_2$ gives us
    \[
        (1/p-1)x_2 + (p-1)x_1 =S(1/t-1)x_1x_2 > 0.
    \]
    
\end{proof}

\begin{remark}
    It is worth noticing the behavior of, $\Delta AM = AM(x') - AM(x)$ . While $\frac{HM'}{HM} = t$,
    the difference in the AM is (see last equation in the last proof)
    \[
    \Delta AM = S x_1 x_2 \frac{1}{n}(1/t - 1) = \theta \left( \frac{1}{n}(1/t - 1) \right) \ \ \ \ \   (\text{we refer to $\boldsymbol{x}$ as constant.})
    \]
    For example, if $t = \theta(1/n)$ then,
    \[
    \frac{HM'}{HM} = \theta \left(\frac{1}{n} \right) \ , \  \Delta AM = \theta(1).
    \]
    So, although the value of the $HM$ is multiplied by a factor of $n$, the change in the value of $AM$ is relatively small. In other words, the decrease in the value of $HM$ outweighs the slight increase in the value of $AM$. 
\end{remark}

\subsection{The Lagrange Dual Problem of ERMP}\label{appendix_ERMP_dual_gap}

We first rewrite the ERMP problem as:

\begin{equation*}
\begin{array}{ll@{}ll}
\text{minimize}  & n \Tr X^{-1} - n &\\
\text{subject to}&  X = \sum_{l=1}^{m} g_lb_lb_l^T + (1/n)\boldsymbol{1}\boldsymbol{1}^T\\
                    & \boldsymbol{1}^Tg=1 , g \geq 0
\end{array}
\end{equation*}

with variables $g \in \R^m$ and $X = X^T \in \R^{n \times n}_+$. 
Define the dual variables $Z \in \boldsymbol{S}^{n}_+$ , $v \in \R$, for the equality constraints, and $\lambda \in \R^m$ for the non-negativity constraint $g \geq 0$. The Lagrangian is
\[
L(X,g,Z,v,\lambda) = n\Tr X^{-1} - n + \Tr [Z(X - \sum_{l=1}^{m} g_lb_lb_l^T + 1/n\boldsymbol{1}\boldsymbol{1}^T) ] + v(\boldsymbol{1}^T g-1) - \lambda^T g .
\]

Then, the dual function is,
\begin{align*}
h(Z,v,\lambda) &= \underset{ X \succeq 0, g}{\inf} L(X,g,Z,v,\lambda) \\
&= \underset{X \succeq 0}{\inf} \ \Tr [nX^{-1} + ZX] + \underset{g}{\inf} \ \left(\sum_{l=1}^{m} g_l(b_l^T Z b_l + v - \lambda_l)\right) -n -v -(1/n)\boldsymbol{1}^T Z \boldsymbol{1}
\end{align*}
\begin{equation*}
= \left\{ 
\begin{array}{ll@{}ll}
     & -v -(1/n)\boldsymbol{1}^T Z \boldsymbol{1} + 2\Tr(nZ)^{1/2} -n  \ \ \ \ \ &\text{if} \ \ b_l^T Z b_l +v  = \lambda_l \ , \ Z \succeq 0  \\
     &  -\infty & \text{otherwise}
\end{array}
\right.
\end{equation*}

The last line can be explained as follows. The left term $\Tr[nX^{-1} + ZX]$ is unbounded below, unless $Z \succeq 0$ (otherwise we could pick $X_1$ to be eigenvector corresponds to a negative eigenvalue of $Z$ multiplied by arbitrary large constant, and complete $X_{2...n}$ s.t $X$ is PSD). Now, if $Z \succ 0$ the unique $X$ that minimizes it is $X = (Z/n)^{-1/2}$ (this can be easily shown by taking the derivative w.r.t $X$) , hence its optimal value is,
\[
\Tr[nX^{-1} + ZX] = \Tr[n(Z/n)^{1/2} + Z(Z/n)^{-1/2}] = 2\Tr(nZ)^{1/2} .
\]
If $Z$ is PSD but not PD, this is still the optimal value as we can take a sequence of PD matrices $Z_i$ that converges to $Z$ and since this is a linear function of $Z$ over closed set, we get the same optimal value.

Additionally, the right term $\sum_{l=1}^{m} g_l(b_l^T Z b_l + v - \lambda_l)$ is also unbounded from below, unless $\forall l, \ b_l^T Z b_l + v - \lambda_l=0$. Otherwise, take $g=\alpha e_k$, such that, $b_k^T Z b_k + v - \lambda_k = c \neq 0$, leading to $\sum_{l=1}^{m} g_l(b_l^T Z b_l + v - \lambda_l) = \alpha \cdot c$ which is unbounded from below as function of $\alpha$.\\

The Lagrange dual problem is,
\begin{equation*}
    \begin{array}{ll@{}ll}
     & \text{maximize}  \ \ &h(Z,v,\lambda)  \\
     & \text{subject to} &  b_l^T Z b_l \leq v , Z \succeq 0, \lambda>0
\end{array}
\end{equation*}

Using the formula for $h(Z,v,\lambda)$ we derived above, and eliminating $\lambda$ (since it isn't taking part in the objective function), we obtain the following form:
\begin{equation*}
    \begin{array}{ll@{}ll}
     & \text{maximize} \ \ &  -v -(1/n)\boldsymbol{1}^T Z \boldsymbol{1} + 2\Tr(nZ)^{1/2} -n  \\
     & \text{subject to} &  b_l^T Z b_l \leq v , Z \succeq 0
\end{array}
\end{equation*}

This problem is another convex optimization problem with variables $Z , v$. 
It can be verified that, if $X^* = L^* + 1/n \boldsymbol{1}\boldsymbol{1}^T$ (where $L^* = \sum_{l=1}^{m} g_l^* b_l b_l^T$ the optimal weighted Laplacian), is the optimal solution for the primal ERMP, then 
\[
Z^* = n(X^*)^{-2} = n(L^* + 1/n \boldsymbol{1}\boldsymbol{1}^T)^{-2} \ , \ v^* = \underset{l}{\max} \ b_l^T Z^* b_l
\]
is the optimal solution for the dual problem (this is easily implied from our analysis). We can use it to further reduce the dual problem.
For any $g$, we have $(L_g + 1/n \boldsymbol{1}\boldsymbol{1}^T)^{-2} \boldsymbol{1} = \boldsymbol{1}$ (see equation~\eqref{equ_laplac_inv}). Thus, $Z^* \boldsymbol{1} = n\boldsymbol{1}$, meaning the optimal solution satisfies $Z \boldsymbol{1}=n\boldsymbol{1}$. So, we can add this constraint without changing the optimal value (which is $\cK^*$). The obtain dual problem is now,
\begin{equation*}
    \begin{array}{ll@{}ll}
     & \text{maximize}  \ \  & -v -(1/n)\boldsymbol{1}^T Z \boldsymbol{1} + 2\Tr(nZ)^{1/2} -n  \\
     & \text{subject to} &  b_l^T Z b_l \leq v , \\
     & & Z \succeq 0 \ ,\  Z \boldsymbol{1} = n \boldsymbol{1}
\end{array}
\end{equation*}
\\
Changing variables to $Y = Z - \boldsymbol{1}\boldsymbol{1}^T$, leads to the following form,
\begin{equation*}
    \begin{array}{ll@{}ll}
     & \text{maximize} \ \ & -v + 2\Tr(nY)^{1/2}  \\
     & \text{subject to} &  b_l^T Y b_l \leq v , \\
     & & Y \succeq 0 \ ,\  Y \boldsymbol{1} = 0
\end{array}
\end{equation*}
\\
Let $W = Y/v$ (note that from our constraints $v$ is a positive number), and our problem becomes:
\begin{equation*}
    \begin{array}{ll@{}ll}
     & \text{maximize} \ \  &-v + 2(nv)^{1/2}\Tr W^{1/2}  \\
     & \text{subject to} &  b_l^T W b_l \leq 1 , \\
     & & W \succeq 0 \ ,\  W \boldsymbol{1} = 0
\end{array}
\end{equation*}
\\
Now, the maximal value of $f(v) = -v + 2C \cdot v^{1/2}$ for positive constant $C$ is $v = C^2$. Taking $C = n^{1/2}\Tr W^{1/2}$, and the problem becomes
\begin{equation*}
    \begin{array}{ll@{}ll}
     & \text{maximize}  \ \ &n(\Tr W^{1/2})^2  \\
     & \text{subject to} &  b_l^T W b_l \leq 1 , \\
     & & W \succeq 0 \ ,\  W \boldsymbol{1} = 0
\end{array}
\end{equation*}
\\
Finally, changing variables to $V = W^{1/2}$, we can express the dual problem as,
\begin{equation*}
    \begin{array}{ll@{}ll}
     & \text{maximize}  \ \ &n(\Tr V)^2  \\
     & \text{subject to} &  b_l^T V^2 b_l = ||V b_l|| \leq 1 , \\  \numberthis \label{equ_ERMP_dual_prob} 
     & & V \succeq 0 \ ,\  V \boldsymbol{1} = 0
\end{array}
\end{equation*}

\subsection{SDP Formulation for The ERMP}\label{appendix_ermp_sdp}

Shown by \cite{saberi,boyd_convex_2004}, the ERMP can be formulated as an SDP:
\begin{equation*}
\begin{array}{ll@{}ll}
\text{minimize}  & n \Tr Y &\\
\text{subject to}&  g \geq 0 ,\ \boldsymbol{1}^T g=1 &\\
& \begin{bmatrix}
L_g + (1/n)\buni \buni^T & I \\
I & Y
\end{bmatrix}
\succeq 0
\end{array} 
\end{equation*}

Define, the coefficient matrix
\[
C \coloneqq n \begin{bmatrix}
0 & 0 \\
0 & -I
\end{bmatrix}
\]
and the constrains
\[
A \coloneqq \begin{bmatrix}
I & 0 
\end{bmatrix}, 
b \coloneqq \begin{bmatrix}
L_g + (1/n)\buni \buni^T & I 
\end{bmatrix} 
\]
Then the above formulation gets the form:
\begin{equation*}
\begin{array}{ll@{}ll}
\text{maximize}  & C \bullet X &\\
\text{subject to}&  g \geq 0 ,\ \boldsymbol{1}^T g=1 &\\
& A  X = b &\\
& X \succeq 0, X = X^T
\end{array} 
\end{equation*}
which is \cite{Haz08} formulation. 

\subsection{Optimality Criteria for SEV}\label{appendix_LSE_opt_crit}

In accordance to previous analysis, we will derive the optimality criteria for $f(g) = \Tr e^{L_g^+}$, using the familiar consition,
\[
g \ \text{is optimal for } f   \ \iff \nabla f(g)^T(e_l - g) \geq 0  \ \text{, for} \ l=1\dots m
\]

Let's start by computing the gradient of $f(g)$. Using the chain rule we get that:
\[
\frac{\partial f}{\partial g_l} = \Tr[ e^{L_g^+} \frac{\partial L_g^+}{\partial g_l} ]= -\Tr [e^{L_g^+} L_g^+ b_l b_l^T L_g^+]
\]
where we used equation~\eqref{equ_der_L_g_plus} for the gradient of $L_g^+$.\\

Next we want to compute \( \nabla f(g)^T g\). We will use the standard "trick":
\begin{align*}
&f(\alpha g) = \Tr[ e^{L_{\alpha g}^+} ] = \Tr[ e^{L_{g}^+ / \alpha} ] \\
&\implies \frac{\partial }{\partial \alpha}f(\alpha g) = \frac{\partial }{\partial \alpha} \Tr[ e^{L_{g}^+ / \alpha} ] = \Tr[ e^{L_{g}^+ / \alpha} \frac{\partial }{\partial \alpha}(L_{g}^+ / \alpha) ] = \Tr[ e^{L_{g}^+ / \alpha} L_g^+ (-1/\alpha^2) ]
\end{align*}
and setting $\alpha=1$, leads to
\[
\nabla f(g)^T g = \frac{\partial }{\partial \alpha}f(\alpha g) \Big|_{\alpha=1} = -\Tr [ e^{L_{g}^+} L_g^+] .
\]


plug those in the optimality condition, and we get the following form:
\[
-\Tr [e^{L_g^+} L_g^+ b_l b_l^T L_g^+] + \Tr[ e^{L_{g}^+} L_g^+] \geq 0
\]
and using the additive property of trace, we get equation~\eqref{equ_sev_opt_crit}:
\[
\Tr [ e^{L_g^+}L_g^+ (I - b_l b_l^T L_g^+)] \geq 0
\]

\subsection{Figures of Upper LT}

\begin{figure}[ht]
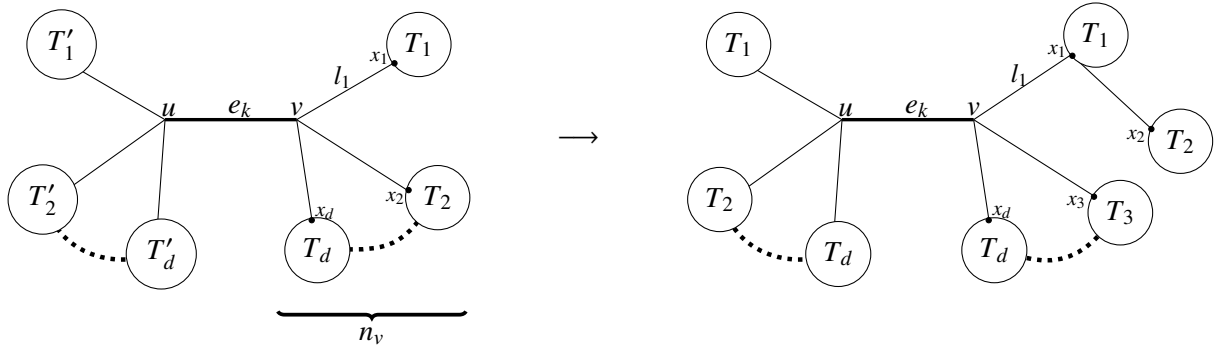

    \centering
    \tikzfig{figs/fig_upper_ET_case1}
    \caption{Description of upper LT for first case }
    \label{fig_upper_ET_case1}
\end{figure}

\begin{figure}[ht]
    \centering
    \tikzfig{figs/fig_upper_ET_case21}
    \caption{Description of upper LT for case 2.1 }
    \label{fig_upper_ET_case2.1}
\end{figure}


\begin{figure}[ht]
    \centering
    \tikzfig{figs/fig_upper_ET_case22}
    \caption{Description of upper LT for case 2.2 }
    \label{fig_upper_ET_case2.2}
\end{figure}

\end{document}